\newcommand{\R}{\mathbb{R}}      
\newcommand{\so}{\mathfrak{so}}
\newcommand{\I}{\mathbb{I}}
\newcommand{\FLc}{\mathbb{F}L_c}
\newcommand{\D}{\mathcal{D}}
\newcommand{\E}{\mathcal{E}}
\newcommand{\F}{\mathcal{F}}
\newcommand{\Pp}{\mathcal{P}}
\newcommand{\Teta}{\mathcal{T}_{d\eta}}
\newcommand{\XHc}{X_\mathrm{\small{nh}}}
\numberwithin{equation}{section}
\numberwithin{table}{section}
\numberwithin{figure}{section}
\newtheorem{thm}{Theorem}[section]
\newtheorem{prop}[thm]{Proposition}
\newtheorem{lemma}[thm]{Lemma}
\newtheoremstyle{obs}
{3pt}
{3pt}
{}
{}
{\bfseries}
{.}
{.5em}
{}
\theoremstyle{obs}
\newtheorem{remark}[thm]{Remark}
\def\qed{\ifvmode\removelastskip\fi
	{\unskip\nobreak\hfil\penalty50\hbox{}\nobreak\hfil \hbox{\vrule
			height1.2ex width1.2ex}\parfillskip=0pt \finalhyphendemerits=0
		\par \smallskip}}
\title{}
\author{Garc\'ia-Naranjo,~L.~C.\footnote{Dipartimento di Matematica ``Tullio Levi-Civita", Universit\`a di Padova,
Via Trieste 63, 35121 Padova, Italy. \texttt{luis.garcianaranjo@math.unipd.it}},\, Marrero,~J.~C.\footnote{ULL-CSIC 
Geometr\'ia Diferencial y Mec\'anica Geom\'etrica, Departamento de Matem\'aticas, Estad\'istica e 
Investigaci\'on Operativa and Instituto de Matem\'aticas y Aplicaciones (IMAULL), Facultad de Ciencias 38071, Universidad de La Laguna, Tenerife, Spain. \texttt{jcmarrer@ull.edu.es}}, \\Mart\'in de Diego,~D.\footnote{Instituto de Ciencias Matem\'aticas (CSIC-UAM-UC3M-UCM), Campus de Cantoblanco, UAM 
C/ Nicolas Cabrera, 15, 
 28049 Madrid (SPAIN). \texttt{david.martin@icmat.es}}\, and Petit Vald\'es,~P.~E.\footnote{Dipartimento di Matematica, Universit\`a di Trento, Via Sommarive 14, Povo-Trento, 38123, Italy. \texttt{pe.petit@unitn.it}}}
\begin{document}

	\title{ Almost-Poisson brackets for nonholonomic systems with gyroscopic terms and Hamiltonisation}

	\maketitle

	\abstract{We extend known constructions of almost-Poisson brackets and their gauge transformations
	to nonholonomic systems whose Lagrangian is not mechanical but possesses a gyroscopic term  linear
	in the velocities. The new 
	feature introduced by such a term 
	is that the Legendre transformation is an affine, instead of linear, bundle isomorphism between the tangent and cotangent
	bundles of the configuration space and  some care is needed  in the development of the geometric formalism. 
	 At the end of the day, the affine nature of the Legendre transform  is reflected in the 
	affine dependence of the brackets that we construct on the momentum variables. Our study is motivated
	by  a wide class of nonholonomic systems involving rigid bodies with internal rotors which are of interest in control.
	Our construction  provides a natural geometric framework for the (known) Hamiltonisations
	of the Suslov and Chaplygin sphere problems with a gyrostat.   }

\vspace{0.5cm}
\noindent
{\em Keywords:} non-holonomic systems, gyroscopic Lagrangian, almost-Poisson brackets, Hamiltonisation, gauge transformations, Suslov problem, Chaplygin sphere. \\
{\em 2020 MSC}:  37J60;  70E55; 70G45, 53Z05.


\section{Introduction}

The geometric study of nonholonomic systems has been an active subject of research in the last decades.
Some early influential references are \cite{Weber2,Koiller92,BatesSniaticky, BKMM, CaLeMaMa, CaCoLeMa}. We refer
the reader to  the books 
 \cite{Neimark,Cortes,Cu-Du,Bloch} for a broad exposition and more extensive bibliography.
 It is now widely accepted that the underlying 
geometric structure of these systems  is an almost-Poisson bracket which, as a consequence
of the non-integrability of the nonholonomic constraints, fails to satisfy the Jacobi identity. The origin of this theory can be traced 
back to a 1951 paper by Eden \cite{Eden51} (see also \cite{LeLaLoMa2023} and references therein). In his paper, Eden presents a local definition of the nonholonomic bracket for observables in the constraint submanifold. For an intrinsic definition and basic geometric properties  of the bracket, see \cite{MaschkevanderSchaft,KoonMarsden, ILMM99,CanLeoMar,GLMM,LMD2010,LeLaLoMa2023}. 
Despite the failure of the Jacobi identity, the almost-Poisson formulation has been useful to investigate
existence of invariant measures \cite{FeGaMa} and has received wide interest in the context of reduction 
\cite{CaLeMaMa,KoonMarsden,GN07,CdLMM,Cu-Du,GN10}, and more 
specifically
in connection with {\em Hamiltonisation} where one attempts to find a true Hamiltonian structure for the reduced dynamics 
\cite{FedoJova,Ramos,Ehlers,BorisovMamaev,Jova,GN10,Bolsinov2015,BaGa,BalFer,Naranjo,BalYap}.


In this paper we focus on the almost-Poisson structure of nonholonomic systems for which the Lagrangian $L: TQ\rightarrow \R$ 
contains a linear term in the velocities in addition to the usual terms coming from the kinetic and potential energies.  In 
bundle coordinates $(q,\dot q)$ for the tangent bundle $TQ$ of the $n$-dimensional configuration manifold $Q$, the 
expression for $L$  is then given by\footnote{We employ the standard sum convention  over repeated indices.} 
\begin{equation}
\label{eq:Lag-intro}
L(q,\dot q)=\frac{1}{2}\tilde g_{ij}(q)\dot q^i\dot q^j + \tilde \eta_j(q)\dot q^j-V(q).
\end{equation}
The first term is the kinetic energy which  is defined  in terms of a
Riemannian metric $g:=\tilde g_{ij}(q)\, dq^i\otimes dq^j$ on $Q$. The third term is the 
 potential energy given by a smooth function $V:Q\to \R$. The linear term in the velocities, which is the
novel aspect in this work, is sometimes referred to as a 
{\em magnetic} or {\em gyroscopic} term, and is determined by a 1-form $\eta:= \tilde \eta_j(q) \, dq^j$. In the absence of the gyroscopic
term, it is common to say that one is working with a {\em mechanical Lagrangian}.

We assume that the mechanical system with Lagrangian  \eqref{eq:Lag-intro} is under the influence of $n-r$ nonholonomic linear constraints
\begin{equation}
\label{eq:const-intro}
A_j^\alpha(q)\dot q^j=0,  \qquad \alpha=r+1,\dots, n,
\end{equation}
which are  independent and determine a regular non-integrable distribution $\D$ of
rank $r\geq 2$ on $Q$.

As for unconstrained systems,  the gyroscopic term in the Lagrangian 
manifests itself in the equations of motion only through the exterior differential $d\eta$. 
Hence, if $\eta$ is a closed 1-form, the magnetic term  has absolutely no influence in the dynamics and can be disregarded. 
On the other hand, if  $\eta$ is non-closed, it will typically destroy the reversibility of the dynamics.

%
%

The geometry of the  systems described above has been greatly overlooked with most references usually
focusing on the case of mechanical Lagrangians. An  exception  is the recent work by Dragovi\'c, Gaji\'c and Jovanovi\'c \cite{DragovicGajicJova2023},
which concerns the study of these systems in the presence of additional symmetries of a very specific type (the so-called Chaplygin 
systems) giving attention to their Hamiltonisation.   

We begin by motivating our study with  a brief subsection which explains how 
 gyroscopic nonholonomic systems arise  through a mechanism which is very similar to the classical  Routh reduction method 
for unconstrained systems. A second motivation (for which we give no details)
 is the development of an almost-Poisson bracket formulation of a remarkable 
class of affine nonholonomic systems possessing an energy-like integral \cite{FassoSanso,FaGaSa,BMB2015} that we plan to develop in a separate
paper \cite{Futurepaper}.


\subsection{Motivation: nonholonomic Routh reduction}
\label{SS:Routh-intro}

Consider a nonholonomic system of mechanical type on a configuration manifold $M=Q\times K$, which is the product
of the $n$-dimensional  manifold $Q$ and an abelian Lie group $K$. To fix ideas, suppose  that $K$ is the $l$-torus
$\mathbb{T}^l=(S^1)^l$. Let  $(q^j,\theta^J)\in Q\times K$  be  coordinates on $M$ and assume that  $\theta^J$ are cyclic variables of 
   a mechanical Lagrangian $\mathcal{L}:TM\to \R$, namely,
\begin{equation*}
\mathcal{L}=\mathcal{L}(q,\dot q,\dot \theta )= \frac{1}{2}\mathcal{G}_{ij}(q)\dot q^i\dot q^j
+ \mathcal{G}_{iJ}(q)\dot q^i\dot \theta^J+\frac{1}{2}\mathcal{G}_{IJ}(q)\dot \theta^I\dot \theta^J - \tilde V(q).
\end{equation*}
Suppose, in addition, 
that the nonholonomic constraints are independent of $\theta$ and $\dot \theta$ and  are given in terms of
$q$ and $\dot q$ by \eqref{eq:const-intro}. 

The above setup is encountered, for instance,
 when rigid bodies, containing   $l$ internal rotors, are subjected to nonholonomic constraints. In such
case, the coordinates $q^j$ describe  the position and orientation of the rigid bodies whereas the angles 
$\theta^J$ determine the 
orientation of the rotors (relative to axes that are fixed in the corresponding body frames). 
The independence of the nonholonomic constraints \eqref{eq:const-intro} on   $\theta$ and $\dot \theta$
corresponds to our assumption that the rotors lie  inside of the bodies and their motion is not directly affected by the constraints.
Some concrete examples of this framework are the snakeboard, the unicycle with rotor and the Chaplygin sleigh with
rotor (see e.g. \cite{Bloch}).  Other examples involve bodies containing a gyrostat which roll without slipping 
 on surfaces, or the gyrostat generalisations of the classical Suslov and
Veselova problems.

In accordance with the Lagrange-D'Alembert principle, the equations of motion of the system  take the form
\begin{equation}
\label{eq:motion-intro-Routh}
\begin{split}
\frac{d}{dt}\left ( \frac{\partial \mathcal{L}}{\partial \dot q^j }\right ) -  \frac{\partial \mathcal{L}}{\partial q^j } &= \lambda_\alpha A^\alpha_j(q), \qquad   j=1,\dots, n, \\ 
\frac{d}{dt}\left ( \frac{\partial \mathcal{L}}{\partial \dot \theta^J }\right ) &=0, \qquad   J=1,\dots, l,
\end{split}
\end{equation}
where the multipliers $\lambda_\alpha$  are uniquely determined by  the constraints  \eqref{eq:const-intro}. It is therefore clear that
the momenta
\begin{equation*} 
p_J :=  \frac{\partial \mathcal{L}}{\partial \dot \theta^J }= \mathcal{G}_{iJ}(q)\dot q^i+\mathcal{G}_{IJ}(q)\dot \theta^I, \qquad
J=1,\dots, l,
\end{equation*}
are first integrals.  Let us consider the restriction of the system  \eqref{eq:motion-intro-Routh} to a level set of these integrals by
setting $p_J=\mu_J$ for some fixed $\mu=(\mu_1,\dots, \mu_l)\in \R^l$. Along such level set we may express\footnote{As usual,
we denote by $\mathcal{G}^{IJ}$  the entries of the inverse matrix of the block $\mathcal{G}_{IJ}$.}
\begin{equation}
\label{eq:dot-theta-Routh}
\dot \theta^J=\mathcal{G}^{IJ}(q)(\mu_I-\mathcal{G}_{iI}(q)\dot q^i),\qquad J=1,\dots, l,
\end{equation}
and we may therefore eliminate $\dot \theta$ from the first set of equations in \eqref{eq:motion-intro-Routh} to obtain a reduced system
involving $q$ and $\dot q$ only. As is well known (see e.g. \cite{MaRa1}), such elimination is conveniently done in terms of the
classical Routhian function $R^\mu=R^\mu(q,\dot q)$ defined by 
\begin{equation*}
R^\mu(q,\dot q):=\left [ \mathcal{L}(q,\dot q,\dot \theta ) - \mu_J\dot \theta ^J \right ]_{p_J=\mu_J},
\end{equation*}
with the convention that in the right hand side  $\dot \theta$ is written in terms of $(q,\dot q)$ as in \eqref{eq:dot-theta-Routh}.
The remarkable property  of the Routhian  is that 
\begin{equation*}
  \frac{d}{dt}\left ( \frac{\partial R^\mu }{\partial \dot q^j }\right ) -  \frac{\partial R^\mu}{\partial q^j } = \left [ \frac{d}{dt}\left ( \frac{\partial \mathcal{L}}{\partial \dot q^j }\right ) -  \frac{\partial \mathcal{L}}{\partial q^j } \right ]_{p_J=\mu_J}, \qquad   j=1,\dots, n,
\end{equation*}
where we again think of $\dot \theta=\dot \theta(q,\dot q)$ in the right hand side.
Therefore, if we write $L(q,\dot q):=R^\mu(q,\dot q)$, the reduced system can be written as
\begin{equation}
\label{eq:motion-intro-Routh-reduced}
\begin{split}
\frac{d}{dt}\left ( \frac{\partial L }{\partial \dot q^j }\right ) -  \frac{\partial L}{\partial q^j } &= \lambda_\alpha A^\alpha_j(q), \qquad   j=1,\dots, n,
\end{split}
\end{equation}
together with the nonholonomic constraints   \eqref{eq:const-intro}.

Now note that  \eqref{eq:motion-intro-Routh-reduced}  can be interpreted as the 
equations of motion of a nonholonomic system on the configuration manifold $Q$, with constraints given by  \eqref{eq:const-intro} and
with Lagrangian $L=R^\mu:TQ\to \R$. Our point is that $L$ will {\em in general have a gyroscopic term}. Indeed, as shown in e.g. Section 8.9 in \cite{MaRa1}, 
$L=R^\mu$ is given by \eqref{eq:Lag-intro} with
\begin{equation}
\label{eq:Routhian-terms}
\begin{split}
\tilde g_{ij}=\mathcal{G}_{ij} - \mathcal{G}_{Ii}\mathcal{G}^{IJ}\mathcal{G}_{Jj}, \qquad  \tilde \eta_j=\mathcal{G}_{Ij}\mathcal{G}^{IJ}\mu_J, \qquad
V= \tilde V + \frac{1}{2}\mathcal{G}^{IJ}\mu_I\mu_J.
\end{split}
\end{equation}

The class of examples presented above is especially relevant in the context of control
theory  since it is natural to add controls to the rotors to stabilise or accomplish desired motions of the bodies, see e.g. \cite{WaKr,BoKiMa,GaBa,Ohsawa,GaTo} and references therein.



\subsection{The almost-Poisson formulation}
\label{SS:Almost-Poisson-intro}

We  now present a brief  coordinate description  of how the equations of motion
of the mechanical system with configuration space $Q$, Lagrangian $L$ given by \eqref{eq:Lag-intro}, 
and nonholonomic constraints \eqref{eq:const-intro},
can be formulated in terms of almost-Poisson brackets.

Our approach relies on the choice of vector frames which define quasi-velocities that are conveniently adapted to 
the constraints. The usefulness of this procedure has been by now appreciated by different researchers in nonholonomic mechanics
\cite{Koiller92,Ehlers,CdLMM,BoMaZe,GLMM,LMD2010,BoKiMa2}.   

Let $\{e_a\}$ be a  basis of sections of the distribution
$\D$ defined by the constraints  \eqref{eq:const-intro}, that is, $e_a\in{\mathfrak X}(Q)$ with $e_a(q)\in \D_q$ for all $q\in Q$, and let $\{e_\alpha\}$ be a basis of 
sections of $\D^\perp$ where
$\perp$ means perpendicular with respect to the kinetic energy metric.\footnote{The index convention followed throughout this work is that latin indices $a,b,c$   run from $1$ to $r=\mbox{rank}(\D)$
whereas  greek indices $\alpha,\beta, \gamma$  run from $r+1$ to $n$. The indices $i,j,k$ run from the whole range $1$ to $n=\dim(Q)$.}
Such bases may, of course, only exist locally. A tangent vector $v_q\in T_qQ$ is written uniquely as a linear combination
\begin{equation*}
v_q =v^a e_a(q) +v^\alpha e_\alpha(q),
\end{equation*}
and hence $(q^i,v^a,v^{\alpha})$ can be used as coordinates for $TQ$. These coefficients $v^a$ and $v^\alpha$ are termed
{\em quasi-velocities}.
The constraint space $\D$ may then be interpreted as the submanifold of $TQ$ determined 
by the conditions $v^{\alpha}=0$ and has
 coordinates $(q^i,v^a)$. As will be explained in Subsection \ref{sec:lagra-constrained}, the equations of motion \eqref{eq:motion-intro-Routh-reduced} are equivalent
 to the following  equations for the evolution of  $(q^i(t),v^a(t))$, which do not involve the multipliers $\lambda_\alpha$: 
\begin{equation}
\label{eq:motion-intro}
\begin{split}
\\ & \frac{d}{dt}\left ( \frac{\partial L_c}{\partial v^a}  \right )-\rho^i_a(q) \frac{\partial L_c}{\partial q^i}+C_{ab}^d(q)v^b \frac{\partial L_c}{\partial v^d}=-C_{ab}^\alpha(q)v^b \eta_\alpha(q), \qquad a=1,\dots, r, \\
 &\frac{d q^i}{dt} = \rho^i_a(q)v^a, \qquad i=1,\dots, n.
\end{split}
\end{equation}
In the above equations $L_c:=\left . L \right |_\D :\D\to \R$ is the {\em constrained Lagrangian} expressed as a function of the  coordinates $(q^i,v^a)$ for $\D$. The remaining coefficients appearing in the equations
are determined by the conditions
\begin{equation*}
\begin{split}
& e_a(q)=\rho^i_a(q) \frac{\partial}{\partial q^i}, \qquad a=1,\dots, r, \\
& e_\alpha(q)=\rho^i_\alpha(q) \frac{\partial}{\partial q^i}, \qquad \alpha=r+1,\dots, n, \\
&\eta_\alpha(q):=\rho^i_\alpha(q)\tilde \eta_i(q), \qquad \alpha=r+1,\dots, n, \\
& [e_a,e_b] (q)=C_{ab}^j(q)e_j(q)= C_{ab}^d(q)e_d(q)+ C_{ab}^\alpha(q)e_\alpha(q), \qquad a,b=1,\dots, r, 
 \end{split}
\end{equation*}
where $ [\cdot , \cdot]$ is the Lie bracket of vector fields. An important observation is
 that the skew-symmetry of $ [\cdot , \cdot]$ implies 
that $C_{ab}^j(q)$ is antisymmetric with respect to the lower indices, i.e. $C_{ab}^j(q)=-C_{ba}^j(q)$.

To obtain an almost-Poisson formulation of the equations of motion \eqref{eq:motion-intro} we consider the
 {\em constrained Legendre 
transformation} and {\em constrained Hamiltonian}
\begin{equation*}
p_a:=\frac{\partial L_c}{\partial v^a}, \qquad H_c:=H_c(q^i,p_a)=p_av^a-L_c.
\end{equation*}
As usual, the regularity of the Lagrangian $L$ guarantees that the quasi-velocities $v^a$ may be uniquely expressed 
in terms of $(q^i,p_b)$ which explains why we can use $(q^i,p_a)$ as coordinates of our phase space and why
 the constrained Hamiltonian $H_c$ may be considered as a function of $(q^i,p_a)$. 
The standard chain rule calculations of analytical mechanics show that 
\begin{equation*}
 \frac{\partial H_c}{\partial q^i}=- \frac{\partial L_c}{\partial q^i}, \qquad  \frac{\partial H_c}{\partial p_a}=v^a,
\end{equation*}
and, therefore, the equations of motion \eqref{eq:motion-intro} can be written as the following system of 
equations for the evolution of  $(q^i(t),p_a(t))$
\begin{equation}
\label{eq:motion-intro-Ham}
\begin{split}
 &\frac{d q^i}{dt} = \rho^i_a(q)\frac{\partial H_c}{\partial p_a}, \qquad i=1,\dots, n,\\
 &\frac{d p_a}{dt} = -\rho^i_a(q) \frac{\partial H_c}{\partial q^i}-C_{ab}^d(q)p_d \frac{\partial H_c}{\partial p_b} -C_{ab}^\alpha(q) \eta_\alpha(q)\frac{\partial H_c}{\partial p_b}, \qquad a=1,\dots, r.
\end{split}
\end{equation}
Given that the  partial derivatives of the constrained  Hamiltonian $H_c$ appear linearly in the right hand side, we may
write \eqref{eq:motion-intro-Ham} in matrix form:\footnote{When we use the abbreviation $(q,p)$ one should keep in 
mind that the index $i$ of $q$ runs from $1$ to $n$ whereas the range of the index $a$ of $p$ is only $1\leq a \leq r=\mbox{rank}(\D)$.}
\begin{equation}
\label{eq:Pi-matrix}
\frac{d}{dt}
\begin{pmatrix}
 q \\\vspace{-0.3cm}  \\ p 
\end{pmatrix}
= \Pi(q,p)\begin{pmatrix}
\frac{\partial H_c}{\partial q}  \\\vspace{-0.3cm}  \\\frac{\partial H_c}{\partial p}  
\end{pmatrix}, \qquad   \Pi(q,p):=\begin{pmatrix}
0_{n\times n} & \rho(q)  \\  -\rho(q)^T & \mathcal{C}(q,p)+\mathcal{A}(q)  
\end{pmatrix},
\end{equation}
 where the block matrix $ \rho(q)$ is $n\times r$ and has entries  $ \rho(q)_{ia}= \rho_a^{i}(q)$ and $\mathcal{C}(q,p)$ and 
 $\mathcal{A}(q)$ are antisymmetric $r\times r$ block matrices with entries
 \begin{equation*}
\mathcal{C}(q,p)_{ab}=-C_{ab}^d(q)p_d, \qquad \mathcal{A}(q)_{ab}=-C_{ab}^\alpha(q) \eta_\alpha(q).
\end{equation*}

Now define the {\em  nonholonomic bracket} of functions $F_1, F_2$ of the variables $(q,p)$ by
\begin{equation}
\label{eq:defBrack-Intro}
\{F_1,F_2\}(q,p):=\displaystyle\begin{pmatrix}
\frac{\partial F_1}{\partial q}^T  & \frac{\partial F_1}{\partial p}^T  \end{pmatrix} \Pi(q,p)\begin{pmatrix}
\frac{\partial F_2}{\partial q}  \\\vspace{-0.3cm} \\\frac{\partial F_2}{\partial p}  
\end{pmatrix}.
\end{equation}
In particular, the brackets of the coordinate functions are
\begin{equation*}
\{q^i,q^j\}=0,\qquad \{q^i,p_a\}=\rho^i_a, \qquad \{p_a,p_b\}=-C_{ab}^dp_d-C_{ab}^\alpha \eta_\alpha.
\end{equation*}

It is clear that the nonholonomic bracket  is affine in the variables $p_a$ and satisfies Leibniz rule. It is also skew-symmetric since the 
matrix  $ \Pi(q,p)$ is antisymmetric. 
We may therefore write \eqref{eq:motion-intro-Ham} in
{\em almost-Hamiltonian} form:
\begin{equation*}
\label{eq:almost-Ham-Intro}
\frac{dq^i}{dt}=\{q^i,H_c\}, \quad i=1,\dots, n, \qquad \frac{dp_a}{dt}=\{p_a,H_c\}, \quad a=1,\dots, r.
\end{equation*}

An interesting  observation, which is useful in the treatment of examples and we develop in this work, 
 is that, for the purposes of describing the dynamics,  the block matrix $\Pi(q,p)$ 
in \eqref{eq:Pi-matrix} can be instead taken as  
\begin{equation}
\label{eq:IntroPiB}
\Pi^\mathcal{B}(q,p):=\begin{pmatrix}
0_{n\times n} & \rho(q)  \\  -\rho(q)^T & \mathcal{C}(q,p)+\mathcal{A}(q) +   \mathcal{B}(q,p)
\end{pmatrix},
\end{equation}
 as long as the (antisymmetric $r\times r$) block $\mathcal{B}(q,p)$ satisfies 
 \begin{equation}
 \label{eq:Bgradhiszero}
 \mathcal{B}(q,p)\frac{\partial H_c}{\partial p}=0.
\end{equation}
In order to preserve the structure,  suppose that the block $\mathcal{B}(q,p)$ 
has an affine  dependence in $p$ (which is something that  will always hold in our construction, see Remark \ref{rmk:affinegauge}).
The point is that a modified nonholonomic  bracket of functions may be defined replacing
$\Pi(q,p)$ by $\Pi^\mathcal{B}(q,p)$ in   \eqref{eq:defBrack-Intro}, and such modified bracket will also 
describe the dynamics and will retain the geometric properties. This kind of bracket modifications go back to \cite{LGN-thesis, GN10} and were 
coined {\em gauge transformations}
in \cite{BaGa}. For mechanical Lagrangians, and under appropriate hypothesis, gauge transformations are known to be necessary to produce Casimirs  \cite{GNMo18} and are essential to study Hamiltonisation in the Poisson framework \cite{GN10,BaGa,Bal2014,Bal2017,BalYap}.

\subsection{Summary of results}

This work gives intrinsic geometric constructions of the nonholonomic  bracket 
and  its gauge transformations described above and clarifies their main geometric properties. In our approach, the  
brackets are defined on the dual bundle $\D^*$ which we identify with the annihilator $(\D^\perp)^\circ\subseteq T^*Q$ and interpret as the phase space
manifold for the almost-Hamiltonian description of the dynamics. The variables $p_a$, introduced in  Subsection \ref{SS:Almost-Poisson-intro},
 are linear coordinates on 
the fibres of $\D^*$, whereas, as indicated above,
 $q^i$ are coordinates on the base configuration manifold $Q$. The new feature
introduced by the gyroscopic 1-form $\eta$ is that the Legendre transformation is an {\em affine}, instead of  
 linear, bundle map between $TQ$ and $T^*Q$. This property ultimately results in the  affine nature of the brackets which
can be appreciated in the  dependence on $p$ of the bottom right block of the matrix $\Pi(q,p)$ in \eqref{eq:Pi-matrix},
and similarly for the matrix $\Pi^\mathcal{B}(q,p)$. 

The nonholonomic   bracket that we build  is in fact a particular instance of the one given by Maschke and van der Schaft \cite{MaschkevanderSchaft}
 and whose intrinsic nature was identified by Marle \cite{Marle98} and further clarified by 
  Ibort et al \cite{ILMM99} (see also \cite{CanLeoMar,LeLaLoMa2023}). The novelty of our work with 
respect to \cite{Marle98,CanLeoMar,ILMM99} is to give a more elementary intrinsic construction 
that involves only metric projections (with respect to the kinetic energy metric) on subbundles of $TQ$  
instead of the more sophisticated symplectic projections on subbundles of 
$T(T^*Q)$ used in  \cite{Marle98,CanLeoMar,ILMM99}. This approach had been followed before in the case of mechanical Lagrangians \cite{GLMM,LMD2010} (see also \cite{LeLaLoMa2023}) but, 
to the best of our knowledge,  is missing for Lagrangians with gyroscopic terms. On the other hand, the
development of gauge transformations in our setting seems to be completely 
 new since previous constructions  \cite{BaGa,Bal2014,Bal2017,GNMo18,BalYap}
  focus
 exclusively
in the case of mechanical Lagrangians. Our definition  of 
gauge transformations is inspired by \cite{GNMo18} where the bracket modification  is constructed in terms of 
a   $3$-form $\Lambda$ on $Q$. Among other things, this approach provides  a convenient way to ensure that the condition \eqref{eq:Bgradhiszero} 
 is satisfied  by exploiting the alternating skew-symmetry of $\Lambda$.

Our work also considers the  reduction  in the presence of symmetries in the context of the brackets that we construct.
More precisely, we assume that  that a  Lie group $G$ acts freely and properly on  $Q$, and its tangent lift preserves $L$ and $\D$.
Under this hypothesis, both the dynamics and the brackets drop to the quotient space  $\D^*/G$ 
and the reduced dynamics can be described in terms of the reduced brackets (an additional  invariance condition
on the underlying 3-form $\Lambda$ is needed for the reduction of the gauged transformed brackets). The interest of this
procedure is that, as occurs in certain examples, a particular choice of $\Lambda$ may lead to a reduced
bracket which  actually satisfies the Jacobi identity (perhaps after  multiplication by a conformal factor), thereby giving
a {\em true} Hamiltonian description of the reduced dynamics.  As mentioned above, this
phenomenon is known as Hamiltonisation. 

We exemplify our construction by carefully treating the Suslov and Chaplygin sphere problems with a gyrostat within our framework.
We show  that Hamiltonisation is achieved in both examples by proceeding as above.  Even though the 
Hamiltonisation of these systems
was known (see \cite{BorMam2008} for the Chaplygin sphere), our approach clarifies 
their origin within a general reduction theory. 
Of both examples, the treatment of the Chaplygin sphere with gyrostat is more interesting for two reasons. On the one hand, 
 a gauge transformation 
is needed to make a certain first integral, which is  affine in the velocities, into a Casimir of the reduced bracket.
This  suggests that some constructions in \cite{GNMo18} admit a generalisation to case of gyroscopic Lagrangians.
We do not pursue this generalisation here since it requires a development of the theory of first integrals of nonholonomic systems
 with gyroscopic Lagrangians which is beyond our scope. On the other hand, 
the  reduced bracket  which Hamiltonises the problem has  rank 4. Therefore, in contrast with the scenario  encountered for rank 2 Poisson  brackets \cite{FassoGiaSan}, it is unlikely that its existence can be proved using purely dynamical arguments,
and a geometric approach like ours seems unavoidable.

\subsection{Structure of the paper}

We begin by presenting the equations of motion  in both 
 the Lagrangian (Section \ref{sec:Lag}) and Hamiltonian (Section \ref{sec:Ham}) formalisms.
  In both cases, we start by reviewing  the unconstrained case to provide the groundwork
  for the presentation, and give intrinsic and coordinate descriptions of the vector fields which describe the 
  nonholonomic dynamics. Moreover, in Section \ref{sec:Ham} we clarify 
  the geometric relevance of the gyroscopic term in the Legendre transformation.  Our main theoretical contributions
  are given in Section \ref{sec-almost} where we present the intrinsic construction of the nonholonomic bracket,
  its gauge transformations, and discuss their main properties.
  We then consider symmetries and reduction in Section \ref{s:Symmetry}.   Our presentation in this section is 
   somewhat brief and synthetic since the presence of the gyroscopic $1$-form $\eta$ does not change
  things significantly and   such considerations
 are abundant in the literature for mechanical Lagrangians   (see e.g. Chapter 3 in \cite{Cu-Du}). The examples 
 are treated in Sections \ref{S:examples-description}
 and \ref{S:examples-bracks}. We first introduce the systems and give their reduced equations of motion and (known)
 Hamiltonisations in Section \ref{S:examples-description}, and then prove  that such Hamiltonisations
  may be understood within our geometric framework in Section \ref{S:examples-bracks}.
 
We finally mention that some of the material in this paper was part of the Phd thesis of the fourth author \cite{Petit-thesis} and some
overlaps  may be present.

\section{Lagrangian formalism}
\label{sec:Lag}

We begin our discussion from the Lagrangian viewpoint of mechanics on which the
D'Alembert's principle for the reaction forces is more easily formulated. 
Let $Q$ be an $n$-dimensional  configuration manifold. 
We consider the Lagrangian  $L: TQ\rightarrow {\mathbb R}$ defined in terms of the following three ingredients:
%
%
%
\begin{itemize}
\item a  Riemannian metric $g$  on  $Q$, which defines the kinetic energy ${\mathcal K}_g: TQ \to {\mathbb R}$ on $TQ$ 
given by ${\mathcal K}_g(v_q)=\frac{1}{2} g_q(v_q, v_q)$;
%
\item  a  $1$-form $\eta$  on $Q$, that we'll refer to as the {\em gyroscopic 1-form}, which defines a  fiberwise linear function, $\eta^{\ell}: TQ\rightarrow {\mathbb R}$  by
$\eta^{\ell}(v_q)=\langle \eta(q), v_q\rangle$ where $v_q\in T_qQ$;
\item a potential energy function $V\in C^{\infty}(Q)$.
\end{itemize}
The Lagrangian  $L: TQ\rightarrow {\mathbb R}$ is then given by
\begin{equation}\label{def:lagrangian}
L={\mathcal K}_g + \eta^{\ell} - V\circ \tau_Q,
\end{equation}
where $\tau_Q: TQ\rightarrow Q$ is the tangent bundle projection. We will write $L=L_{mech}+ \eta^{\ell}$ where $L_{mech}:={\mathcal K}_g - V\circ \tau_Q$ is the {\em mechanical Lagrangian}.

\subsection{The free case}

We rapidly recall  aspects of the Lagrangian system on $TQ$ in the absence of constraints to introduce the necessary notation.

%
%
\subsubsection*{Local expressions}
Let  $(q^i)$ be local coordinates on  an open subset $U$ of $Q$. The kinetic energy $g$ and 1-form $\eta$ have local expressions
$g=\tilde g_{ij}(q)\, dq^i\otimes dq^j$, and $\eta=\tilde \eta_i(q)\, dq^i$, and thus
\[
L(q^i, \dot q^i)=\frac{1}{2}\tilde g_{ij}(q)\dot q^i\dot q^j+\tilde \eta_i(q)\dot q^i-V(q).
\]
The equations of motion are the well-known Euler-Lagrange equations 
\begin{equation}
\label{eq:EL-bundle}
\dot q^i=\frac{dq^i}{dt}, \qquad \frac{d}{dt}\left ( \frac{\partial L}{\partial \dot q^i} \right ) - \frac{\partial L}{\partial q^i}=0, \quad i=	1,\dots, n.
\end{equation}

Looking ahead at the nonholonomic case, we give local expressions for $L$ and the equations of motion
 in terms of  quasivelocities 
which are coordinates for the fibers of $TQ$ relative to a configuration-dependent frame.
 Let $\{e_i\}$ be such
 local frame; that is,  $e_i$, $i=1,\dots, n$, are
linearly independent vector fields in the neighbourhood $U$.
Any tangent vector $v_q\in T_qQ$, with $q\in U$ may be uniquely written as $v_q=v^ie_i(q)$, for certain {\em quasivelocity}
scalars $v^i$ . 
Then $(q^i,v^i)$ are local coordinates for $TQ$ which generalise the
 bundle coordinates  $(q^i,\dot q^i)$.
The frame vector fields $e_i$ may be uniquely written as 
\begin{equation}
\label{eq:rhodef}
e_i(q)=\rho^j_i(q)\frac{\partial}{\partial q^j}, \qquad 1\leq i\leq n, \quad \forall q\in U,
\end{equation}
where $\rho^j_i\in C^\infty (U)$ are the entries of an invertible matrix $(\rho^j_i(q))$.
  Associated to the  frame $\{e_i\}$ there is a  dual co-frame $\{e^i\}$ determined by the conditions $e^{i} \in \Omega^1(U)$ and $\langle e^i, e_j\rangle=\delta^i_j$. Then $dq^i=\rho^i_k(q)e^k(q)$, and the kinetic energy metric $g$ and the 1-form $\eta$ write as
\[
g=g_{ij}\, e^i\otimes e^j\; ,\qquad \eta=\eta_i  \, e^i,
\]
where we have omitted the $q$ dependence (as we will often do below)  and we have summarised
\begin{equation*}
g_{ij}:=\rho_i^k\rho_j^l \tilde g_{kl}, \qquad \eta_i:=\rho_i^k\tilde \eta_k.
\end{equation*}
It follows that 
\begin{equation}
\label{eq:Lag}
L(q, v)=\frac{1}{2}g_{ij}v^iv^j+\eta_i v^i-V(q).
\end{equation}
It is an exercise in the chain rule to show that the Euler-Lagrange equations in bundle coordinates \eqref{eq:EL-bundle} are equivalent
to 
 \begin{equation}
\label{euler-lagrange}
 \frac{dq^i}{dt}=\rho^i_j v^j, \qquad 
 \frac{d}{dt}\left(  \frac{\partial L}{\partial v^i}\right)
= \rho^j_i  \frac{\partial L}{\partial q^j} -
C^k_{ij} v^j  \frac{\partial L}{\partial v^k}, \qquad i=1,\dots,n,
\end{equation}
where $C^k_{ij}\in C^{\infty}(U)$ are the {\em  local structure functions}, determined in terms of 
the  Lie brackets of the frame vector fields by
\begin{equation*}
 [e_i, e_j](q)=C^k_{ij}(q) e_k(q).
\end{equation*}
The reader can also find Eqs (\ref{euler-lagrange}) in \cite{LeMaMa}.

%

\subsubsection*{The Euler-Lagrange equations as the trajectories of a second order vector field}

The regularity of the Lagrangian (i.e. the invertibility of the Hessian matrix $\frac{\partial^2 L}{\partial v^i \partial v^j}=g_{ij}$) implies
that  the Euler-Lagrange equations determine the integral curves of  a vector field  $\Gamma_L \in \mathfrak{X}(TQ)$. Using  (\ref{euler-lagrange}) it
is easily shown that locally 
\begin{equation}
\label{eq:EL-local1}
\Gamma_L=\rho^i_j v^j\frac{\partial}{\partial q^i}
+ g^{ij}\left(\rho^k_j \frac{\partial L}{\partial q^k} 
-\rho^k_lv^l\frac{\partial^2 L}{\partial q^k\partial v^j}
-
C^k_{jl} v^l\frac{\partial L}{\partial v^k}\right) \frac{\partial}{\partial v^i},
\end{equation}
where $g^{ij}$ are the entries of the matrix inverse of $g_{ij}$. 
%
%
The vector field   $\Gamma_L$ is {\em second order}. Namely, it satisfies:
\[
(T_{v_q}\tau_Q)(\Gamma_L(v_q)) = v_q, \mbox{ for } v_q \in T_qQ.
\]
As a consequence, the integral curves of $\Gamma_L$ are the tangent lifts of curves on $Q$. 

\subsubsection*{Effect of the gyroscopic 1-form}
Write  $\Gamma_L=\Gamma_{L_{mech}}-\Gamma_\eta,$ where $\Gamma_{L_{mech}}\in {\mathfrak X}(TQ)$ is the vector field corresponding to 
 $L_{mech}$
and $\Gamma_\eta$ is the {\em vector gyroscopic force}.
Using \eqref{eq:Lag} and \eqref{eq:EL-local1} one derives the local expressions:
\begin{eqnarray*}
\Gamma_{L_{mech}}
&=&\rho^i_j v^j\frac{\partial}{\partial q^i}
+ g^{ij} \left (\left(\frac{1}{2}\rho^k_j \frac{\partial g_{lm}}{\partial q^k} 
-\rho^k_l\frac{\partial g_{jm}}{\partial q^k}
- C^k_{jl} g_{km} \right ) v^lv^m-  \rho^k_j\frac{\partial V}{\partial q^k} \right)\frac{\partial}{\partial v^i},
\\
 \Gamma_\eta&=&g^{ij}v^k\left( \rho^l_k \frac{\partial \eta_i}{\partial q^l}-\rho^l_i\frac{\partial \eta_k}{\partial q^l}-\eta_lC^l_{ki}\right)\frac{\partial}{\partial v^j}.
\end{eqnarray*}
As is well-known,  $\Gamma_{L_{mech}}$ is intrinsically given by $\Gamma_{L_{mech}}=\Gamma_g- (\hbox{grad}_g V)^{\mathbf V}$,
where  $\Gamma_g\in {\mathfrak X}(TQ)$ is the geodesic flow of the metric $g$, and $ (\hbox{grad}_g V)^{\mathbf V}$ is the {\em 
potential vector force} given by the vertical
lift of the $g$-gradient of the 1-form $dV$, that is\footnote{We employ the standard {\em musical} notation for the  bundle isomorphisms induced by the metric $g$. Namely 
$\flat_g: TQ\rightarrow T^*Q$ is defined by 
$\langle \flat_g (v_q), w_q\rangle=g_q(v_q,w_q)$, for all $v_q, w_q\in T_qQ$ and its inverse is  $\sharp_g=(\flat_g)^{-1}: T^*Q\rightarrow TQ$.}
\begin{equation*}
\hbox{grad}_g V=\sharp_g(dV), \qquad  (\hbox{grad}_g V)^{\mathbf V}(u_q)=\frac{d}{dt}\Big|_{t=0} 
   \left(u_q + t (\hbox{grad}_g V)(q)\right),\quad \forall u_q\in T_qQ.
\end{equation*}
On the other hand, an intrinsic interpretation of the 
  gyroscopic vector force $\Gamma_\eta$ may be given by considering the  $(1,1)$-tensor field $\Teta$ 
  on $Q$ defined in terms of $d\eta$ and the metric $g$ by
   $$
   \Teta(u_q)=\sharp_g(i_{u_q}d\eta(q))\;,\quad u_q\in T_qQ. 
   $$
 We have $\Gamma_\eta=(\Teta)^{\mathbf V}$ where, as usual,
  $$
  (\Teta)^{\mathbf V}(u_q)=\frac{d}{dt}\Big|_{t=0} 
   \left(u_q+t \Teta(u_q)\right),\qquad \forall u_q\in T_qQ\; .
   $$
This description of $\Gamma_\eta$  shows that a closed gyroscopic 1-form has no effect  
on the equations of motion. In such a  case  $L$ and $L_{mech}$ determine the same 
equations of motion and are called equivalent Lagrangians (see \cite{Krupkova}).

\subsubsection*{Energy conservation}

 The {\em energy function} $E_L\in C^\infty (TQ)$ is intrinsically defined as
$
E_L=\Delta L-L
$,
where $\Delta\in {\mathfrak X}(TQ)$ is the Liouville vector field given by
\[
\Delta(u)=\frac{d}{dt}\Big|_{t=0}(u+tu)\in T_uTQ,
\]
or, in coordinates,
$\Delta=v^i\frac{\partial}{\partial v^i}$. 
Interestingly,  the gyroscopic 1-form $\eta$ does not contribute to the energy, which is the sum of the kinetic and potential energies.  Indeed, as one easily computes
$E_L=E_{L_{mech}}={\mathcal K}_g+V\circ\tau_Q$, or, in 
 coordinates, 
\begin{equation}\label{Lagrangian-energy-local}
E_L(q, v)=\frac{1}{2}g_{ij}(q)v^iv^j+V(q)\; .
\end{equation}
As is well known, the time independence of  $L$ implies that $E_L$ is a first integral of $\Gamma_L$.

\subsection{The nonholonomic case}\label{sec:lagra-constrained}

As mentioned above, quasivelocities are very convenient in the presence of nonholonomic constraints.
We  assume that the constraints are linear in the velocities, that is, they are given by a smooth  distribution ${\mathcal D}\hookrightarrow TQ$ which we assume to
have constant rank $r\geq 2$.  As is well known the constraints are nonholonomic if and only if $\D$ is 
non-integrable. A fundamental role in our treatment will be played by the inclusion and projection maps
\begin{equation}
\label{eq:iP}
i_\D:\D\hookrightarrow TQ, \qquad \Pp:TQ\to \D.
\end{equation}
Here $\Pp$ is the  bundle projector induced from the orthogonal bundle decomposition $TQ=\D\oplus \D^\perp$, where
$\D^\perp:=\{u\in TQ\; |\, g(u, v)=0, \ \hbox{for all} \ v\in {\mathcal D}\}$.

\subsubsection*{Local Expressions}
We work with a  local frame  adapted to the constraints. More precisely, our frame is 
\begin{equation}\label{adaptedbasis}
\{e_i\} = \{e_a, e_{\alpha}\}, \quad 1\leq a\leq r, \quad r+1\leq \alpha\leq n,
\end{equation}
where $\{e_a\}$ is a local basis of sections of $\D$ and $\{e_{\alpha}\}$ 
is a local basis of sections of $\D^\perp$. 
The  induced coordinates $(q^i, v^a, v^{\alpha})$ of $TQ$ are such  that 
\begin{equation}
\label{eq:Dcoords}
{\mathcal D}=\{(q^i, v^a, v^{\alpha})\; |\, v^{\alpha}=0, \  r+1\leq \alpha\leq n\}.
\end{equation}
In other words,  the nonholonomic constraints in these coordinates  become  $v^{\alpha}=0$, $r+1\leq \alpha\leq n$, and we may thus
use $(q^i,v^a)$ as coordinates for the velocity phase space ${\mathcal D}$.

D'Alembert's principle  leads to the nonholonomic equations of motion  (see e.g.
\cite{BLMMM2011,BMZ2005,CdLMM}): 
\begin{equation*}
\begin{split}
\displaystyle \frac{dq^i}{dt}&=\rho^i_a v^a, \qquad i=1,\dots, n, \\
\displaystyle\frac{d}{dt}\left( \displaystyle \frac{\partial L}{\partial v^a}\right)
&=\rho^j_a \displaystyle \frac{\partial L}{\partial q^j} -
C^c_{ab} v^b\displaystyle \frac{\partial L}{\partial v^c}-
C^\alpha_{ab} v^b\displaystyle \frac{\partial L}{\partial v^\alpha} , \qquad a=1,\dots, r, \\
v^{\gamma}&=0, \qquad \gamma=r+1,\dots, n,
\end{split}
\end{equation*}
where $L$ is given by  \eqref{eq:Lag}, and, in accordance with our index convention, 
\begin{equation}
\label{eq:Cabc}
[e_a, e_b](q)=C_{ab}^c(q) e_c+C_{ab}^{\gamma}(q) e_{\gamma}(q).
\end{equation}

It is convenient to write the  above  equations of motion  in terms of the {\em constrained Lagrangian},  $L_c:=L\circ i_\D :\D\to \R$.  In view of \eqref{eq:Lag} and \eqref{eq:Dcoords}, it has local expression
\begin{equation}
\label{eq:LagC}
L_c(q, v^a)=\frac{1}{2}g_{ab}v^av^b+\eta_a v^a-V(q).
\end{equation}
A simple calculation shows that
\begin{equation}
\label{nonholonomic-equations1}
\begin{split}
 \frac{dq^i}{dt}&=\rho^i_a v^a, \qquad i=1,\dots, n,\\
\frac{d}{dt}\left(  \frac{\partial L_c}{\partial v^a}\right)
&=\rho^j_a \displaystyle \frac{\partial L_c}{\partial q^j} -
C^c_{ab} v^b \frac{\partial L_c}{\partial v^c}-C_{ab}^\alpha v^b \eta_\alpha,\qquad a=1,\dots, r.
\end{split}
\end{equation}
 
\subsubsection*{Intrinsic geometry of the equations}

The solutions of \eqref{nonholonomic-equations1}  are the integral curves of the nonholonomic vector field $\Gamma_{(L, {\mathcal D})}\in {\mathfrak X}({\mathcal D})$, which is   intrinsically characterised by the conditions
\begin{equation}
\label{relation-nonolonomic-free}
 (T_{v_q}i_\D)   (\Gamma_{(L, {\mathcal D})}(v_q))- \Gamma_L (v_q) \in \F_{v_q},\qquad
    \Gamma_{(L, {\mathcal D})}(v_q) \in T_{v_q}\D, \qquad \forall v_q\in \D.
\end{equation} 
Here $T_{v_q}i_\D: T_{v_q}\D \to T_{v_q}(TQ)$ is the tangent space inclusion  and the subspace $ \F_{v_q}\subseteq T_{v_q}(TQ)$, which contains the constraint reaction force, is determined according to D'Alembert's principle by  
$$ \F_{v_q}=\left(\left[ \sharp_g(\D^\circ)\right]^{\mathbf V}\right)(v_q)
=\left\{ \frac{d}{dt}\Big|_{t=0} (v_q+t\, \sharp_g(\gamma_q))\; |\; \gamma_q\in {\mathcal D}_q^\circ\right\},
$$
where 
$\D_q^\circ\subseteq T_q^*Q$ is the annihilator of $\D_q$.
The nonholonomic vector field $\Gamma_{(L, {\mathcal D})}$ so determined is second order with 
respect to the bundle structure $(\tau_Q)_{|{\mathcal D}}: {\mathcal D} \to Q$, namely,
\begin{equation*}
T_{v_q} \tau_Q(\Gamma_{(L, {\mathcal D})}(v_q))=v_q, \qquad \forall v_q\in {\mathcal D}_q.
\end{equation*}

\subsubsection*{Effect of the gyroscopic 1-form}

As in the unconstrained case,  the effect of the gyroscopic 1-form $\eta$ may be identified by   decomposing the nonholonomic vector field $\Gamma_{(L, {\mathcal D})}=\Gamma_{(L_{mech}, {\mathcal D})}-\Gamma_{(\eta, \D)}$. 
In view of \eqref{nonholonomic-equations1} and \eqref{eq:LagC} we have the local expressions
\begin{equation}\label{Local-constrained-SOVF}
\begin{split}
\Gamma_{(L_{mech}, {\mathcal D})}
&=\rho^i_a v^a\displaystyle \frac{\partial}{\partial q^i}
+ g^{cd}\left[v^av^b\left(\displaystyle \frac{1}{2}\rho^k_d \displaystyle \frac{\partial g_{ab}}{\partial q^k} 
-\displaystyle \rho^k_a\frac{\partial g_{db}}{\partial q^k}
- C^e_{ad} g_{eb}\right)-\rho^k_d\displaystyle \frac{\partial V}{\partial q^k}\right]\displaystyle \frac{\partial}{\partial v^c}, \\ 
 \Gamma_{(\eta, \D)} &= g^{cd}v^a\left( \rho^i_a \displaystyle \frac{\partial \eta_d}{\partial q^i}-\rho^i_d\displaystyle \frac{\partial \eta_a}{\partial q^i}-\eta_iC^i_{ad}\right)\displaystyle \frac{\partial}{\partial v^c}\; .
\end{split}
\end{equation}
An intrinsic expression for the constrained  gyroscopic vector force $ \Gamma_{(\eta, \D)}$ can be given
in terms of the  vector bundle endomorphism $\mathcal{P} \circ \Teta\circ i_\D: \D \to \D$. We have
\begin{equation*}
 \Gamma_{(\eta, \D)}(u_q) =(\mathcal{P} \circ \Teta\circ i_\D)^{\mathbf V}(u_q) : = \frac{d}{dt}\Big|_{t=0} (u_q+t (\mathcal{P} \circ \Teta \circ i_\D)(u_q)).
\end{equation*}
The above equality may be verified using \eqref{Local-constrained-SOVF} and noting that, due  to our choice of frame, the orthogonal projector $\mathcal{P}: TQ \to \D$ satisfies $\mathcal{P}(q^i, v^a, v^{\alpha})=(q^i, v^a)$.

\subsubsection*{Energy conservation}
Since the   constraints are linear and time independent, the energy is also preserved along the constrained motion. More precisely, the restricted energy function  $E_c:=E_L\circ i_\D \in C^\infty(\D)$ is a first integral of  the nonholonomic vector field $\Gamma_{(L, {\mathcal D})}$. In local coordinates we have
\begin{equation}\label{constrained-energy}
E_c(q, v^a)=\frac{1}{2}g_{ab}(q)v^av^b+V(q).
\end{equation}
\section{The Legendre transformation and the Hamiltonian formalism}
\label{sec:Ham}

\subsection{The free case}
The Legendre transformation 
${\mathbb F}L: TQ\rightarrow T^*Q$
associated with $L: TQ\rightarrow {\mathbb R}$
is defined as the fiber derivative
\begin{equation}\label{eq:Leg-transf}
\langle {\mathbb F}L(u_q), v_q\rangle :=\frac{d}{dt}\Big|_{t=0} L(u_q+tv_q),
\end{equation}
for all $u_q, v_q\in T_qQ$. 
For a Lagrangian of the type (\ref{def:lagrangian}) we have 
\begin{equation}\label{Expression-Legendre-transformation}
{\mathbb F}L=\flat_g+\eta\circ \tau_Q.
\end{equation}
A fundamental observation is that 
 the presence of the gyroscopic 1-form $\eta$, makes ${\mathbb F}L$ into an {\em affine} bundle isomorphism 
(instead of a linear one).

As usual, the  Hamiltonian function $H\in C^\infty(T^*Q)$ associated with $L$ is defined  by
$H:=E_L\circ ({\mathbb F}L)^{-1}$.
The dynamics in this formulation is given by the  Hamiltonian vector field $X_H\in  \mathfrak{X}(T^*Q)$ 
characterised as the unique vector field satisfying 
\[
X_H(F)= \{ F,H\}, \qquad \mbox{for all} \quad F\in C^\infty(T^*Q),
\]
where $\{ \cdot ,\cdot \},$ is the canonical Poisson bracket on $T^*Q$ \cite{AM78,LR89}. The well-known 
equivalence of the Lagrangian and Hamiltonian 
formulations of the dynamics is expressed as 
\begin{equation}\label{equivalence-dynamics}
({\mathbb F}L)_*\Gamma_L=X_H.
\end{equation}
The Hamiltonian vector field $X_H$ is second order in the sense that ${\mathbb F}L\left( (T_{\alpha_q}\pi_Q)(X_H(\alpha_q))\right)=\alpha_q$ for
all $\alpha_q\in T^*Q$, where $\pi_Q:T^*Q\to Q$ is the cotangent bundle projection.

\subsubsection*{Local expressions}
Any covector $\gamma_q\in T_q^*Q$ may be expressed uniquely 
in terms of our co-frame  $\{e^i\}$ as a linear combination $\gamma_q=p_ie^i(q)$.
The coefficients $p_i$ are the cotangent bundle analog of the quasi-velocities $v^i$, and  will be called {\em quasi-momenta} (although
this terminology is not very standard). If $(q^i)$ are local coordinates for $Q$, then  $(q^{i},p_i)$ are local coordinates
for  $T^*Q$ which in general 
 are {\em not} Darboux coordinates.

The local expressions for the Legendre transformation ${\mathbb F}L$ and its inverse are:
\begin{equation}
\label{eq:Legendre-local-full}
p_i=g_{ij}v^j+\eta_i, \qquad v^i=g^{ij}(p_j-\eta_j).
\end{equation}
Therefore, using \eqref{Lagrangian-energy-local}, we obtain  the local expression for $H$:
\begin{equation}
\label{eq:Ham-specific}
H(q, p)=\displaystyle \frac{1}{2}g^{ij}(p_i - \eta_i)(p_j-\eta_j) +V(q).
\end{equation}
Note that, in contrast with the energy function $E_L\in C^\infty(TQ)$ which is independent of the gyroscopic 1-form $\eta$, the affine 
nature of the Legendre transformation leads to the appearance of $\eta$ in the Hamiltonian $H$.

The local expression for the Hamiltonian vector field in these coordinates is
\begin{equation}
\label{eq:HamVF-general}
X_H= \displaystyle \rho^j_i\frac{\partial H}{\partial p_i}\displaystyle \frac{\partial}{\partial q^j}-
\left( \rho^i_j\displaystyle \frac{\partial H}{\partial q^i}+C^k_{ji} p_k \displaystyle \frac{\partial H}{\partial p_i}\right) \displaystyle \frac{\partial}{\partial p_j}.
\end{equation}
The equivalence of the Lagrangian and Hamiltonian formulations can be directly verified using 
 \eqref{eq:Legendre-local-full}. The second order nature of $X_H$ corresponds to the identities $\frac{dq^j}{dt}=\rho^j_iv^i$ which are valid for all $j$,
and which hold in virtue of \eqref{eq:HamVF-general}, \eqref{eq:Ham-specific} and \eqref{eq:Legendre-local-full} (see also \cite{LeMaMa}).

\subsection{The nonholonomic case}
\label{ss:nonho-Ham}


The passage from the Lagrangian to the Hamiltonian formulation in the constrained case
 is  given by the {\em constrained Legendre transformation}
$\FLc:\D \to \D^*$ which we define as  the fiber derivative of the constrained Lagrangian $L_c:\D \to \R$. Namely,
\[
\langle \FLc (u_q), v_q\rangle :=\frac{d}{dt}\Big|_{t=0} L_c(u_q+tv_q),
\]
for all $u_q, v_q\in \D_q$. The constrained Legendre transformation is a diffeomorphism. More precisely,
and as in the free case, the gyroscopic 1-form  makes $\FLc$ into an invertible   bundle map which is generally affine instead of linear (see Proposition \ref{prop:ConstLegTransf} below).

The {\em constrained Hamiltonian} function $H_c\in C^\infty( {\mathcal D}^*)$ and 
the {\em nonholonomic vector field} $\XHc\in \frak{X}(\D^*)$ describing the dynamics on $\D^*$ are defined by pushing forward with $\FLc$
the corresponding objects on $\D$: 
\begin{equation}\label{Def-H-c}
H_c := E_c \circ (\FLc)^{-1}, \qquad \XHc :={\FLc}_*(\Gamma_{(L, {\mathcal D})}).
\end{equation}
In analogy with the unconstrained case, the nonholonomic vector field $\XHc$ is second order meaning 
 that ${\mathbb F}L_c \left( T_{\alpha_q} \tau_{\D^*}(\XHc(\alpha_q))\right)=\alpha_q$ for
all $\alpha_q\in \D^*$, where $ \tau_{\D^*}:\D^*\to Q$ is the vector bundle projection.

\subsubsection*{A distribution on $\D^*$}

Following previous references e.g. \cite{Weber, BatesSniaticky, CdLMM,Cu-Du}, we consider the   distribution $\E$ on $\D^*$  defined as
\begin{equation}
\label{eq:distE}
\E_{\alpha_q}:=\{ v\in T_{\alpha_q}\D^* \, : \, T_{\alpha_q} \tau_{\D^*}(v) \in \D_q\},
\end{equation}
for $\alpha_q\in \D^*$ and where, as usual, $ \tau_{\D^*}:\D^*\to Q$ is the vector bundle projection. It is known that, under our hypothesis,
$\E$ is a regular  distribution of rank $2r = 2\,\mathrm{rank} (\D)$  
which is integrable if and only if $\D$ is integrable and has the property 
that $\XHc$ takes values on $\E$, i.e. $\XHc(\alpha_q)\in \E_{\alpha_q}$ for all $\alpha_q\in \D^*$. Moreover,
 $\E$ is symplectic meaning that $\E_{\alpha_q}$ is a symplectic subspace of $T_{{\mathcal P}^*(\alpha_q)}(T^*Q)$ for all $\alpha_q\in \D^*$, where ${\mathcal P}^*: \D^*\hookrightarrow T^*Q$ is the dual morphism of the orthogonal projector ${\mathcal P}: TQ\rightarrow \D$.

%
\subsubsection*{The metric decomposition  of $T^*Q$}

The orthogonal decomposition  $TQ=\D\oplus \D^\perp$, associated to the kinetic energy, induces
a bundle decomposition of the cotangent bundle in terms of the annihilators: $T^*Q= (\D^\perp)^\circ \oplus \D^\circ$. Moreover,
the kinetic energy induces  
  natural identifications 
  \begin{equation}
  \label{eq:D^*id}
   \D^*\simeq  (\D^\perp)^\circ =\flat_g(\D)\qquad  \mbox{and} \qquad (\D^\perp)^*\simeq  \D^\circ =\flat_g(\D^\perp),
\end{equation}
 which allow us to write 
 \begin{equation}
 \label{eq:T*Qdecomp}
T^*Q= \D^* \oplus  (\D^\perp)^*.
\end{equation}
In our treatment below we will make extensive use of the bundle maps  
\begin{equation}
\label{eq:co-projinc}
i_\D^*:  T^*Q \to \D^*, \qquad \mathcal{P}^*:  \D^* \hookrightarrow T^*Q, 
\end{equation}
which are dual to the maps $i_\D$ and $\Pp$ defined in Eq \eqref{eq:iP} above.
In view of our identifications, it is easy to check that $\mathcal{P}^*$ is the inclusion of  $\D^*$ on $T^*Q$ and instead
 $ i_\D^*$  the projection onto $\D^*$ according to the  decomposition \eqref{eq:T*Qdecomp}.

 According to the decomposition  \eqref{eq:T*Qdecomp} we may uniquely write the
 gyroscopic 1-form as 
 \begin{equation*}
\eta = \eta^\parallel+ \eta^\perp, \quad \mbox{with}\quad   \eta^\parallel(q)\in \D^*_q, \;\;\; \eta^\perp(q)\in  (\D^\perp)_q^*, \;\;\; \forall q\in Q.
\end{equation*}

\begin{prop} 
\label{prop:ConstLegTransf}
The constrained Legendre transformation $\FLc:\D \to \D^*$ is explicitly given by
\begin{equation}
\label{eq:ConstLegTransf} 
 \FLc =i_{\D}^*\circ( \flat_g  +  \eta^\parallel \circ \tau_Q) \circ i_\D.
\end{equation}
In particular $\FLc$ is an affine  bundle isomorphism which is linear if and only if $ \eta^\parallel=0$.
\end{prop}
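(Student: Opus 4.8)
The plan is to compute both sides of \eqref{eq:ConstLegTransf} fiberwise and check they agree. Fix $q\in Q$ and $u_q\in\D_q$. The constrained Legendre transformation is defined by $\langle \FLc(u_q),v_q\rangle=\frac{d}{dt}\big|_{t=0}L_c(u_q+tv_q)$ for all $v_q\in\D_q$, so it suffices to evaluate $\frac{d}{dt}\big|_{t=0}L_c(u_q+tv_q)$ using $L_c=L\circ i_\D$ and the expression \eqref{def:lagrangian} for $L$. First I would write $L_c(u_q+tv_q)={\mathcal K}_g(u_q+tv_q)+\eta^\ell(u_q+tv_q)-V(q)$, differentiate in $t$, and use bilinearity of $g$ and linearity of $\eta^\ell$ to get $\langle\FLc(u_q),v_q\rangle=g_q(u_q,v_q)+\langle\eta(q),v_q\rangle=\langle\flat_g(u_q),v_q\rangle+\langle\eta(q),v_q\rangle$. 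This holds for all $v_q\in\D_q$, i.e. $\FLc(u_q)$ and $(\flat_g+\eta\circ\tau_Q)(u_q)$ agree as elements of $\D_q^*=T_q^*Q/\D_q^\circ$, which is exactly the statement that $\FLc(u_q)=i_\D^*\big((\flat_g+\eta\circ\tau_Q)(u_q)\big)$.

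Next I would reconcile this with the precise form \eqref{eq:ConstLegTransf}, which has $\eta^\parallel$ rather than $\eta$ inside. The key point is that $i_\D^*\circ(\eta\circ\tau_Q)=i_\D^*\circ(\eta^\parallel\circ\tau_Q)$: writing $\eta(q)=\eta^\parallel(q)+\eta^\perp(q)$ according to the decomposition \eqref{eq:T*Qdecomp}, and recalling from the discussion after \eqref{eq:co-projinc} that $i_\D^*$ is precisely the projection $T^*Q\to\D^*$ along $(\D^\perp)^*=\D^\circ$, we have $i_\D^*(\eta^\perp(q))=0$ since $\eta^\perp(q)\in\D^\circ_q=\ker(i_\D^*|_{T_q^*Q})$. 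Combining with the identity $i_\D^*\circ\flat_g\circ i_\D=i_\D^*\circ(\flat_g)\circ i_\D$ (no simplification needed there) yields \eqref{eq:ConstLegTransf}. The composition with $i_\D$ on the right simply records that the domain is $\D$, not $TQ$.

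For the final sentence, I would argue as follows. The map $\flat_g\circ i_\D:\D\to T^*Q$ is a linear bundle monomorphism with image $\flat_g(\D)=(\D^\perp)^\circ$; composing with $i_\D^*$, which restricts to an isomorphism $(\D^\perp)^\circ\xrightarrow{\sim}\D^*$ in view of the identification \eqref{eq:D^*id} and the decomposition \eqref{eq:T*Qdecomp}, shows that $i_\D^*\circ\flat_g\circ i_\D:\D\to\D^*$ is a linear bundle isomorphism — this is just the (classical, linear) constrained Legendre transform of $L_{mech}$. Adding the constant-on-fibres term $i_\D^*\circ(\eta^\parallel\circ\tau_Q)\circ i_\D$, which assigns to $u_q\in\D_q$ the fibrewise-constant covector $\eta^\parallel(q)\in\D_q^*$, turns this linear isomorphism into an affine one; affine isomorphisms of vector bundles are still bundle isomorphisms (the inverse is affine too), so $\FLc$ is an affine bundle isomorphism. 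It is linear precisely when the translation part vanishes identically, i.e. when $\eta^\parallel(q)=0$ for all $q$, that is $\eta^\parallel=0$.

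I do not anticipate a serious obstacle here; the only point requiring a little care is bookkeeping the identifications \eqref{eq:D^*id}–\eqref{eq:T*Qdecomp} so that "$i_\D^*$ kills $\eta^\perp$" and "$i_\D^*\circ\flat_g\circ i_\D$ is an isomorphism" are both justified cleanly, and making sure the fibre derivative computation is carried out intrinsically (using bilinearity of $g$) rather than in coordinates. A coordinate cross-check against \eqref{eq:LagC}, which gives $p_a=g_{ab}v^b+\eta_a$, confirms the formula: the restriction of $\flat_g$ contributes $g_{ab}v^b$ and $\eta^\parallel$ contributes the $\eta_a\,e^a$ part, exactly the components surviving $i_\D^*$.
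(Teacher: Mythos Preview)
Your proof is correct and follows essentially the same approach as the paper: both compute the fibre derivative of $L_c$ to obtain $\langle \FLc(u_q),v_q\rangle=\langle\flat_g(u_q)+\eta(q),v_q\rangle$ for $v_q\in\D_q$, and then observe that $i_\D^*$ annihilates $\eta^\perp$ so that $\eta$ may be replaced by $\eta^\parallel$. The paper routes the first step through the known expression \eqref{Expression-Legendre-transformation} for $\mathbb{F}L$ rather than expanding $L$ directly, and declares the affine-isomorphism claim ``obvious'' where you spell it out, but these are cosmetic differences.
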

\begin{proof}
Let $u_q, v_q\in \D_q$. Given that $u_q+t v_q\in \D_q$ we have $L_c(u_q+t v_q)=L(u_q+t v_q)$ and therefore,
using \eqref{eq:Leg-transf} and \eqref{Expression-Legendre-transformation}, we have
\[
\langle \FLc (u_q), v_q\rangle =\frac{d}{dt}\Big|_{t=0} L(u_q+tv_q)=\langle {\mathbb F}L (u_q), v_q\rangle = \langle i_{\D}^*(\flat_g(u_q)+\eta(q)), v_q\rangle .
\]
But, $\langle i_{\D}^*(\eta(q)), v_q\rangle=\langle \eta^\parallel (q), v_q\rangle$ since $v_q\in \D_q$. The conclusion about the nature of  $\FLc$ 
is obvious from  \eqref{eq:ConstLegTransf}.
\end{proof}

\subsubsection*{The nonholonomic dynamics as a projection of the free Hamiltonian vector field}

The following theorem is fundamental for our purposes of expressing the vector field $\XHc\in \frak{X}(\D^*)$ in almost-Hamiltonian form.
It states that $\XHc$ equals an appropriate projection of unconstrained Hamiltonian vector fields. In its
statement, $ \mathbb{F}L(\D)$ denotes the image of $\D$ under the (unconstrained) Legendre transformation. In view 
of \eqref{Expression-Legendre-transformation},
$ \mathbb{F}L(\D)$ equals the affine subbundle $\D^*+\eta$ of $T^*Q$.

\begin{thm}\label{teo1}
Let $\beta_q \in \mathbb{F}L(\D_q)$. The following equalities hold 
\begin{equation}
\label{eq:thm1}
\begin{split}
\XHc(i_\D^*(\beta_q)) = (T_{\beta_q}i_\D^*)(X_H(\beta_q))  = (T_{\beta_q}i_\D^*)(X_{H_c\circ  i_\D^*} (\beta_q)).
\end{split}
\end{equation}
\end{thm}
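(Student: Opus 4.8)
The plan is to verify the two equalities in \eqref{eq:thm1} in turn, working locally in the adapted quasi-coordinates $(q^i,v^a,v^\alpha)$ on $TQ$ and the dual quasi-coordinates $(q^i,p_a,p_\alpha)$ on $T^*Q$ introduced above, and then using the intrinsic characterisations of $\Gamma_{(L,\D)}$ and $X_H$ to make the identification coordinate-free where convenient. Recall that, by the metric decomposition \eqref{eq:T*Qdecomp}, the map $i_\D^*:T^*Q\to\D^*$ reads $(q^i,p_a,p_\alpha)\mapsto(q^i,p_a)$, and $\mathbb{F}L(\D_q)$ is cut out by the affine conditions $p_\alpha=\eta_\alpha(q)$ (this follows from \eqref{eq:Legendre-local-full} restricted to $v^\alpha=0$, and matches the description $\mathbb{F}L(\D)=\D^*+\eta$ in the paragraph preceding the theorem). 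Throughout I will use $H_c\circ i_\D^*$ as a specific $\eta$-adapted extension of $H_c$ off $\D^*$, namely $H_c$ read as a function of $(q^i,p_a)$ only, and $H$ as given by \eqref{eq:Ham-specific}.

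\textbf{Step 1: the second equality.} First I would show $(T_{\beta_q}i_\D^*)(X_H(\beta_q))=(T_{\beta_q}i_\D^*)(X_{H_c\circ i_\D^*}(\beta_q))$ for $\beta_q\in\mathbb{F}L(\D_q)$. Since $i_\D^*$ kills the $p_\alpha$-components, applying $T i_\D^*$ to a vector field on $T^*Q$ retains only its $\partial/\partial q^i$ and $\partial/\partial p_a$ components and evaluates them. Using the local formula \eqref{eq:HamVF-general} for the Hamiltonian vector field of a general function, the $\partial/\partial q^i$ component of $X_K$ is $\rho^i_j\,\partial K/\partial p_j$ and the $\partial/\partial p_a$ component is $-\rho^i_a\,\partial K/\partial q^i-C^k_{ai}p_k\,\partial K/\partial p_i$. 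So I must compare these expressions for $K=H$ and $K=H_c\circ i_\D^*$ \emph{at points of} $\mathbb{F}L(\D)$. The key computational point is that at such points $\partial H/\partial p_\alpha=g^{\alpha j}(p_j-\eta_j)=v^\alpha=0$ (the velocity conjugate to $p_\alpha$ vanishes on $\D$), so the $p_\alpha$-derivatives of $H$ drop out of the retained components; meanwhile $\partial(H_c\circ i_\D^*)/\partial p_\alpha=0$ identically. One then checks that, restricted to $\mathbb{F}L(\D)$, $\partial H/\partial q^i=\partial(H_c\circ i_\D^*)/\partial q^i$ and $\partial H/\partial p_a=\partial(H_c\circ i_\D^*)/\partial p_a$; this is precisely the chain-rule content that $H_c=E_c\circ(\mathbb{F}L_c)^{-1}$ and $H=E_L\circ(\mathbb{F}L)^{-1}$ agree along $\D^*\hookrightarrow\mathbb{F}L(\D)$ together with Proposition \ref{prop:ConstLegTransf}. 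I expect this to be a short but slightly fiddly verification; the cleanest route is to note that $H\circ(\text{inclusion of }\mathbb{F}L(\D))=H_c\circ i_\D^*\circ(\text{same inclusion})$, i.e. the two functions on $T^*Q$ agree on the affine subbundle $\mathbb{F}L(\D)$, hence their differentials agree modulo the conormal of that subbundle, which is spanned by $dp_\alpha-d\eta_\alpha$; the residual $dp_\alpha$-terms are then annihilated after applying $Ti_\D^*$, thanks to $\partial H/\partial p_\alpha=0$ on $\mathbb{F}L(\D)$.

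\textbf{Step 2: the first equality.} Next I would prove $\XHc(i_\D^*(\beta_q))=(T_{\beta_q}i_\D^*)(X_H(\beta_q))$. By definition \eqref{Def-H-c}, $\XHc={\mathbb{F}L_c}_*\Gamma_{(L,\D)}$, and by \eqref{equivalence-dynamics}, $X_H={\mathbb{F}L}_*\Gamma_L$. Writing $\beta_q=\mathbb{F}L(u_q)$ with $u_q\in\D_q$ (so $i_\D^*(\beta_q)=\mathbb{F}L_c(u_q)$ by \eqref{eq:ConstLegTransf}), the claim becomes $T\mathbb{F}L_c(\Gamma_{(L,\D)}(u_q))=T i_\D^*\bigl(T\mathbb{F}L(\Gamma_L(u_q))\bigr)$, which by the factorisation $\mathbb{F}L_c=i_\D^*\circ\mathbb{F}L\circ i_\D$ (Proposition \ref{prop:ConstLegTransf}) would follow from $T i_\D(\Gamma_{(L,\D)}(u_q))=\Gamma_L(u_q)+(\text{something in }\ker T i_\D^*\text{ after pushing through }T\mathbb{F}L)$. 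Concretely: from \eqref{relation-nonolonomic-free}, $T i_\D(\Gamma_{(L,\D)}(u_q))-\Gamma_L(u_q)\in\F_{u_q}=(\sharp_g(\D^\circ))^{\mathbf V}(u_q)$. Now I push this vertical vector through $T\mathbb{F}L=T\flat_g$ (the derivative of the affine map $\flat_g+\eta\circ\tau_Q$ on vertical vectors is just $\flat_g$): a vertical vector at $u_q$ in the direction $\sharp_g(\gamma_q)$ with $\gamma_q\in\D^\circ_q$ maps to the vertical vector at $\beta_q$ in the direction $\flat_g\sharp_g(\gamma_q)=\gamma_q\in\D^\circ_q$. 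But $\D^\circ=\flat_g(\D^\perp)=(\D^\perp)^*$ in the decomposition \eqref{eq:T*Qdecomp}, which is exactly $\ker T i_\D^*$ on vertical vectors. Hence $T i_\D^*$ annihilates the whole discrepancy, giving the first equality. The remaining bookkeeping is that $T i_\D^*\circ T\mathbb{F}L\circ T i_\D$ applied to $\Gamma_{(L,\D)}(u_q)$ indeed equals $T\mathbb{F}L_c(\Gamma_{(L,\D)}(u_q))=\XHc(i_\D^*(\beta_q))$ — immediate from the factorisation — and that this manipulation is legitimate pointwise because $\Gamma_{(L,\D)}(u_q)\in T_{u_q}\D$ so $T i_\D$ applies to it directly.

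\textbf{Main obstacle.} The genuinely delicate point is Step 2: one must be careful that $\Gamma_L(u_q)$ — a vector in $T_{u_q}(TQ)$ based at $u_q\in\D\subseteq TQ$ — need not be tangent to $\D$, so the identity $T i_\D(\Gamma_{(L,\D)}(u_q))=\Gamma_L(u_q)$ \emph{fails}, and only the weaker D'Alembert relation \eqref{relation-nonolonomic-free} holds; the argument genuinely needs the observation that the reaction-force subspace $\F_{u_q}$, after transport by $T\mathbb{F}L$, lands exactly in the kernel of $T i_\D^*$. The affine (rather than linear) nature of $\mathbb{F}L$ does not actually interfere here because the correction term $\eta\circ\tau_Q$ is constant on fibres, so its derivative contributes nothing to vertical directions — but one should say this explicitly, since it is precisely the place a reader worries the gyroscopic term might break the classical argument. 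Everything else is routine chain rule and the linear-algebra facts about the metric decomposition already recorded in \eqref{eq:T*Qdecomp}–\eqref{eq:co-projinc}.
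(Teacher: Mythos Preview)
Your proof is correct and follows essentially the same route as the paper's. Step~2 is identical to the paper's argument: both use the D'Alembert relation \eqref{relation-nonolonomic-free}, push forward by $T\mathbb{F}L$, observe that the vertical reaction-force vector lands in $\D^\circ$ (which is $\ker Ti_\D^*$ restricted to vertical vectors), and invoke the factorisation $\mathbb{F}L_c=i_\D^*\circ\mathbb{F}L\circ i_\D$.

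The one stylistic difference is in Step~1. Your conormal argument plus the observation $\partial H/\partial p_\alpha=0$ on $\mathbb{F}L(\D)$ actually forces the coefficients $c_\alpha$ in $dH-d(H_c\circ i_\D^*)=\sum_\alpha c_\alpha(dp_\alpha-d\eta_\alpha)$ to vanish, so you have in fact proved the stronger statement $dH(\beta_q)=d(H_c\circ i_\D^*)(\beta_q)$ (and hence $X_H(\beta_q)=X_{H_c\circ i_\D^*}(\beta_q)$, not merely equality after projection). The paper isolates this as a separate lemma (Lemma~\ref{l:equaldiff}) and proves it more directly by noting that $H-H_c\circ i_\D^*=\tfrac12 g^{\beta\gamma}(p_\beta-\eta_\beta)(p_\gamma-\eta_\gamma)$ vanishes \emph{quadratically} along $\mathbb{F}L(\D)$, so its differential vanishes there. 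Your phrasing ``the residual $dp_\alpha$-terms are then annihilated after applying $Ti_\D^*$'' slightly undersells what you have shown---there are no residual terms---but the argument is sound.
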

\begin{proof}
In the next section we will give the local expressions for $H$ and $H_c\circ  i_\D^*$ from which it is easily verified 
that $dH(\beta_q)=d(H_c\circ  i_\D^*)(\beta_q)$ (Lemma \ref{l:equaldiff}). As a consequence $X_H(\beta_q)=X_{H_c\circ  i_\D^*} (\beta_q)$
and  the second equality of \eqref{eq:thm1}  holds. To prove the first equality, let  $u_q \in  \D_q$ such that $\beta_q= {\mathbb F}L(u_q)$. From the intrinsic version of the Lagrange-D'Alembert Principle 
(\ref{relation-nonolonomic-free}), we have
\[
(T_{u_q}i_\D)\Gamma_{(L, {\mathcal D})}(u_q)-\Gamma_L(u_q) = \frac{d}{dt}\Big|_{t=0}(u_q+t\sharp_g(\gamma_q)), \quad \mbox{for some} \quad \gamma_q\in {\mathcal D}_q^\circ.
\]
 Applying $T_{u_q}{\mathbb F}L$ to both sides of the equation 
and using expression (\ref{Expression-Legendre-transformation}) for ${\mathbb F}L$ yields,
\begin{eqnarray*}
(T_{u_q}({\mathbb F}L \circ i_\D))(\Gamma_{(L, {\mathcal D})}(u_q))-(T_{u_q}{\mathbb F}L)(\Gamma_L(u_q))
&=& \frac{d}{dt}\Big|_{t=0}({\mathbb F}L (u_q+t\sharp_g(\gamma_q)))\\
&=&\frac{d}{dt}\Big|_{t=0}(\flat_g(u_q)+t\gamma_q+\eta(q))\\
&=&\frac{d}{dt}\Big|_{t=0}(\beta_q+t\gamma_q).\\
\end{eqnarray*}
Now notice that $(T_{u_q}{\mathbb F}L)(\Gamma_L(u_q))=X_H(\beta_q)$  by (\ref{equivalence-dynamics}). Thus, applying $T_{\beta_q}i_\D^*$ on both sides of the above equation gives
\[
T_{u_q}(i^*_{\mathcal D}\circ {\mathbb F}L \circ i_{\D})(\Gamma_{(L, {\mathcal D})}(u_q))-(T_{\beta_q}i^*_{\mathcal D})(X_H(\beta_q))=\frac{d}{dt}\Big|_{t=0}(i_{\mathcal D}^*\beta_q+t i_{\mathcal D}^*\gamma_q),
\]
but the right hand side vanishes since  $i_{\mathcal D}^*\gamma_q=0$ because $\gamma_q\in {\mathcal D}_q^\circ=(\D^\perp)_q^*$. 
Therefore, we conclude that
\begin{equation}
\label{eq:aux-proof-thm1}
T_{u_q}(i^*_{\mathcal D}\circ {\mathbb F}L \circ i_{\D})(\Gamma_{(L, {\mathcal D})}(u_q))= (T_{\beta_q}i^*_{\mathcal D})(X_H(\beta_q)).
\end{equation}
Now, using \eqref{Expression-Legendre-transformation} and \eqref{eq:ConstLegTransf} we conclude  that $i^*_{\mathcal D}\circ {\mathbb F}L \circ i_{\D}=\FLc$. This has two  consequences, the first is that  
$\FLc(u_q)= i^*_{\mathcal D}( {\mathbb F}L (u_q))= i^*_{\mathcal D}(\beta_q)$, and the second is that 
 the left hand side of \eqref{eq:aux-proof-thm1} may
be rewritten as 
\begin{equation*}
\begin{split}
T_{u_q}(i^*_{\mathcal D}\circ {\mathbb F}L \circ i_{\D})(\Gamma_{(L, {\mathcal D})}(u_q))&=(T_{u_q}\FLc)(\Gamma_{(L, {\mathcal D})}(u_q)) \\
&=\XHc(\FLc(u_q)) \\
&=\XHc(i^*_{\mathcal D}\beta_q),
\end{split}
\end{equation*}
where we have used the definition \eqref{Def-H-c} of $\XHc$ in the second equality.
Hence, Eq. \eqref{eq:aux-proof-thm1} implies that  $\XHc(i^*_{\mathcal D}\beta_q)=(T_{\beta_q}i^*_{\mathcal D})(X_H(\beta_q))$ as required.
\end{proof}

\subsubsection*{Local expressions}
Our choice of frame in \eqref{adaptedbasis} implies that the dual co-frame $\{e^i\}$ also splits as 
 $\{e^{i}\} = \{e^{a}, e^{\gamma}\}$ and is adapted to the decomposition \eqref{eq:T*Qdecomp}. We write the corresponding quasi-momenta as $( p_{i}) = (p_{a}, p_{\gamma})$.
 It follows that $(q^i, p_{a}, p_{\gamma})$ are coordinates for $T^*Q$ and 
 $(q^i,p_a)$ are coordinates for $\D^*$ which is described as a submanifold of $T^*Q$ 
 by the conditions $p_\gamma=0$, $r+1\leq \gamma \leq n$. In these coordinates, and with the conventions introduced in the previous 
 Section \ref{sec:lagra-constrained}, we have 
%
%
\[
i_\D(q^{i}, v^{a}) = (q^{i}, v^{a}, 0), \qquad i_\D^*(q^{i}, p_{a}, p_\gamma) = (q^{i}, p_a).
\]
The components $\eta^\parallel$ and $\eta^\perp$ of the gyroscopic form $\eta$ are
\begin{equation*}
\eta^\parallel = \eta_a e^a, \qquad \eta^\perp = \eta_\gamma e^\gamma.
\end{equation*}
Hence, in view of \eqref{eq:ConstLegTransf}, the constrained Legendre transformation  refines identities \eqref{eq:Legendre-local-full}
to
\begin{equation}
\label{eq:Legendre-local-constrained}
p_a=g_{ab}v^b+\eta_a, \qquad v^b=g^{ab}(p_a-\eta_a).
\end{equation}
%
Combining the second equality above  with the local expression \eqref{constrained-energy} for $E_c$, yields the following local
expression for the constrained Hamiltonian $H_c$
\begin{equation}\label{Local-Expression-H-c}
\begin{array}{rcl}
H_c(q^i, p_a)= \displaystyle \frac{1}{2}g^{ab}(p_a-\eta_a)(p_b-\eta_b) + V(q).
\end{array}
\end{equation}

Now, it is easy to verify that the following standard relations between the constrained Lagrangian $L_c$ (given by \eqref{eq:LagC})
and the constrained Hamiltonian $H_c$ given above hold
\begin{equation*}
p_a=\frac{\partial L_c}{\partial v^a}, \qquad v^a=\frac{\partial H_c}{\partial p_a},\qquad 
\frac{\partial H_c}{\partial q^i}=-\frac{\partial L_c}{\partial q^i}.
\end{equation*}
 Hence, under the constrained Legendre transformation \eqref{eq:Legendre-local-constrained}, the equations of motion \eqref{nonholonomic-equations1}
 transform to 
 \begin{equation}
\label{nonholonomic-equations2}
\begin{split}
 \frac{dq^i}{dt}&=\rho^i_a \frac{\partial H_c}{\partial p_a}, \qquad i=1,\dots, n,\\
\frac{d p_a}{dt}
&=-\rho^j_a \displaystyle \frac{\partial H_c}{\partial q^j} -
C^d_{ab} p_d \frac{\partial H_c}{\partial p_b}-
C^\alpha_{ab} \eta_\alpha \frac{\partial H_c}{\partial p_b},\qquad a=1,\dots, r.
\end{split}
\end{equation}
In particular  we have $\frac{dq^i}{dt}=\rho^i_ag^{ab}(p_b-\eta_b)=\rho^i_av^a$ for all $i$, which 
expresses the second order nature of $\XHc$ whose   local expression is
\begin{equation}\label{Local-expression-terms-H-c}
\XHc = \left(\rho^j_a \displaystyle \frac{\partial H_c}{\partial p_a}  \right)\displaystyle\frac{\partial}{\partial q^j} 
- \left( \rho^i_a  \displaystyle \frac{\partial H_c}{\partial q^i}
+ (C^d_{ab}p_d+C^\gamma_{ab} \eta_{\gamma}) \displaystyle \frac{\partial H_c}{\partial p_b}\right)
 \displaystyle \frac{\partial}{\partial p_a}.
\end{equation}

The distribution $\E$ defined by \eqref{eq:distE} is specified in our coordinates as
\begin{equation*}
\E_{(q^i,p_a)}=\mathrm{span} \left \{\rho^i_a g^{ab}(p_b-\eta_b)\frac{\partial}{\partial q^i} \, , \, \frac{\partial}{\partial p_b} \right \}.
\end{equation*}
It is clear from this description that its rank equals $2r$ and $\XHc$ takes values on it.

We finish this section by using the above local expressions for $i_\D^*$ and  $H_c$ to prove the following Lemma that was used in the
proof of Theorem \ref{teo1}.

\begin{lemma}
\label{l:equaldiff}
$dH(\beta_q)= d(H_c\circ i_\D^*)(\beta_q)$ for all $\beta_q\in \mathbb{F}L(\D)_q$.
\end{lemma}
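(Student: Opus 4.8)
The plan is to verify the claimed equality of differentials by a direct computation in the local coordinates $(q^i, p_a, p_\gamma)$ introduced just above, exploiting the fact that $\beta_q$ lies in $\mathbb{F}L(\D_q)$, which in these coordinates means $\beta_q$ has $p_\gamma$-components equal to $\eta_\gamma(q)$ (since $\mathbb{F}L(\D) = \D^* + \eta$ and $\eta^\perp = \eta_\gamma e^\gamma$). First I would write down the local expression for $H$ from \eqref{eq:Ham-specific}, namely $H(q,p) = \tfrac{1}{2}g^{ij}(p_i-\eta_i)(p_j-\eta_j) + V(q)$, and the local expression for $H_c \circ i_\D^*$ obtained by composing \eqref{Local-Expression-H-c} with $i_\D^*(q^i, p_a, p_\gamma) = (q^i, p_a)$, which gives $(H_c\circ i_\D^*)(q^i,p_a,p_\gamma) = \tfrac{1}{2}g^{ab}(p_a-\eta_a)(p_b-\eta_b) + V(q)$, a function independent of the $p_\gamma$ variables.

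The next step is to compute the partial derivatives of both functions at a point $\beta_q$ with coordinates $(q^i, p_a, p_\gamma = \eta_\gamma(q))$ and compare. For the $\partial/\partial p_\gamma$ components: $d(H_c\circ i_\D^*)$ has no $dp_\gamma$ part at all, while $\partial H/\partial p_\gamma = g^{\gamma j}(p_j - \eta_j)$, which at $\beta_q$ reads $g^{\gamma a}(p_a-\eta_a) + g^{\gamma\beta}(p_\beta - \eta_\beta)$; since $p_\beta = \eta_\beta$ there, and since the metric $g$ is block-diagonal with respect to the adapted frame $\{e_a, e_\gamma\}$ (because $\{e_a\}$ spans $\D$ and $\{e_\gamma\}$ spans $\D^\perp$, so $g_{a\gamma} = 0$, hence also $g^{a\gamma} = 0$), we get $\partial H/\partial p_\gamma(\beta_q) = 0$. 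Thus the $dp_\gamma$-components agree (both vanish). For the $\partial/\partial p_a$ components: $\partial H/\partial p_a(\beta_q) = g^{ab}(p_b-\eta_b) + g^{a\gamma}(p_\gamma-\eta_\gamma) = g^{ab}(p_b-\eta_b)$ using block-diagonality again, which matches $\partial(H_c\circ i_\D^*)/\partial p_a = g^{ab}(p_b-\eta_b)$ exactly. For the $\partial/\partial q^i$ components one differentiates both quadratic-plus-potential expressions; the $V$ terms agree trivially, and in the quadratic terms the derivatives of $g^{ij}$ (resp. $g^{ab}$) and of $\eta_i$ (resp. $\eta_a$) with respect to $q^i$ produce contributions that, when evaluated at $\beta_q$ where $p_\gamma = \eta_\gamma$, collapse the full-index sums to the $\D$-block sums by the same block-diagonality and the cancellation $p_\gamma - \eta_\gamma = 0$; here one must be a little careful that terms of the form $(\partial g^{ij}/\partial q^k)(p_i-\eta_i)(p_j-\eta_j)$ with one index in the $\gamma$-range vanish because of the $(p_\gamma-\eta_\gamma)$ factor, even though $\partial g^{a\gamma}/\partial q^k$ need not vanish.

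The main obstacle I anticipate is precisely this last point: while $g^{a\gamma} = 0$ identically (so its $q$-derivatives also vanish), the individual blocks need careful bookkeeping, and one should present the argument cleanly rather than drowning in indices — the cleanest route is to note that $H - H_c\circ i_\D^* = \tfrac{1}{2}g^{ij}(p_i-\eta_i)(p_j-\eta_j) - \tfrac{1}{2}g^{ab}(p_a-\eta_a)(p_b-\eta_b)$, and using block-diagonality $g^{ij}(p_i-\eta_i)(p_j-\eta_j) = g^{ab}(p_a-\eta_a)(p_b-\eta_b) + g^{\gamma\delta}(p_\gamma-\eta_\gamma)(p_\delta-\eta_\delta)$, so that
\[
H - H_c\circ i_\D^* = \tfrac{1}{2}g^{\gamma\delta}(p_\gamma-\eta_\gamma)(p_\delta-\eta_\delta).
\]
This difference vanishes to second order along the affine subbundle $\{p_\gamma = \eta_\gamma\} = \mathbb{F}L(\D)$; hence its differential vanishes at every point $\beta_q$ of that subbundle, which is exactly the claim. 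I would therefore organize the proof around this one-line identity plus the observation that a function of the form $\tfrac{1}{2}g^{\gamma\delta}(q)\,x_\gamma x_\delta$ (in the fibre coordinates $x_\gamma := p_\gamma - \eta_\gamma(q)$) has vanishing differential wherever all $x_\gamma = 0$, which handles the $dq^i$, $dp_a$, and $dp_\gamma$ components simultaneously and avoids any messy term-by-term differentiation.
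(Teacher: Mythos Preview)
Your proposal is correct and converges to exactly the paper's argument: compute the difference $H - H_c\circ i_\D^* = \tfrac{1}{2}g^{\gamma\delta}(p_\gamma-\eta_\gamma)(p_\delta-\eta_\delta)$ using the block-diagonality of the inverse metric in the adapted frame, and observe that this vanishes to second order along $\mathbb{F}L(\D)=\{p_\gamma=\eta_\gamma\}$, so its differential vanishes there. Your explicit remark that $g_{a\gamma}=0$ (hence $g^{a\gamma}=0$) is the one ingredient the paper leaves implicit, and your final ``cleanest route'' paragraph is precisely the paper's proof.
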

\begin{proof}
In view of \eqref{eq:Legendre-local-full} we have the following coordinate description of $\mathbb{F}L(\D)$:
\begin{equation*}
\mathbb{F}L(\D)= \left \{(q^i,p_a,p_\gamma) \, : \,  p_\gamma=\eta_\gamma, \quad \gamma=r+1,\dots, n \right \}.
\end{equation*}

On the other hand, using the local expressions for $i_\D^*$ and  $H_c$ given above we have
\begin{equation*}
H_c \circ i_\D^* (q^i,p_a,p_\gamma)= \displaystyle \frac{1}{2}g^{ab}(p_a-\eta_a)(p_b-\eta_b) + V(q).
\end{equation*}
Comparing with \eqref{eq:Ham-specific} we conclude that $(H-H_c \circ i_\D^* )(q^i,p_a,p_\gamma)=  \frac{1}{2}g^{\beta \gamma}(p_\beta-\eta_\beta)(p_\gamma-\eta_\gamma$). This expression vanishes quadratically along $\mathbb{F}L(\D)$ so its differential along 
this set also vanishes.
\end{proof}

\section{Almost-Poisson formulation}\label{sec-almost}

We are now ready to define the almost-Poisson bracket $\{\cdot, \cdot\}_{\D^*}$ for the nonholonomic  dynamics  $\XHc\in \frak{X}(\D^*)$ considered above. This is a bracket of functions on $\D^*$ 
such that 
$$\XHc(\varphi) = \{\varphi, H_c\}_{\D^*}, \qquad \mbox{for all $ \varphi \in C^\infty(\D^*)$}.  $$ 


\subsection{Definitions and main properties}

As before, denote by $\{\cdot, \cdot\}$  the canonical Poisson bracket  on $T^*Q$.
For $\varphi, \psi \in C^\infty(\D^*)$ we define $\{\varphi,  \psi\}_{\D^*}\in  C^\infty(\D^*)$ by
\begin{equation}
\label{def:constrained-bracket}
\{\varphi,  \psi\}_{\D^*} := \{\varphi \circ i_{\D}^*,  \psi \circ i_{\D}^*\} \circ (\mathcal{P}^* + \eta^\perp \circ \tau_{\D^*}),
\end{equation}
where $ \tau_{\D^*}:\D^*\to Q$ is the vector bundle projection.
To make sense of the above formula recall that  $i_\D^*$ and $\Pp^*+\eta^\perp \circ \tau_{\D^*}$ are bundle morphisms such that
\begin{equation}
\label{eq:bundle-morphisms}
i_\D^*:T^*Q \to \D^*, \qquad \mathcal{P}^* +\eta^\perp \circ \tau_{\D^*}:\D^*\to T^*Q.
\end{equation}

%
%
%

%


Note that the gyroscopic term manifests itself in   \eqref{def:constrained-bracket}  through
 the term involving $\eta^\perp$. In particular,
if the gyroscopic 1-form $\eta$ vanishes along $\D^\perp$, we recover previous constructions \cite{GLMM, LMD2010,LeLaLoMa2023} 
of the nonholonomic bracket for
mechanical Lagrangians.

The properties of the bracket $\{\cdot, \cdot\}_{\D^*}$ are given in Theorem  \ref{nonholonomic-bracket} below. In particular, as is clear 
from its definition, it is  ${\mathbb R}$-bilinear, skew-symmetric and satisfies 
 Leibniz rule in each argument. As a consequence, for $\varphi\in C^\infty( \D^*)$, the mapping $ \{\cdot, \varphi\}_{\D^*} : C^\infty(\D^*)\to  C^\infty(\D^*) $
  is a derivation which determines a vector field $X_\varphi^{\D^*} \in \frak{X}(\D^*)$ that we call the {\em almost-Hamiltonian vector field associated 
 to $\varphi$}, and is determined by
 \begin{equation*}
X_\varphi^{\D^*} (\psi)= \{\psi, \varphi\}_{\D^*} \qquad \forall \psi \in C^\infty(\D^*).
\end{equation*}

%
%

\subsubsection*{The bracket of linear and basic functions}

The following proposition expresses the value of the bracket \eqref{def:constrained-bracket}  applied to linear and basic functions
on $\D^*$. The resulting expressions  determine the bracket completely 
due to the vector bundle structure of $\D^*$. In the 
statement of the proposition,  $\Gamma(\D)$ is the space of sections of $\D$, and for $Y\in \Gamma(\D)$ we denote by $Y^\ell\in C^\infty(\D^*)$ the induced linear function $\alpha_q\mapsto Y^\ell(\alpha_q):=\langle \alpha_q, Y(q)\rangle$.
%
%
\begin{prop}\label{properties-constrained-bracket}
Let  $Y, Z \in \Gamma(\D)$ and $f, g \in C^\infty(Q)$. We have,
\begin{equation*}
\begin{split}
&\{Y^\ell, Z^\ell\}_{\D^*} = -(\mathcal{P}[Y, Z])^\ell + d\eta^\perp(Y, Z) \circ \tau_{\D^*}, \\
 &\{Y^\ell, f \circ \tau_{\D^*}\}_{\D^*} = -Y(f) \circ \tau_{\D^*}, \\
 & \{f \circ \tau_{\D^*}, g \circ \tau_{\D^*}\}_{\D^*} = 0.
\end{split}
\end{equation*}
\end{prop}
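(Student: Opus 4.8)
The plan is to verify the three identities by direct computation using the definition \eqref{def:constrained-bracket} of the bracket together with the explicit form \eqref{Expression-Legendre-transformation} of the canonical structure on $T^*Q$, or — what I expect to be cleaner — by unwinding the definition and reducing each identity to a known formula for the canonical Poisson bracket applied to linear and basic functions on $T^*Q$. Recall that for $W\in\mathfrak{X}(Q)$ the canonical bracket on $T^*Q$ satisfies $\{W^{\ell},\widetilde W^{\ell}\}=-[W,\widetilde W]^{\ell}$, $\{W^{\ell}, f\circ\pi_Q\}=-W(f)\circ\pi_Q$, and $\{f\circ\pi_Q, g\circ\pi_Q\}=0$, where here $W^{\ell}\in C^\infty(T^*Q)$ is the fibrewise linear function associated to a vector field on $Q$. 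So the first step is to express $\varphi\circ i_\D^*$ and $\psi\circ i_\D^*$ in these terms when $\varphi=Y^{\ell}$ or $\varphi=f\circ\tau_{\D^*}$.

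The key observation is that if $Y\in\Gamma(\D)$, then $Y^{\ell}\circ i_\D^*\in C^\infty(T^*Q)$ is precisely the fibrewise linear function $\widehat Y^{\ell}$ associated to $Y$ viewed as a section of $TQ$ via $i_\D$; and $(f\circ\tau_{\D^*})\circ i_\D^* = f\circ\pi_Q$. Thus the third identity is immediate: $\{f\circ\pi_Q, g\circ\pi_Q\}=0$ precomposed with any bundle map is still zero. For the second identity, $\{\widehat Y^{\ell}, f\circ\pi_Q\} = -Y(f)\circ\pi_Q = -(Y(f)\circ\tau_{\D^*})\circ i_\D^*\circ(\cdots)$; since $Y(f)$ is a basic function, precomposition with $\mathcal{P}^*+\eta^\perp\circ\tau_{\D^*}$ followed by $i_\D^*$ on its argument yields $-Y(f)\circ\tau_{\D^*}$, because a basic function on $\D^*$ only sees the base point, which is preserved by $\mathcal{P}^*+\eta^\perp\circ\tau_{\D^*}$. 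These two cases are essentially bookkeeping.

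The substantive case is the first identity. Here I would compute $\{\widehat Y^{\ell},\widehat Z^{\ell}\} = -[Y,Z]^{\ell}$ as a function on $T^*Q$, and then precompose with $\mathcal{P}^*+\eta^\perp\circ\tau_{\D^*}:\D^*\to T^*Q$. Writing a point of $\D^*$ as $\alpha_q$, we get $-[Y,Z]^{\ell}\bigl(\mathcal{P}^*(\alpha_q)+\eta^\perp(q)\bigr) = -\langle\mathcal{P}^*(\alpha_q),[Y,Z](q)\rangle - \langle\eta^\perp(q),[Y,Z](q)\rangle$. The first term is $-\langle\alpha_q,\mathcal{P}[Y,Z](q)\rangle = -(\mathcal{P}[Y,Z])^{\ell}(\alpha_q)$ by duality of $i_\D$ and $\mathcal{P}$ (using $\alpha_q\in\D^*$ so it pairs with $\mathcal{P}[Y,Z]\in\D$). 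The second term requires rewriting $\langle\eta^\perp(q),[Y,Z](q)\rangle$: since $Y,Z$ are sections of $\D$, the standard Cartan formula $d\eta^\perp(Y,Z) = Y(\langle\eta^\perp,Z\rangle) - Z(\langle\eta^\perp,Y\rangle) - \langle\eta^\perp,[Y,Z]\rangle$ would let me substitute, but there is a subtlety: $\eta^\perp$ annihilates $\D$ by construction (it lies in $\D^\circ$), so $\langle\eta^\perp,Y\rangle = \langle\eta^\perp,Z\rangle = 0$ and hence $d\eta^\perp(Y,Z) = -\langle\eta^\perp,[Y,Z]\rangle$. Therefore $-\langle\eta^\perp(q),[Y,Z](q)\rangle = d\eta^\perp(Y,Z)(q)$, which is exactly the basic function $d\eta^\perp(Y,Z)\circ\tau_{\D^*}$ appearing in the claim.

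The main obstacle — or rather the point requiring the most care — is precisely this last manipulation with $\eta^\perp$: one must be careful that $\eta^\perp$ is defined as the component of $\eta$ in $(\D^\perp)^* \cong \D^\circ$, so it genuinely annihilates $\D$, which is what collapses the Cartan formula to a single term and produces the $d\eta^\perp$ term with the correct sign. One should also double-check that $d\eta^\perp(Y,Z)$, a priori defined only using the $\D$-valued vector fields $Y,Z$, is well-defined as a function on $Q$ independent of extensions — but since $Y,Z$ are genuine global (or local) sections of $\D\subseteq TQ$, $d\eta^\perp(Y,Z)$ is simply the evaluation of the genuine $2$-form $d\eta^\perp$ on them, so no extension issue arises. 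I would also remark that the identity \eqref{def:constrained-bracket} is manifestly $\mathbb{R}$-bilinear, skew-symmetric, and Leibniz (this is needed for the claim that these three identities determine the bracket), but that is not part of proving the proposition itself. Finally, one verifies consistency by noting that linear and basic functions generate $C^\infty(\D^*)$ as an algebra, so no further cases are needed.
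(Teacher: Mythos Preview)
Your proposal is correct and follows essentially the same route as the paper: reduce to the standard formulas for the canonical Poisson bracket on $T^*Q$ applied to linear and basic functions, precompose with $\mathcal{P}^*+\eta^\perp\circ\tau_{\D^*}$, and then use $\eta^\perp(Y)=\eta^\perp(Z)=0$ together with the Cartan formula to convert $-\langle\eta^\perp,[Y,Z]\rangle$ into $d\eta^\perp(Y,Z)$. The paper's proof is just a more compressed version of exactly this argument.
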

\begin{proof}
We rely on the following well-known properties of the canonical Poisson bracket:
\begin{equation}\label{properties-can-Poisson-bracket}
\{W_1^\ell, W_2^\ell\} = -[W_1, W_2]^\ell, \qquad \{W_1^\ell, f \circ \pi_Q\} = -W_1(f) \circ \pi_Q, \qquad \{f \circ \pi_Q, g \circ \pi_Q\} = 0,
\end{equation}
valid for $W_1, W_2\in \frak{X}(Q)$ where $\pi_Q:T^*Q\to Q$ is the cotangent bundle projection.
These,   together with   \eqref{def:constrained-bracket}, yield
\begin{align*}
\{Y^\ell, Z^\ell\}_{\D^*} &= -(\mathcal{P}[Y, Z])^\ell + \eta^\perp[Y, Z] \circ \tau_{\D^*}, \\
\{Y^\ell, f \circ \tau_{\D^*}\}_{\D^*} &= -Y(f) \circ \tau_{\D^*}, \ \; \{f \circ \tau_{\D^*}, g \circ \tau_{\D^*}\}_{\D^*} = 0.
\end{align*}
Now, since 
$\eta^\perp(Y) = \eta^\perp(Z) = 0$, we have $\eta^\perp[Y, Z] = -d\eta^\perp(Y, Z)$,
and the result follows.
\end{proof}

\subsubsection*{Main properties }

The main properties of the  bracket on $\D^*$ defined above are collected in the following theorem.  
In its formulation, the {\em characteristic distribution of the bracket} is the distribution on $\D^*$ 
generated by all  almost-Hamiltonian vector fields.

\begin{thm}\label{nonholonomic-bracket}
The bracket $\{\cdot, \cdot\}_{\D^*}$ defined by \eqref{def:constrained-bracket} satisfies: 
\begin{enumerate}
    \item[(i)] it is ${\mathbb R}$-bilinear,  skew-symmetric and satisfies the Leibniz rule in each argument;
    \item[(ii)] it is a fiberwise affine bracket on ${\mathcal D}^*$ which is fiberwise linear  if and only if
    \[
    d\eta^{\perp} (Y, Z)=0\; , \quad \forall \,Y, Z \in {\Gamma}({\mathcal D});
    \]
      \item[(iii)] its characteristic distribution coincides with $\E$ (defined by \eqref{eq:distE});
    \item[(iv)] the Jacobi identity is satisfied if and only if the constraint distribution ${\mathcal D}$ is integrable
    (i.e. if and only if the constraints are holonomic).
\end{enumerate}
\end{thm}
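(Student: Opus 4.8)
The plan is to verify the four items directly from the defining formula \eqref{def:constrained-bracket}, leaning on Proposition \ref{properties-constrained-bracket} for the concrete computations. Item (i) is immediate: $\{\cdot,\cdot\}_{\D^*}$ is obtained by precomposing the canonical Poisson bracket (which is $\mathbb R$-bilinear, skew-symmetric and Leibniz) with the bundle morphisms $i_\D^*$ and $\Pp^*+\eta^\perp\circ\tau_{\D^*}$; bilinearity and skew-symmetry pass through trivially, and the Leibniz rule survives because $\varphi\mapsto \varphi\circ i_\D^*$ is an algebra homomorphism, so no extra argument is needed beyond remarking this. For item (ii) I would use Proposition \ref{properties-constrained-bracket}: a function on $\D^*$ is fiberwise affine iff it is a sum of a fiberwise linear function $Y^\ell$ and a basic function $f\circ\tau_{\D^*}$, and the three bracket formulas show that the bracket of any two such functions is again of this form, with the only non-linear (purely basic) contribution being the term $d\eta^\perp(Y,Z)\circ\tau_{\D^*}$. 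Hence the bracket is fiberwise linear exactly when this term vanishes for all $Y,Z\in\Gamma(\D)$, which is the stated condition.

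For item (iii) the strategy is to compute the characteristic distribution in the adapted coordinates $(q^i,p_a)$. Using Proposition \ref{properties-constrained-bracket} (or equivalently reading off the brackets of the coordinate functions $\{q^i,q^j\}_{\D^*}=0$, $\{q^i,p_a\}_{\D^*}=\rho^i_a$, $\{p_a,p_b\}_{\D^*}=-C^d_{ab}p_d-C^\alpha_{ab}\eta_\alpha$), the almost-Hamiltonian vector field $X^{\D^*}_\varphi$ associated to $\varphi$ has the form $\rho^i_a\frac{\partial\varphi}{\partial p_a}\frac{\partial}{\partial q^i}+(\text{stuff})\frac{\partial}{\partial p_a}$; taking $\varphi=q^i$ and $\varphi=p_a$ shows the span of all such vector fields is exactly $\mathrm{span}\{\rho^i_a\frac{\partial}{\partial q^i},\frac{\partial}{\partial p_a}\}$, which is precisely the local description of $\E$ recorded just after \eqref{Local-expression-terms-H-c}. (The invertibility of the matrix $(\rho^j_i)$ and the constant rank $r$ of $\D$ guarantee these $2r$ vector fields are pointwise independent.) So the characteristic distribution equals $\E$.

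Item (iv) is the heart of the matter and where the real work lies. The Jacobiator $J_{\D^*}(\varphi,\psi,\chi):=\{\{\varphi,\psi\}_{\D^*},\chi\}_{\D^*}+\text{cyclic}$ is a tensorial object (the Leibniz rule forces the Jacobiator to be a derivation-free trilinear expression in the differentials), so it suffices to evaluate it on linear and basic functions. Using Proposition \ref{properties-constrained-bracket} one computes $J_{\D^*}$ on three linear functions $Y^\ell,Z^\ell,W^\ell$ with $Y,Z,W\in\Gamma(\D)$: the canonical bracket satisfies Jacobi, so the obstruction comes entirely from the failure of $\Pp$ and the presence of $\eta^\perp$. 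Expanding $\{\{Y^\ell,Z^\ell\}_{\D^*},W^\ell\}_{\D^*}$ and cyclic sums, the terms $-(\Pp[\Pp[Y,Z],W])^\ell$ plus cyclic and the $d\eta^\perp$ contributions should collapse — using the Jacobi identity for the Lie bracket on $Q$ and $d^2=0$ (i.e. $d(d\eta^\perp)=0$, which makes the $\eta^\perp$ contributions cancel in the cyclic sum) — to an expression proportional to $\big(\Pp[\,(I-\Pp)[Y,Z],W\,]\big)^\ell+\text{cyclic}$, which is governed by the second fundamental form / the failure of $\D$ to be involutive. The main obstacle will be organizing this computation cleanly enough to conclude that the Jacobiator vanishes identically for all $Y,Z,W\in\Gamma(\D)$ if and only if $[\Gamma(\D),\Gamma(\D)]\subseteq\Gamma(\D)$, i.e. iff $\D$ is integrable; one direction (integrable $\Rightarrow$ Jacobi) is a matter of checking that all correction terms drop out, while the converse is seen by exhibiting, when $\D$ is non-integrable, a triple of sections and a point where the Jacobiator is nonzero (picking $Y,Z$ with $(I-\Pp)[Y,Z]\neq 0$ and $W$ suitably). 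I would also remark that the mixed cases (two linear and one basic, etc.) produce no new obstruction, since those Jacobiator components reduce by the same bracket rules to expressions that vanish whenever $\D$ is integrable and are already controlled by the all-linear case.
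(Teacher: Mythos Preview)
Your treatment of (i)--(iii) matches the paper's closely (the paper argues (iii) intrinsically via projectability of $X^{\D^*}_{Y^\ell}$ and verticality of $X^{\D^*}_{f\circ\tau_{\D^*}}$ rather than in coordinates, but the content is the same).

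For (iv) your route differs genuinely from the paper's. The paper avoids any Jacobiator computation for the forward direction: it simply invokes that the characteristic distribution of a Poisson bracket is integrable (its leaves are the symplectic leaves), combines this with (iii), and uses the stated fact that $\E$ is integrable iff $\D$ is. This is much shorter than your direct expansion of $J(Y^\ell,Z^\ell,W^\ell)$. Your approach is valid in principle, but two remarks are worth making. First, your claim that the $d\eta^\perp$ contributions ``cancel in the cyclic sum'' via $d^2=0$ is not quite right as stated: the cyclic identity for $d(d\eta^\perp)$ involves $[Y,Z]$, whereas your terms carry $\mathcal P[Y,Z]$, so a residual piece proportional to $(I-\mathcal P)[Y,Z]$ survives in the basic part too; this is consistent with your conclusion but needs to be tracked. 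Second, and more to the point, you undersell the mixed case: a direct computation gives
\[
J(Y^\ell,Z^\ell,f\circ\tau_{\D^*})\;=\;-\big((I-\mathcal P)[Y,Z]\big)(f)\circ\tau_{\D^*},
\]
which is the cleanest place to see the obstruction: if $\D$ is non-integrable pick $Y,Z$ with $(I-\mathcal P)[Y,Z](q_0)\neq 0$ and any $f$ whose differential at $q_0$ pairs nontrivially with this vector. So the forward direction follows immediately from the mixed Jacobiator rather than from the all-linear one, contrary to your last sentence. For the converse (integrable $\Rightarrow$ Jacobi) both approaches coincide: one notes $\mathcal P[Y,Z]=[Y,Z]$ forces $d\eta^\perp(Y,Z)=0$, whence the bracket reduces to the standard linear one and Jacobi is immediate.
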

\begin{proof}
 $(i)$ is a direct consequence of the definition of the bracket as was mentioned above.

$(ii)$ If $\varphi, \psi \in C^\infty(\D^*)$ are fiberwise affine functions then they can be written as  
\[
\varphi = Y^\ell + f \circ \tau_{\D^*} \; \; \mbox{ and } \; \; \psi = Z^\ell + g \circ \tau_{\D^*},
\]
for $Y, Z \in \Gamma(\D)$ and $f, g \in C^\infty(Q)$. 
Hence, by Proposition \ref{properties-constrained-bracket}, we deduce that
\[
\{\varphi, \psi\}_{\D^*} = -(\mathcal{P}[Y, Z])^\ell + (d\eta^\perp(Y, Z) + Z(f) - Y(g)) \circ \tau_{\D^*},
\]
showing that $\{\varphi, \psi\}_{\D^*}$ is a fiberwise affine function on $\D^*$.

Moreover, for $Y, Z \in \Gamma(\D)$ we have
\[
\{Y^\ell, Z^\ell\}_{\D^*} = -(\mathcal{P}[Y, Z])^\ell + d\eta^\perp(Y, Z) \circ \tau_{\D^*},
\]
which is a fiberwise linear function on $\D^*$  if and only if $d\eta^\perp(Y,Z)=0$.
%

$(iii)$ The characteristic distribution  is generated by the almost-Hamiltonian vector fields of linear and basic functions
on $\D^*$.  Consider first the linear function $Y^\ell \in C^{\infty}({\mathcal D}^*)$ corresponding to 
a section $Y\in \Gamma({\mathcal D})$. \ Using Proposition \ref{properties-constrained-bracket},  we conclude that the
 its almost-Hamiltonian vector field  
 satisfies
\[
X^{\D^*}_{Y^\ell}(f \circ \tau_{{\mathcal D}^*})=\{ f\circ \tau_{{\mathcal D}^*}, Y^\ell\}_{{\mathcal D}^*}=Y(f)\circ \tau_{{\mathcal D}^*},
\]
for all $f\in C^{\infty}(Q)$
showing that  $X^{\D^*}_{Y^\ell} \in \frak{X}(\D^*)$ is $\tau_{{\mathcal D}^*}$-projectable over $Y\in \Gamma({\mathcal D})$. In particular this
 implies
that  $X^{\D^*}_{Y^\ell}$ takes values on $\E$.
Using again Proposition \ref{properties-constrained-bracket}, we find that the almost-Hamiltonian vector field $X^{\D^*}_{(g\circ \tau_{{\mathcal D}^*})}$
associated to the basic  function determined by  $g\in C^{\infty}(Q)$,
satisfies
\[
X^{\D^*}_{(g\circ \tau_{{\mathcal D}^*})}(f\circ \tau_{{\mathcal D}^*})=0, \; \; \forall f \in C^\infty(Q),
\]
showing that $X^{\D^*}_{(g\circ \tau_{{\mathcal D}^*})}$ is $\tau_{{\mathcal D}^*}$-vertical, and therefore it also takes values on $\E$.
This shows that the characteristic distribution is contained in $\E$. Using again Proposition \ref{properties-constrained-bracket} 
it is not difficult to argue that the characteristic distribution has rank 
$2\, \mathrm{rank}(\D)$ and therefore must coincide with $\E$. 

$(iv)$ A necessary condition for the validity of the Jacobi identity  is the integrability of the characteristic distribution. As was indicated right
after its definition,
$\E$ is integrable if and only if $\D$ is integrable, which implies that the bracket cannot satisfy the Jacobi identity if $\D$ is non-integrable.
%
%
Conversely, if we assume that ${\mathcal D}$ is integrable then we have $\mathcal{P}[Y,Z]=[Y,Z]$ for all $Y,Z\in \Gamma ({\mathcal D})$. This 
 implies that $d\eta^{\perp} (Y,Z)=0$, and  in particular, the first identity of 
Proposition \ref{properties-constrained-bracket} becomes $\{Y^\ell, Z^\ell\}_{\D^*} = -([Y, Z])^\ell$. The Jacobi identity can be easily proved using this
and the other formulas in Proposition \ref{properties-constrained-bracket}. 
\end{proof}

\subsection{Almost-Poisson description of the dynamics}
We now prove that the nonholonomic dynamics may be described in almost-Hamiltonian form with respect to the 
bracket $\{\cdot, \cdot\}_{\D^*}$ and the constrained Hamiltonian  $H_c\in C^\infty(\D^*)$. 
%
\begin{thm}
\label{th:HamForm}
The vector field $\XHc\in \mathfrak{X}(\D^*)$ describing the nonholonomic  dynamics satisfies
\[
\XHc(\varphi) = \{\varphi, H_c\}_{\D^*}, \; \; \mbox{ for all } \varphi \in C^\infty(\D^*).
\]
In other words, $\XHc$ coincides with the almost-Hamiltonian vector field $X^{\D^*}_{H_c}$ associated to $H_c\in C^\infty(\D^*)$.
\end{thm}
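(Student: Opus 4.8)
The plan is to verify the identity by combining Theorem \ref{teo1} with the definition \eqref{def:constrained-bracket} of the bracket. The key observation is that both $\XHc$ and the almost-Hamiltonian vector field $X^{\D^*}_{H_c}$ are determined by how they act on functions, so it suffices to check the equality on a class of functions that separates points and tangent directions on $\D^*$ — for instance, on fiberwise affine functions $\varphi = Y^\ell + f\circ\tau_{\D^*}$ with $Y\in\Gamma(\D)$ and $f\in C^\infty(Q)$ (by Leibniz and $\R$-bilinearity, and since $\D^*$ is a vector bundle, this is enough to pin down a derivation). One could instead argue more directly at the level of vector fields using the geometry already set up.

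First I would take $\beta_q\in\FL(\D_q)$ and write $\alpha_q = i_\D^*(\beta_q)\in\D^*$; every point of $\D^*$ arises this way since $\FL(\D) = \D^*+\eta$ surjects onto $\D^*$ under $i_\D^*$. By Theorem \ref{teo1}, $\XHc(\alpha_q) = (T_{\beta_q}i_\D^*)(X_H(\beta_q))$. Then for $\varphi\in C^\infty(\D^*)$ I compute
\[
\XHc(\varphi)(\alpha_q) = d\varphi(\alpha_q)\big((T_{\beta_q}i_\D^*)X_H(\beta_q)\big) = d(\varphi\circ i_\D^*)(\beta_q)\big(X_H(\beta_q)\big) = \{\varphi\circ i_\D^*, H\}(\beta_q),
\]
using that $X_H$ is Hamiltonian for the canonical bracket on $T^*Q$. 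By Lemma \ref{l:equaldiff}, $dH(\beta_q) = d(H_c\circ i_\D^*)(\beta_q)$, so $X_H(\beta_q) = X_{H_c\circ i_\D^*}(\beta_q)$ and the last expression equals $\{\varphi\circ i_\D^*, H_c\circ i_\D^*\}(\beta_q)$. It then remains to recognize this as $\{\varphi, H_c\}_{\D^*}(\alpha_q)$. Comparing with \eqref{def:constrained-bracket}, this requires showing that $\beta_q = (\mathcal{P}^* + \eta^\perp\circ\tau_{\D^*})(\alpha_q)$, i.e. that the bundle map $\mathcal{P}^* + \eta^\perp\circ\tau_{\D^*}:\D^*\to T^*Q$ is exactly a right inverse of $i_\D^*$ whose image is $\FL(\D)$. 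This is where I expect the main work to lie, though it is not deep: using the decomposition \eqref{eq:T*Qdecomp}, $\mathcal{P}^*(\alpha_q)$ is the component of $\beta_q$ in $\D^*$, and $\eta^\perp(q)$ is precisely the $(\D^\perp)^*$-component of $\eta(q)$, which by the coordinate description of $\FL(\D)$ in the proof of Lemma \ref{l:equaldiff} ($p_\gamma = \eta_\gamma$) is exactly the $(\D^\perp)^*$-component of $\beta_q$. Hence $(\mathcal{P}^* + \eta^\perp\circ\tau_{\D^*})(i_\D^*(\beta_q)) = \beta_q$.

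Putting these together yields $\XHc(\varphi)(\alpha_q) = \{\varphi\circ i_\D^*, H_c\circ i_\D^*\}\big((\mathcal{P}^*+\eta^\perp\circ\tau_{\D^*})(\alpha_q)\big) = \{\varphi, H_c\}_{\D^*}(\alpha_q)$ for every $\alpha_q\in\D^*$, which is the claim; the identification $\XHc = X^{\D^*}_{H_c}$ follows since the bracket is skew-symmetric and $X^{\D^*}_{H_c}$ is by definition the derivation $\psi\mapsto\{\psi,H_c\}_{\D^*}$. The only genuinely delicate point is the right-inverse property of $\mathcal{P}^*+\eta^\perp\circ\tau_{\D^*}$, i.e. that its image coincides with $\FL(\D)=\D^*+\eta$; everything else is a formal chain of the already-established facts (Theorem \ref{teo1}, Lemma \ref{l:equaldiff}, and the Hamiltonian property of $X_H$). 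Alternatively, one can bypass the abstract argument entirely and simply verify the identity in the local coordinates $(q^i,p_a)$: the local expression \eqref{Local-expression-terms-H-c} for $\XHc$ matches term by term the bracket relations $\{q^i,p_a\}_{\D^*}=\rho^i_a$, $\{p_a,p_b\}_{\D^*}=-C^d_{ab}p_d - C^\gamma_{ab}\eta_\gamma$ computed from Proposition \ref{properties-constrained-bracket}, contracted against $dH_c$.
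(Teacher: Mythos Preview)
Your proposal is correct and follows essentially the same route as the paper: both arguments reduce to the observation that the affine bundle map $\mathcal{P}^*+\eta^\perp\circ\tau_{\D^*}:\D^*\to T^*Q$ is a section of $i_\D^*$ with image $\mathbb{F}L(\D)$, after which Theorem~\ref{teo1} together with the definition \eqref{def:constrained-bracket} of the bracket yields the result. The only cosmetic difference is the starting point---the paper begins with $\alpha_q\in\D^*$ and defines $\beta_q:=(\mathcal{P}^*+\eta^\perp\circ\tau_{\D^*})(\alpha_q)$, then checks $\beta_q\in\mathbb{F}L(\D)$ and $i_\D^*(\beta_q)=\alpha_q$, whereas you start from $\beta_q\in\mathbb{F}L(\D)$ and recover $\beta_q$ from $\alpha_q=i_\D^*(\beta_q)$; both verify the same right-inverse property.
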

\begin{proof}
Let $\alpha_q \in \D_q^*$ and write $\beta_q:=(\mathcal{P}^* + \eta^\perp \circ \tau_{\D^*}) (\alpha_q)\in T_q^*Q$. On the one hand, using that $i_\D^*\circ \Pp^*$ is the identity
on $\D^*$ and $i_\D^*(\eta^\perp)=0$, we have $i_\D^*(\beta_q)=\alpha_q$. On the other hand, $\beta_q\in \mathbb{F}L(\D)_q$.
Indeed, from (\ref{Expression-Legendre-transformation}), we deduce that  $\beta_q =\mathbb{F}L( \sharp_g(\Pp^*(\alpha_q) - \eta^\parallel(q)))$ and 
since $\Pp^*(\alpha_q)-\eta^\parallel(q)\in (D^\perp_q)^\circ$ and $\flat_g(\D_q)=(\D_q^\perp)^\circ$,
one can check that 
$\sharp_g(\Pp^*(\alpha_q) - \eta^\parallel(q))\in \D_q$.

Therefore,  using Theorem \ref{teo1}  we have
\begin{equation*}
\begin{split}
\XHc(\varphi)(\alpha_q) &= \langle d\varphi (\alpha_q)  \, , \,  \XHc(\alpha_q)\rangle \\
&= \langle d\varphi (\alpha_q)  \, , \, (T_{\beta_q} i_{\D}^*)( X_{H_c \circ i_\D^*} (\beta_q))\rangle \\
&=  \langle T_{\beta_q}^* i_{\D}^*(d\varphi (\alpha_q)) \, , \, X_{H_c \circ i_\D^*} (\beta_q)\rangle \\
&= \langle d(\varphi \circ i_{\D}^*) (\beta_q) \, , \,  X_{H_c \circ i_\D^*} (\beta_q)\rangle \\
&= X_{H_c \circ i_\D^*}(\varphi \circ i_{\D}^*)(\beta_q).
\end{split}
\end{equation*}
Now use the fact that, by its definition, the Hamiltonian vector field $X_{F}\in \mathfrak{X}(T^*Q)$ of a function $F\in C^\infty(T^*Q)$ satisfies $X_{F}(G)=\{G,F\}$ for all $G\in C^\infty(T^*Q)$, to get
\begin{equation*}
\begin{split}
\XHc(\varphi)(\alpha_q) &= \{ \varphi \circ i_{\D}^*  ,  H_c \circ i_\D^* \}(\beta_q) \\
&=  \{ \varphi \circ i_{\D}^*  ,  H_c \circ i_\D^* \} \circ (\mathcal{P}^* + \eta^\perp \circ \tau_{\D^*}) (\alpha_q) \\
&=  \{ \varphi   ,   H_c  \}_{\D^*}(\alpha_q).
\end{split}
\end{equation*}
\end{proof}
\subsection{Local expressions}
As is well known, the canonical Poisson bracket $\{\cdot, \cdot\}$ on $T^*Q$ defines a  (canonical) Poisson $2$-vector $\Pi$ on $T^*Q$ by the relation $\{F , G\} = \Pi (dF, dG)$, for  $F, G \in C^\infty(T^*Q)$.
As may be proved, using for instance  (\ref{properties-can-Poisson-bracket}) (see also \cite{LeMaMa}), the local expression for $\Pi$ in our coordinates is
\[
\Pi = \rho^i_j\frac{\partial}{\partial q^i}\wedge \frac{\partial}{\partial p_j}
-\frac{1}{2}C^k_{ij}p_k\frac{\partial}{\partial p_i}\wedge \frac{\partial}{\partial p_j},
\]
or, in matrix form,
\[
\Pi =
\left(
\begin{array}{cc}
0&\rho^i_j\\[6pt]
-\rho^i_j&-C^k_{ij}p_k
\end{array}
\right).
\]
The Hamiltonian vector field $X_H$ of $H$, given previously by \eqref{eq:HamVF-general}, satisfies $X_H = \Pi(\cdot, d H)$, which corresponds to the following vector form of Hamilton's equations:
\[
\left(
\begin{array}{r}
\dot{q}^i\\
\dot{p}_j
\end{array}
\right)=
\left(
\begin{array}{cc}
0&\rho^j_i\\[6pt]
-\rho^j_i&-C^k_{ij}p_k
\end{array}
\right)
\begin{pmatrix}
\dfrac{\partial H}{\partial q^i}\\[8pt]
\dfrac{\partial H}{\partial p_j}
\end{pmatrix}.
\]

On  the other hand,   using Proposition \ref{properties-constrained-bracket} we find the following 
expressions for the nonholonomic brackets of our 
adapted coordinates $(q^i,p_a)$: 
\begin{equation}\label{local-nonholonomic-bracket}
\begin{split}
\{p_a, p_b\}_{{\mathcal D}^*}&=-C^c_{ab}(q)p_c-C^{\gamma}_{ab}(q)\eta_{\gamma}(q),\\
\{p_a, q^i\}_{{\mathcal D}^*}&=-\rho^i_a(q),\\
\{q^i, q^j\}_{{\mathcal D}^*}&=0.
\end{split}
\end{equation}
The corresponding  $2$-vector $\Pi_{{\mathcal D}^*}$ on $\D^*$ is hence given by 
\[
\Pi_{{\mathcal D}^*}=\rho^j_a\frac{\partial}{\partial q^j}\wedge \frac{\partial}{\partial p_a}
-\frac{1}{2}\left(C^c_{ab}p_c+
C^{\gamma}_{ab}\eta_{\gamma}\right)\frac{\partial}{\partial p_a}\wedge \frac{\partial}{\partial p_b}.
\]
Or, in matrix form
\[
\Pi_{{\mathcal D}^*}=
\left(
\begin{array}{cc}
\{q^i, q^j\}_{{\mathcal D}^*}&\{q^i, p_b\}_{{\mathcal D}^*}\\
\{p_a, q^j\}_{{\mathcal D}^*}&\{p_a, p_b\}_{{\mathcal D}^*}
\end{array}
\right)=
\left(
\begin{array}{cc}
0&\rho^i_b\\
-\rho^j_a&-C^c_{ab}p_c
-C^{\gamma}_{ab}\eta_{\gamma}
\end{array}
\right).
\]
Theorem \ref{th:HamForm} is equivalent to saying that the vector field  $\XHc$ describing the nonholonomic dynamics on $\D^*$ 
satisfies $\XHc = \Pi_{{\mathcal D}^*}(\cdot, d H_c)$. This is turn implies that    $\XHc$  determines the equations of motion
\[
\left(
\begin{array}{r}
\dot{q}^i\\
\dot{p}_a
\end{array}
\right)=
\left(
\begin{array}{cc}
0&\rho^i_b\\
-\rho^j_a&-C^c_{ab}p_c
-C^{\gamma}_{ab}\eta_{\gamma}
\end{array}
\right)
\begin{pmatrix}
\dfrac{\partial H_c}{\partial q^j}\\[10pt]
\dfrac{\partial H_c}{\partial p_b}
\end{pmatrix},
\]
which is the matrix form of  equations  \eqref{nonholonomic-equations2}.

\subsection{Gauge transformations}
\label{ss:gaugeT}

As first recognised in \cite{LGN-thesis, GN10}, in order to achieve Hamiltonisation of certain nonholonomic systems
possessing symmetries and a specific kind of first integrals, it is convenient to consider modifications of the
nonholonomic bracket to guarantee that such integrals descend as Casimir functions to the reduced space.
 These modifications were termed {\em gauge transformations} in \cite{BaGa} since they can be interpreted within the 
 framework of \cite{SeveraWeinstein}. We now describe how such gauge transformations can be introduced
 in our treatment. Such transformations are non-trivial when the rank $r$ of $\D$ is $\geq 3$.
 
 \subsubsection*{Definition and main properties}
 
Our presentation is inspired by  \cite{GNMo18} where  the deformation  of the bracket
 is performed using a 3-form on $Q$. In this  reference, an explicit construction of the 3-form is given 
 to guarantee that the so-called {\em invariant gauge momenta} become Casimir functions 
 of the modified bracket after reduction.
  For our purposes, we suppose that we are given  a 3-form  $\Lambda$ on  $Q$ which we assume to vanish upon 
 contraction of vectors in $\D^\perp$.
 The {\em gauge transformation of the bracket $\{\cdot, \cdot \}_{\D^*}$ by $\Lambda$} is the bracket $\{\cdot, \cdot \}^\Lambda_{\D^*}$ on $\D^*$ which is
 characterised  by 
 its value on linear and basic functions on $\D^*$ as follows:
 \begin{equation}
 \label{eq:def-nh-bracket-gauge}
\begin{split}
&\{Y^\ell, Z^\ell\}^\Lambda_{\D^*} = -(\mathcal{P}([Y, Z] + \sharp_g(i_{Y\wedge Z}\Lambda) )^\ell +( d\eta^\perp(Y, Z)
- \Lambda (Y,Z,\sharp_g (\eta^\parallel)))
 \circ \tau_{\D^*}, \\ 
 &\{Y^\ell, f \circ \tau_{\D^*}\}^\Lambda_{\D^*} = -Y(f) \circ \tau_{\D^*}, \\
 & \{f \circ \tau_{\D^*}, g \circ \tau_{\D^*}\}^\Lambda_{\D^*} = 0.
\end{split}
\end{equation}
As usual, in the above expressions $Y,Z\in \Gamma(\D)$ and $f,g\in C^\infty(Q)$. Moreover, we employ the convention
that $\langle i_{Y\wedge Z}\Lambda, W\rangle =\langle i_Z i_{Y}\Lambda, W\rangle =\Lambda(Y,Z,W)$ for $W\in \Gamma(\D)$.

\begin{remark}
\label{rmk:Lambda}
The assumption that $\Lambda$ vanishes when contracted with vectors in $\D^\perp$ is done  for concreteness and ease of presentation
since it is only the contraction of $\Lambda$ with sections of $\D$ that is relevant.
One  could alternatively define the gauge transformation by a general 3-form on $Q$ by the same formula \eqref{eq:def-nh-bracket-gauge}. With this latter approach,
the 3-forms $\Lambda_1, \Lambda_2 \in \Omega^3(Q)$  define the same gauge transformed bracket, i.e. $\{\cdot , \cdot \}^{\Lambda_1}_{\D^*} =\{\cdot , \cdot \}^{\Lambda_2}_{\D^*}$, 
 if and only if their restrictions to  $\D$ coincide.
\end{remark}


The $\R$-linearity and  skew-symmetry of the gauged transformed bracket $\{\cdot , \cdot \}^\Lambda_{\D^*}$ readily follows
from \eqref{eq:def-nh-bracket-gauge}. The validity of the Leibniz rule can be established  using the
 local expressions \eqref{eq:LocalExpressionsGauge} given below to prove the existence of a bivector on $\D^*$
 which is uniquely characterised by \eqref{eq:def-nh-bracket-gauge}.\footnote{
Alternatively, one could define the semi-basic 2-form $\Xi$ on $\D^*$ as the contraction of 
the 3-form on $\D^*$,  $(\tau_{\D^*})^*\Lambda$, with $\XHc$ (or any other second order vector field on $\D^*$) 
and proceed in analogy with \cite{GNMo18} (or other equivalent formulations which rely on 2-forms
to deform the bracket \cite{BaGa,Bal2014,Bal2017,BalYap}).}

In analogy with Theorem \ref{nonholonomic-bracket} we have the following theorem whose proof is analogous and 
relies on \eqref{eq:def-nh-bracket-gauge}.
\begin{thm}\label{nonholonomic-bracketgauge}
The bracket $\{\cdot, \cdot\}^\Lambda_{\D^*}$  satisfies: 
\begin{enumerate}
    \item[(i)] it is ${\mathbb R}$-bilinear,  skew-symmetric and satisfies the Leibniz rule in each argument;
    \item[(ii)] it is a fiberwise affine bracket on ${\mathcal D}^*$ which is fiberwise linear  if and only if
    \[
    d\eta^\perp(Y, Z)
= \Lambda (Y,Z,\sharp_g (\eta^\parallel)\; , \quad \forall \,Y, Z \in {\Gamma}({\mathcal D});
    \]
      \item[(iii)] its characteristic distribution coincides with $\E$ (defined by \eqref{eq:distE});
    \item[(iv)] If ${\mathcal D}$ is non integrable then the Jacobi identity not  satisfied.
    \end{enumerate}
\end{thm}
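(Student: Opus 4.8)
The proof of Theorem~\ref{nonholonomic-bracketgauge} follows the same scheme as the proof of Theorem~\ref{nonholonomic-bracket}, so the plan is to indicate the modifications forced by the extra terms involving $\Lambda$ in \eqref{eq:def-nh-bracket-gauge}. First I would dispose of item~$(i)$: $\R$-bilinearity and skew-symmetry are immediate from \eqref{eq:def-nh-bracket-gauge} (note that $i_{Y\wedge Z}\Lambda$ is antisymmetric in $Y,Z$ and that $\Lambda(Y,Z,\sharp_g(\eta^\parallel))$ and $d\eta^\perp(Y,Z)$ are antisymmetric as well), while the Leibniz rule is exactly the assertion that the prescriptions in \eqref{eq:def-nh-bracket-gauge} come from a well-defined bivector $\Pi^\Lambda_{\D^*}$ on $\D^*$; this is guaranteed by the local expressions \eqref{eq:LocalExpressionsGauge} referenced in the text (equivalently, by the semibasic-$2$-form description in the footnote), so I would simply cite those.

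For item~$(ii)$, I would run the same argument as in Theorem~\ref{nonholonomic-bracket}$(ii)$: any two fiberwise affine functions on $\D^*$ can be written $\varphi=Y^\ell+f\circ\tau_{\D^*}$, $\psi=Z^\ell+g\circ\tau_{\D^*}$ with $Y,Z\in\Gamma(\D)$ and $f,g\in C^\infty(Q)$; combining the three identities of \eqref{eq:def-nh-bracket-gauge} gives
\[
\{\varphi,\psi\}^\Lambda_{\D^*}=-(\mathcal P([Y,Z]+\sharp_g(i_{Y\wedge Z}\Lambda)))^\ell+\bigl(d\eta^\perp(Y,Z)-\Lambda(Y,Z,\sharp_g(\eta^\parallel))+Z(f)-Y(g)\bigr)\circ\tau_{\D^*},
\]
which is again fiberwise affine, and it is fiberwise linear for all choices precisely when the basic part $d\eta^\perp(Y,Z)-\Lambda(Y,Z,\sharp_g(\eta^\parallel))$ vanishes for all $Y,Z\in\Gamma(\D)$, which is the stated condition. (Here I would note that, since $\mathcal P$ is the metric projector onto $\D$, the linear part $-(\mathcal P([Y,Z]+\sharp_g(i_{Y\wedge Z}\Lambda)))^\ell$ is genuinely a linear function on $\D^*$ because its argument lies in $\D$.)

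For item~$(iii)$, the key point is that the gauge term does not alter the projection behaviour of the almost-Hamiltonian vector fields: from the second and third identities in \eqref{eq:def-nh-bracket-gauge} --- which are \emph{identical} to the ungauged ones in Proposition~\ref{properties-constrained-bracket} --- one gets, exactly as in the proof of Theorem~\ref{nonholonomic-bracket}$(iii)$, that $X^{\D^*}_{Y^\ell}$ is $\tau_{\D^*}$-projectable onto $Y\in\Gamma(\D)$ and that $X^{\D^*}_{g\circ\tau_{\D^*}}$ is $\tau_{\D^*}$-vertical; hence every almost-Hamiltonian vector field takes values in $\E$, so the characteristic distribution is contained in $\E$. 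For the reverse inclusion I would again argue, using the first identity of \eqref{eq:def-nh-bracket-gauge}, that the vectors $\partial/\partial p_b$ and the projections of the $e_a$ are all reached, so the characteristic distribution has rank $2\,\mathrm{rank}(\D)$ and therefore equals $\E$. Finally, for item~$(iv)$, the necessary condition for Jacobi is integrability of the characteristic distribution; since by $(iii)$ this distribution is $\E$, and $\E$ is integrable iff $\D$ is integrable, the Jacobi identity fails whenever $\D$ is non-integrable. I expect the only genuinely non-routine point to be the justification that \eqref{eq:def-nh-bracket-gauge} really defines a bivector (the Leibniz rule), and this is already handled by the forthcoming local expressions \eqref{eq:LocalExpressionsGauge}, so I would lean on those rather than re-derive them.
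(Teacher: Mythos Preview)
Your proposal is correct and follows exactly the approach the paper intends: the paper's proof is simply that ``the proof is analogous [to Theorem~\ref{nonholonomic-bracket}] and relies on \eqref{eq:def-nh-bracket-gauge}'', and you have spelled out precisely those analogous steps, including the correct identification of the fiberwise-linear condition in $(ii)$ and the observation that items $(iii)$ and $(iv)$ go through unchanged because the last two identities in \eqref{eq:def-nh-bracket-gauge} coincide with the ungauged ones.
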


\subsubsection*{Local expressions}

Suppose that the 3-form $\Lambda$ is expressed in terms of our coordinate convention as:
  $$\Lambda= \frac{1}{6}\lambda_{abc}\, e^a\wedge e^b\wedge e^c,$$ where  the coefficients $\lambda_{abc}\in C^\infty(Q)$ are symmetric   with respect to even 
permutations of the subindices $a,b,c$, and skew-symmetric with respect to odd permutations.  According to the intrinsic conditions \eqref{eq:def-nh-bracket-gauge}, the presence of the 3-form $\Lambda$ modifies the local expressions \eqref{local-nonholonomic-bracket}
into
\begin{equation}
\label{eq:LocalExpressionsGauge}
\begin{split}
\{p_a, p_b\}^\Lambda_{\D^*} &= -C^c_{ab}(q)p_c-C^{\gamma}_{ab}(q)\eta_{\gamma}(q) +\lambda_{abc}(q)g^{cd}(q)(p_d-\eta_d(q)), \\
\{p_a, q^i\}^\Lambda_{{\mathcal D}^*}&=-\rho^i_a(q),\\
\{q^i, q^j\}^\Lambda_{{\mathcal D}^*}&=0.
\end{split}
\end{equation}

\begin{remark}
\label{rmk:affinegauge}
The bracket $ \{p_a, p_b\}^\Lambda_{\D^*} $ in \eqref{eq:LocalExpressionsGauge} differs from $ \{p_a, p_b\}_{\D^*} $ in
\eqref{local-nonholonomic-bracket} by the presence of a term that is affine  in the momenta $p_d$. This 
explains why the matrix $\mathcal{B}(q,p)$ in Eq. \eqref{eq:IntroPiB} in the introduction was assumed to have an affine  dependence on $p$.
\end{remark}

\subsubsection*{Description of the dynamics in terms of  gauge transformed brackets}

The following theorem shows that, for any 3-form $\Lambda$ as above, the resulting gauged-transformed bracket by $\Lambda$ describes the nonholonomic dynamics.
\begin{thm}
\label{th:HamFormGauge}
The vector field $\XHc\in \mathfrak{X}(\D^*)$ describing the nonholonomic  dynamics satisfies
\[
\XHc(\varphi) = \{\varphi, H_c\}^\Lambda_{\D^*}, \; \; \mbox{ for all } \varphi \in C^\infty(\D^*).
\]
\end{thm}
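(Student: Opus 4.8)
The plan is to reduce the statement to Theorem~\ref{th:HamForm}, which already gives $\XHc(\varphi)=\{\varphi,H_c\}_{\D^*}$ for every $\varphi\in C^\infty(\D^*)$. Hence it suffices to prove that the gauge transformation leaves the almost-Hamiltonian vector field of the \emph{particular} function $H_c$ unchanged, i.e.\ that
\[
\{\varphi,H_c\}^\Lambda_{\D^*}=\{\varphi,H_c\}_{\D^*}\qquad\text{for all }\varphi\in C^\infty(\D^*).
\]
For fixed $H_c$, both sides of this identity are derivations of $C^\infty(\D^*)$ in the argument $\varphi$ (by the Leibniz rule, item~(i) of Theorems~\ref{nonholonomic-bracket} and~\ref{nonholonomic-bracketgauge}), so each defines a vector field on $\D^*$; it is therefore enough to check the equality on the local generators $q^i$ and $p_a$ of $C^\infty(\D^*)$.

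The verification is then immediate from the local formulas. Comparing \eqref{eq:LocalExpressionsGauge} with \eqref{local-nonholonomic-bracket} one sees that $\{q^i,q^j\}^\Lambda_{\D^*}=\{q^i,q^j\}_{\D^*}=0$ and $\{q^i,p_b\}^\Lambda_{\D^*}=\{q^i,p_b\}_{\D^*}=-\rho^i_b$, whence $\{q^i,H_c\}^\Lambda_{\D^*}=\{q^i,H_c\}_{\D^*}$ with no computation. For $\varphi=p_a$ the only modified ingredient is $\{p_a,p_b\}^\Lambda_{\D^*}$, and one gets
\[
\{p_a,H_c\}^\Lambda_{\D^*}-\{p_a,H_c\}_{\D^*}
=\lambda_{abc}(q)\,g^{cd}(q)\bigl(p_d-\eta_d(q)\bigr)\,\frac{\partial H_c}{\partial p_b}.
\]
Using \eqref{eq:Legendre-local-constrained} (equivalently, differentiating \eqref{Local-Expression-H-c}) we have $\partial H_c/\partial p_b=g^{be}(p_e-\eta_e)=v^b$, so the right-hand side equals $\lambda_{abc}\,v^b v^c$, which vanishes because $\lambda_{abc}$ is totally antisymmetric (in particular $\lambda_{abc}=-\lambda_{acb}$) while $v^b v^c$ is symmetric in $b$ and $c$. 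This shows $\{p_a,H_c\}^\Lambda_{\D^*}=\{p_a,H_c\}_{\D^*}$ and completes the argument.

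I do not expect a genuine obstacle here: the only prerequisite is that \eqref{eq:def-nh-bracket-gauge} really determines a bracket satisfying the Leibniz rule (so that the reduction to the coordinate functions $q^i,p_a$ is legitimate), and this is exactly what the local expressions \eqref{eq:LocalExpressionsGauge} recorded before Theorem~\ref{nonholonomic-bracketgauge} provide. Conceptually, the vanishing of $\lambda_{abc}v^b v^c$ is the coordinate manifestation of the condition $\mathcal{B}(q,p)\,\partial H_c/\partial p=0$ discussed in the Introduction (see \eqref{eq:Bgradhiszero} and \eqref{eq:IntroPiB}): intrinsically, the correction to the bracket is built from $\mathcal{P}\sharp_g(i_{Y\wedge Z}\Lambda)$ and the scalar $\Lambda(Y,Z,\sharp_g(\eta^\parallel))$ appearing in \eqref{eq:def-nh-bracket-gauge}, and pairing these against the velocity vector encoded by $dH_c$ inserts two equal arguments into the alternating $3$-form $\Lambda$, so the contribution dies by skew-symmetry. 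One could phrase the whole proof in this intrinsic language, but the coordinate check above is already conclusive.
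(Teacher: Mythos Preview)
Your proof is correct and follows essentially the same approach as the paper: reduce to Theorem~\ref{th:HamForm}, invoke the Leibniz rule to check only the coordinate functions $q^i$ and $p_a$, and then kill the extra term $\lambda_{abc}\,g^{cd}(p_d-\eta_d)\,\partial H_c/\partial p_b$ using $\partial H_c/\partial p_b=g^{be}(p_e-\eta_e)$ together with the antisymmetry $\lambda_{abc}=-\lambda_{acb}$. The additional intrinsic commentary you provide at the end is not needed for the argument, but it is accurate.
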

\begin{proof}
In view of Theorem \ref{th:HamForm}, it suffices to check that $ \{\varphi, H_c\}^\Lambda_{\D^*}= \{\varphi, H_c\}_{\D^*}$ for 
all  $\varphi \in C^\infty(\D^*)$. By Leibniz rule, it suffices to check that 
this holds for $\varphi$ equal to the coordinate functions $q^j$, $p_a$.   Using the local expressions 
\eqref{local-nonholonomic-bracket} and \eqref{eq:LocalExpressionsGauge} we have
\begin{equation*}
\{q^i, H_c\}^\Lambda_{\D^*} = \frac{\partial H_c}{\partial q^j}\{q^i, q^j\}^\Lambda_{\D^*}+ \frac{\partial H_c}{\partial p_a}\{q^i, p_a\}^\Lambda_{\D^*}= \frac{\partial H_c}{\partial q^j}\{q^i, q^j\}_{\D^*}+ \frac{\partial H_c}{\partial p_a}\{q^i, p_a\}_{\D^*}=\{q^i, H_c\}_{\D^*}.
\end{equation*}
On the other hand
\begin{equation*}
\{p_a, H_c\}^\Lambda_{\D^*} = \frac{\partial H_c}{\partial q^j}\{p_a, q^j\}^\Lambda_{\D^*}+ \frac{\partial H_c}{\partial p_b}\{p_a, p_b\}^\Lambda_{\D^*}= \frac{\partial H_c}{\partial q^j}\{p_a, q^j\}_{\D^*}+ \frac{\partial H_c}{\partial p_b}\{p_a, p_b\}^\Lambda_{\D^*}.
\end{equation*}
Now we claim that 
\begin{equation*}
\frac{\partial H_c}{\partial p_b}\{p_a, p_b\}^\Lambda_{\D^*}=\frac{\partial H_c}{\partial p_b}\{p_a, p_b\}_{\D^*}.
\end{equation*}
Assuming that this identity holds we obtain $ \{p_a, H_c\}^\Lambda_{\D^*}=\{p_a, H_c\}_{\D^*}$ as required. To show that the above identity holds
we use the local expression \eqref{Local-Expression-H-c} for $H_c$ to obtain $\frac{\partial H_c}{\partial p_b}=g^{be}(p_e-\eta_e)$. Therefore
\begin{equation*}
\frac{\partial H_c}{\partial p_b}\{p_a, p_b\}^\Lambda_{\D^*}=\frac{\partial H_c}{\partial p_b} \{p_a, p_b\}_{\D^*}
+ \lambda_{abd} g^{cd}(p_c-\eta_c)g^{be}(p_e-\eta_e).
\end{equation*}
But $\lambda_{abd} g^{cd}(p_c-\eta_c)g^{be}(p_e-\eta_e)=0$ for all $a$ because of the skew-symmetry  $\lambda_{abd}=-\lambda_{adb}$.
\end{proof}

\section{Almost-Poisson reduction in the presence of \\symmetries}
\label{s:Symmetry}

We now consider the reduction of the system, following   the almost-Poisson formalism, in the presence of a symmetry Lie group and  
shall see that the reduced system can also be formulated in almost-Poisson form with respect to a reduced bracket possessing 
a similar structure.  To simplify the presentation we focus on the reduction of the bracket $\{\cdot , \cdot \}_{\D^*}$ (defined by \eqref{def:constrained-bracket})
and then briefly indicate how the reduction extends for the gauge transformed brackets introduced in Subsection \ref{ss:gaugeT}.
The interest in the construction 
is that, as occurs in some examples like the Suslov problem with gyrostat treated below, the reduced system may actually be truly Poisson (perhaps after a time reparametrisation) leading to a Hamiltonisation of the system.

Throughout this section
we assume that the Lie group $G$ acts freely and properly on the configuration space $Q$ and the lifted action to $TQ$ preserves the 
Lagrangian $L$ and the distribution $\D$. We also work with the identifications \eqref{eq:D^*id} which use the kinetic energy metric 
to interpret $\D^*$ and $(\D^\perp)^*$ as
complementary subbundles of $T^*Q$.

\subsection{Reduction}
\label{ss:Red}

A crucial observation is that the invariance of $L$ implies that the action  preserves each of the 3 terms comprising the Lagrangian.
Specifically, $G$ acts by isometries on $Q$, and the gyroscopic 1-form $\eta$ and the potential $V$ are invariant. This can be concluded 
by a comparison of their homogeneity degrees as functions on $TQ$.

\begin{thm}
\label{th:reduction}
The cotangent lifted action of $G$ on $T^*Q$ leaves $\D^*$ invariant and the restricted action on $\D^*$ is  free and proper and
 leaves the nonholonomic vector field  $\XHc$, the constrained Hamiltonian 
$H_c$ and the bracket $\{\cdot , \cdot \}_{\D^*}$ invariant.
\end{thm}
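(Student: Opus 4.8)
The plan is to reduce the whole statement to the $G$-equivariance of the structure maps introduced in Sections~\ref{sec:Ham} and~\ref{sec-almost}. First I would assemble what may be called the \emph{equivariance package}. Since the lifted action on $TQ$ preserves both the metric $g$ and the distribution $\D$, it preserves the $g$-orthogonal complement $\D^\perp$, hence the splitting $TQ=\D\oplus\D^\perp$; dualising (and using that the cotangent lift is conjugate to the tangent lift via $\flat_g$, because $g$ is $G$-invariant) this yields the $G$-invariance of the decomposition $T^*Q=\D^*\oplus(\D^\perp)^*$ together with the $G$-equivariance of $\flat_g$, $\sharp_g$, the orthogonal projector $\Pp\colon TQ\to\D$, its dual $\Pp^*\colon\D^*\hookrightarrow T^*Q$, and of $i_\D$, $i_\D^*$, $\tau_Q$, $\tau_{\D^*}$. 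In particular the cotangent lifted action leaves $\D^*=(\D^\perp)^\circ$ invariant, which is the first assertion, and since $\eta$ is $G$-invariant (as recalled before the theorem) and the splitting is $G$-invariant, the components $\eta^\parallel\in\Gamma(\D^*)$ and $\eta^\perp\in\Gamma((\D^\perp)^*)$ are $G$-invariant as well.

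Next I would dispatch the easy items. The restricted action on $\D^*$ is free because the cotangent lift of a free action is free (it covers the free action on $Q$) and $\D^*$ is invariant; it is proper because $G\times\D^*$ is exactly the preimage of $\D^*\times\D^*$ under the proper map $G\times T^*Q\to T^*Q\times T^*Q$, $(h,\alpha)\mapsto(h\cdot\alpha,\alpha)$, and the restriction of a proper map to the preimage of a subspace is proper. For the invariance of $H_c$: by Proposition~\ref{prop:ConstLegTransf}, $\FLc=i_\D^*\circ(\flat_g+\eta^\parallel\circ\tau_Q)\circ i_\D$ is a composition of $G$-equivariant bundle maps, hence $G$-equivariant, and so is $(\FLc)^{-1}$; the constrained energy $E_c=E_L\circ i_\D$ is $G$-invariant because $E_L=\Delta L-L$ is invariant ($\Delta$ is preserved by any tangent lift and $L$ is invariant by hypothesis) and $i_\D$ is equivariant; therefore $H_c=E_c\circ(\FLc)^{-1}$ is $G$-invariant.

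For the invariance of $\XHc$ I would argue that the nonholonomic vector field $\Gamma_{(L,\D)}\in\mathfrak{X}(\D)$ is $G$-invariant, since it is intrinsically characterised by \eqref{relation-nonolonomic-free} in terms of the $G$-invariant free Lagrangian field $\Gamma_L$ (invariant because $L$ is) and the $G$-invariant reaction subspace $\F=[\sharp_g(\D^\circ)]^{\mathbf V}$; pushing an invariant vector field forward by the equivariant diffeomorphism $\FLc$ gives the invariant field $\XHc={\FLc}_*\Gamma_{(L,\D)}$ (alternatively, this follows from Theorem~\ref{th:HamForm} once the bracket is shown invariant). Finally, for the invariance of the bracket: the canonical Poisson bracket on $T^*Q$ is invariant under cotangent lifts, since these are Poisson diffeomorphisms; and in \eqref{def:constrained-bracket} the bracket $\{\cdot,\cdot\}_{\D^*}$ is obtained by pulling the canonical bracket back along the equivariant map $i_\D^*$ and composing with the equivariant bundle map $\Pp^*+\eta^\perp\circ\tau_{\D^*}$. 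Chaining these equivariances yields $\{\varphi\circ\Phi_h,\psi\circ\Phi_h\}_{\D^*}=\{\varphi,\psi\}_{\D^*}\circ\Phi_h$ for every $h\in G$, where $\Phi_h$ denotes the action on $\D^*$, i.e. the bracket is $G$-invariant.

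I do not expect a serious obstacle: the content of the theorem is entirely the bookkeeping of equivariant maps. The one step that genuinely requires an argument, albeit a short one, is the equivariance package of the first paragraph --- the invariance of the $g$-orthogonal splitting of $TQ$, the induced splitting of $T^*Q$, and the separate invariance of $\eta^\parallel$ and $\eta^\perp$ --- since the invariance of $\FLc$, $H_c$, $\XHc$ and of the bracket are all formal consequences of it. One should also take minimal care with the indirect reasoning (already used in the remark preceding the theorem) that invariance of $L$ forces invariance of each homogeneous piece ${\mathcal K}_g$, $\eta^\ell$, $V\circ\tau_Q$ separately.
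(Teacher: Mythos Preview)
Your proposal is correct and follows essentially the same approach as the paper: establish the equivariance of $\flat_g$ from the isometric action, deduce the invariance of $\D^*$ and of the splitting, and then read off the invariance of $H_c$, $\XHc$ and $\{\cdot,\cdot\}_{\D^*}$ from the equivariance of the structure maps $i_\D^*$, $\Pp^*$, $\FLc$ and the invariance of $\eta^\perp$ and the canonical bracket. In fact you supply more detail than the paper, which explicitly omits the verifications after indicating the strategy; your careful handling of properness and of the separate invariance of $\eta^\parallel$, $\eta^\perp$ is entirely in the spirit of what the paper leaves to the reader.
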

\begin{proof}
Since $G$ acts by isometries,  the metric isomorphism $\flat_g:TQ\to T^*Q$ is equivariant with 
respect to the tangent and cotangent lifted actions of $G$  on $TQ$ and $T^*Q$.
Thus, the invariance of $\D$ implies the invariance of $\D^*=\flat_g(\D)$. Now, the cotangent lifted action of $G$ on 
$T^*Q$ is free and proper, so considering that $\D^*\subseteq T^*Q$ is a submanifold, the same is true about its
restriction to $\D^*$. 

The proof of invariance of the different objects on $\D^*$ proceeds by using the invariance of the metric, of the 
gyroscopic 1-form $\eta$ and of the potential $V$ to recognise the appropriate invariance/equivariance of their defining elements. For example,
the invariance of the  bracket $\{\cdot , \cdot \}_{\D^*}$ follows from the well-known invariance of the 
canonical Poisson bracket on $T^*Q$ by  cotangent lifted actions, and the equivariance of the bundle morphisms \eqref{eq:bundle-morphisms}
by the action of $G$ on $T^*Q$ and $\D^*$.
We omit the details.
\end{proof}

We now outline the reduction of the system based on the above theorem.
Since the action of $G$ on $\D^*$ is free and proper, the orbit space $\widehat{\D^*}:=\D^*/G$ is smooth and the 
orbit projection map $\pi_{\D^*}:\D^* \to \widehat{\D^*}$ is a surjective submersion. As a consequence, the bracket $\{\cdot , \cdot \}_{\D^*}$ descends
to a (smooth) {\em reduced bracket} of functions on  $\widehat{\D^*}$ characterised by
\begin{equation}
\label{eq:def-red-brack}
\{  \varphi ,  \psi \}_{\widehat{\D^*}}\circ \pi_{\D^*} = \{  \varphi\circ \pi_{\D^*} ,  \psi \circ \pi_{\D^*} \}_{\D^*}, \qquad   \varphi ,  \psi\in C^\infty(\widehat{\D^*}).
\end{equation}
 
 On the other hand, the invariance of $H_c\in C^\infty({\D^*})$ induces a {\em reduced constrained Hamiltonian} $\widehat{H_c}\in C^\infty(\widehat{\D^*})$
such that $\widehat{H_c}\circ \pi_{\D^*} =H_c$. Finally, the invariance of the nonholonomic vector field  $\XHc$ implies the existence of a {\em reduced
vector field} $\widehat{\XHc}\in \frak{X}(\widehat{\D^*})$ characterised by the property that 
 $\XHc$ and  $\widehat{\XHc}$ are  $\pi_{\D^*}$-related. Using Theorem \ref{th:HamForm} and the
expression  \eqref{eq:def-red-brack} for the reduced bracket it is easy to show that
\begin{equation}
\label{eq:def-red-brack-2}
\widehat{\XHc}(\varphi) =\{  \varphi ,  \widehat{H_c} \}_{\widehat{\D^*}}, \qquad \mbox{for all $\varphi \in  C^\infty(\widehat{\D^*})$}.
\end{equation}
In other words, the reduced dynamics may be formulated in terms of the reduced constrained Hamiltonian $\widehat{H_c}$ and the
reduced bracket on $\widehat{\D^*}$.

\subsection{Structure of the reduced space and the reduced bracket}
\label{ss:str-red-brack}

Recall that our phase space manifold $\D^*$ for the nonholonomic vector field  $\XHc$ is a rank $r$ vector bundle over the configuration
manifold $Q$. Given that  the $G$-action on $\D^*$ arises as a lift of the free and proper action of $G$ on  $Q$,
the reduced space $\widehat{\D^*}$ has  a vector bundle structure over the {\em shape space} manifold $\widehat Q:=Q/G$.
The rank of $\widehat{\D^*}$ is again $r$ and the 
corresponding bundle projection will be denoted by $\tau_{\widehat{\D^*}} : \widehat{\D^*}\to \widehat Q$. 

We  shall now give a characterisation of the reduced bracket on $\widehat{\D^*}$ in terms of linear and basic functions  in analogy 
with Proposition \ref{properties-constrained-bracket}.  For this it is useful to identify the space of smooth functions on $\widehat{\D^*}$ 
 with the space of smooth invariant  functions on $\D^*$.
In particular, linear functions on $\widehat{\D^*}$ are identified with  linear invariant functions on $\D^*$ which in turn correspond to 
elements of $\Gamma(\D)^G$, the space of $G$-invariant sections of $\D$. Basic functions 
on $\widehat{\D^*}$ are in correspondence with $C^\infty(\widehat Q)$ which is naturally identified with $C^\infty(Q)^G$, the
space of $G$-invariant functions on $Q$. On the other hand, we remark that if $Y\in \Gamma(\D)^G$ then using the $G$-invariance of $Y$ we deduce that $Y$ is $\tau_D$-projectable over a vector field on $\widehat Q$, which we denote by  $\widehat Y$.

\begin{prop}\label{properties-constrained-reduced-bracket}
If $Y,Z \in \Gamma(\D)^G$ and $\widehat{f},\widehat{g} \in C^\infty(\widehat Q)$ then:
\begin{equation*}
\begin{split}
&\{Y^\ell, Z^\ell\}_{\widehat\D^*} = -(\mathcal{P}[Y, Z])^\ell + d\eta^\perp(Y, Z) \circ \tau_{\D^*}, \\
 & \{Y^\ell, \widehat{f} \circ \tau_{\widehat{\D^*}}\}_{\widehat{\D^*}} = -\widehat{Y}(\widehat{f}) \circ \tau_{\widehat{\D^*}}, \\
 &  \{\widehat{f} \circ \tau_{\widehat{\D^*}}, \widehat{g} \circ \tau_{\widehat{\D^*}}\}_{\widehat{\D^*}} = 0.
\end{split}
\end{equation*}
\end{prop}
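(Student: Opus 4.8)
The plan is to derive Proposition \ref{properties-constrained-reduced-bracket} directly from Proposition \ref{properties-constrained-bracket} together with the defining relation \eqref{eq:def-red-brack} of the reduced bracket, exploiting the identifications between functions on $\widehat{\D^*}$ and $G$-invariant functions on $\D^*$. First I would observe that if $Y\in\Gamma(\D)^G$ then the linear function $Y^\ell\in C^\infty(\D^*)$ is $G$-invariant (since the $G$-action on $\D^*$ is the cotangent lift and $Y$ is equivariant), hence it descends to a function on $\widehat{\D^*}$ which, by a slight abuse of notation, we also call $Y^\ell$; more precisely $Y^\ell\circ\pi_{\D^*}$ equals the linear function on $\D^*$ associated to $Y$. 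Similarly, for $\widehat f\in C^\infty(\widehat Q)\cong C^\infty(Q)^G$ the basic function $\widehat f\circ\tau_{\widehat{\D^*}}$ pulls back under $\pi_{\D^*}$ to $f\circ\tau_{\D^*}$, where $f\in C^\infty(Q)^G$ is the function corresponding to $\widehat f$. This is just the commutativity of the square relating $\tau_{\D^*}$, $\tau_{\widehat{\D^*}}$, $\pi_{\D^*}$ and the orbit projection $Q\to\widehat Q$.

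Next I would apply \eqref{eq:def-red-brack} to each of the three cases. For the first, $\{Y^\ell,Z^\ell\}_{\widehat{\D^*}}\circ\pi_{\D^*}=\{Y^\ell\circ\pi_{\D^*},Z^\ell\circ\pi_{\D^*}\}_{\D^*}=\{Y^\ell,Z^\ell\}_{\D^*}$, which by Proposition \ref{properties-constrained-bracket} equals $-(\mathcal{P}[Y,Z])^\ell+d\eta^\perp(Y,Z)\circ\tau_{\D^*}$. I then need to check this right-hand side is itself the pullback under $\pi_{\D^*}$ of the claimed expression on $\widehat{\D^*}$: since $Y,Z\in\Gamma(\D)^G$ we have $[Y,Z]\in\Gamma(\D)^G$ (the Lie bracket of invariant vector fields is invariant) and $\mathcal{P}$ is $G$-equivariant (as $G$ acts by isometries, preserving the decomposition $TQ=\D\oplus\D^\perp$), so $\mathcal{P}[Y,Z]\in\Gamma(\D)^G$ and $(\mathcal{P}[Y,Z])^\ell$ descends; likewise $d\eta^\perp(Y,Z)\in C^\infty(Q)^G$ by invariance of $\eta$ (hence of $\eta^\perp$, using equivariance of the decomposition $T^*Q=\D^*\oplus(\D^\perp)^*$) and of $Y,Z$. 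Since $\pi_{\D^*}$ is a surjective submersion, equality of the pullbacks forces the stated identity on $\widehat{\D^*}$. For the second case the same computation gives $\{Y^\ell,\widehat f\circ\tau_{\widehat{\D^*}}\}_{\widehat{\D^*}}\circ\pi_{\D^*}=\{Y^\ell,f\circ\tau_{\D^*}\}_{\D^*}=-Y(f)\circ\tau_{\D^*}$, and I must note $Y(f)\in C^\infty(Q)^G$ with $Y(f)=\widehat Y(\widehat f)\circ(\text{orbit projection})$, which is exactly the statement that $Y$ is $\tau_\D$-projectable onto $\widehat Y$ (recorded in the text just before the proposition); again surjectivity of $\pi_{\D^*}$ concludes. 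The third case is immediate: $\{\widehat f\circ\tau_{\widehat{\D^*}},\widehat g\circ\tau_{\widehat{\D^*}}\}_{\widehat{\D^*}}\circ\pi_{\D^*}=\{f\circ\tau_{\D^*},g\circ\tau_{\D^*}\}_{\D^*}=0$.

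The bracket being well-defined on $\widehat{\D^*}$ by \eqref{eq:def-red-brack} is guaranteed by Theorem \ref{th:reduction} (invariance of $\{\cdot,\cdot\}_{\D^*}$), so I would not reprove it but merely cite it. The only genuinely substantive points — and the place I would be most careful — are the equivariance claims: that $\mathcal{P}$ commutes with the $G$-action and that $\eta^\parallel,\eta^\perp$ are separately $G$-invariant. Both follow from $G$ acting by isometries together with the invariance of $\eta$, as already remarked at the start of Section \ref{s:Symmetry} and in the proof of Theorem \ref{th:reduction}; I would simply invoke that discussion. Apart from this, the proof is a routine unwinding of definitions, so I expect no real obstacle — the argument is essentially "pull back along $\pi_{\D^*}$, apply Proposition \ref{properties-constrained-bracket}, and check each resulting term is again a pullback."
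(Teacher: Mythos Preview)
Your proposal is correct and follows essentially the same approach as the paper, which also derives the result ``almost tautologically'' from \eqref{eq:def-red-brack} and Proposition~\ref{properties-constrained-bracket}, noting the same delicate points that $\mathcal{P}[Y,Z]\in\Gamma(\D)^G$ and $d\eta^\perp(Y,Z)\in C^\infty(Q)^G$. One small slip: you write $[Y,Z]\in\Gamma(\D)^G$, but the bracket need not lie in $\D$ when $\D$ is non-integrable---you mean $[Y,Z]\in\mathfrak{X}(Q)^G$, which is all you need since you then apply the equivariant projector $\mathcal{P}$.
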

The proof follows almost tautologically  from our identifications and from \eqref{eq:def-red-brack} and Proposition \ref{properties-constrained-bracket}. 
The only delicate point is to interpret the elements on the right hand side of the  first equality as follows: $(\mathcal{P}[Y, Z])^\ell $ 
is  a linear function on $\widehat{ \D^*}$ since $\mathcal{P}[Y, Z]\in  \Gamma(\D)^G$. Similarly, $d\eta^\perp(Y, Z) \circ \tau_{\D^*}$ may be interpreted
as a basic function on $\widehat{ \D^*}$ since $ d\eta^\perp(Y, Z) \in C^\infty(Q)^G$.

Finally, in analogy with Theorem \ref{nonholonomic-bracket} we have the following result which is now obvious.
\begin{thm}\label{nonholonomic-bracket-reduced}
The reduced bracket $\{\cdot  , \cdot \}_{\widehat{\D^*}}$ satisfies: 
\begin{enumerate}
    \item[(i)] it is ${\mathbb R}$-bilinear, skew-symmetric and satisfies the Leibniz rule in each argument;
    \item[(ii)] it     is a fiberwise affine bracket on $\widehat{{\mathcal D}^*}$ which  is fiberwise linear  if and only if
    \[
    d\eta^{\perp} (Y, Z)=0\; , \quad \forall Y, Z \in {\Gamma}(\D)^G.
    \]
\end{enumerate}
\end{thm}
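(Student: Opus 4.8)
The plan is to deduce the theorem from the defining relation \eqref{eq:def-red-brack} and from Proposition \ref{properties-constrained-reduced-bracket}, in exact parallel with the way Theorem \ref{nonholonomic-bracket} was obtained from Proposition \ref{properties-constrained-bracket}. Throughout I would use the standard identification of $C^\infty(\widehat{\D^*})$ with the algebra $C^\infty(\D^*)^G$ of $G$-invariant functions via the injective algebra morphism $\pi_{\D^*}^*$, together with the fact (Theorem \ref{th:reduction}) that $C^\infty(\D^*)^G$ is closed under $\{\cdot,\cdot\}_{\D^*}$.

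For item (i): since \eqref{eq:def-red-brack} reads $\pi_{\D^*}^*\{\varphi,\psi\}_{\widehat{\D^*}}=\{\pi_{\D^*}^*\varphi,\pi_{\D^*}^*\psi\}_{\D^*}$ and $\pi_{\D^*}^*$ is an injective morphism of commutative $\R$-algebras, every $\R$-bilinearity, skew-symmetry, and Leibniz identity valid for $\{\cdot,\cdot\}_{\D^*}$ (Theorem \ref{nonholonomic-bracket}(i)) pulls back to the corresponding identity for $\{\cdot,\cdot\}_{\widehat{\D^*}}$, and injectivity of $\pi_{\D^*}^*$ allows me to cancel the pullback; the Leibniz rule in particular uses multiplicativity of $\pi_{\D^*}^*$.

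For item (ii): because $\widehat{\D^*}$ is a vector bundle over $\widehat Q$ whose dual is $\widehat\D=\D/G$ with $\Gamma(\widehat\D)\cong\Gamma(\D)^G$, every fiberwise affine function on $\widehat{\D^*}$ has the form $Y^\ell+\widehat f\circ\tau_{\widehat{\D^*}}$ with $Y\in\Gamma(\D)^G$ and $\widehat f\in C^\infty(\widehat Q)$. Expanding by bilinearity and inserting the three formulas of Proposition \ref{properties-constrained-reduced-bracket} yields
\[
\{Y^\ell+\widehat f\circ\tau_{\widehat{\D^*}},\,Z^\ell+\widehat g\circ\tau_{\widehat{\D^*}}\}_{\widehat{\D^*}}
=-(\mathcal{P}[Y,Z])^\ell+\bigl(d\eta^\perp(Y,Z)+\widehat Z(\widehat f)-\widehat Y(\widehat g)\bigr)\circ\tau_{\widehat{\D^*}},
\]
with $\mathcal{P}[Y,Z]\in\Gamma(\D)^G$ and the bracketed term in $C^\infty(\widehat Q)$ (the $G$-invariance of the metric projector $\mathcal{P}$ and of $\eta^\perp$ being what makes these descend to $\widehat Q$). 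This exhibits $\{\cdot,\cdot\}_{\widehat{\D^*}}$ as a fiberwise affine bracket, with fiberwise linear part $-(\mathcal{P}[Y,Z])^\ell$. As in the unreduced Theorem \ref{nonholonomic-bracket}(ii), fiberwise linearity is equivalent to the bracket of two linear functions being linear, since $\{Y^\ell,\widehat f\circ\tau_{\widehat{\D^*}}\}_{\widehat{\D^*}}$ and $\{\widehat f\circ\tau_{\widehat{\D^*}},\widehat g\circ\tau_{\widehat{\D^*}}\}_{\widehat{\D^*}}$ are already of the right (basic, resp. zero) type; and by the displayed identity with $\widehat f=\widehat g=0$, this holds for all $Y,Z\in\Gamma(\D)^G$ precisely when $d\eta^\perp(Y,Z)=0$ for all such $Y,Z$.

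There is no genuinely hard step: the content is entirely packaged in Proposition \ref{properties-constrained-reduced-bracket} and the reduction carried out in Subsection \ref{ss:Red}. The only point deserving a line of justification is the equivalence between ``the basic function $d\eta^\perp(Y,Z)\circ\tau_{\widehat{\D^*}}$ vanishes on $\widehat{\D^*}$'' and ``$d\eta^\perp(Y,Z)=0$ on $Q$ for all $Y,Z\in\Gamma(\D)^G$'': this uses that $\tau_{\widehat{\D^*}}$ is a surjective submersion, so that pullback of functions from $\widehat Q$ is injective, together with the identification $C^\infty(\widehat Q)\cong C^\infty(Q)^G$ under which $d\eta^\perp(Y,Z)\in C^\infty(Q)^G$ is matched with the function it induces on $\widehat Q$.
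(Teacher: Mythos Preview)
Your proposal is correct and follows exactly the approach the paper intends: the paper states the theorem as ``now obvious'' in analogy with Theorem~\ref{nonholonomic-bracket}, and you have carried out precisely that analogy, deducing (i) from the defining relation~\eqref{eq:def-red-brack} and (ii) from Proposition~\ref{properties-constrained-reduced-bracket} in parallel with how Theorem~\ref{nonholonomic-bracket}(i)--(ii) followed from Proposition~\ref{properties-constrained-bracket}. Your write-up in fact supplies more detail than the paper does.
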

\subsection{Local expressions for the reduced bracket}
\label{ss:localreduction}

Local expressions for the reduced bracket on  $\widehat{\D^*}$ can be conveniently obtained by working with an adapted 
basis of sections  for $TQ$ whose elements are invariant. More precisely, if  the basis of sections $\{e_i\}=\{e_a, e_{\gamma}\}$ of $TQ$  is  chosen such that
all elements of the basis $\{e_a\}$ of $\D$ and $\{ e_{\gamma}\}$ of $\D^\perp$ are $G$-invariant, then, in view of the exposition
in Subsection \ref{ss:str-red-brack} above,     
the quasi-momenta $(p_a)$ are invariant fiberwise linear functions  on   $\D^*$ which may be interpreted as
fiberwise linear functions on $\widehat{\D^*}$. If $(s^I)$ are coordinates\footnote{The super-indices $I,J$ for the $s$ coordinates run from $1$ to $n-\dim G$.}   on 
the shape space $\widehat Q$, then $(s^I,p_a)$ may be used as coordinates for $\widehat{\D^*}$. It is easily verified using the discussion
above that in such coordinates we have 
\begin{equation}
\label{eq:local-red-brack}
\begin{split}
\{p_a, p_b\}_{\widehat{{\mathcal D}^*}}&=-C^c_{ab}(s)p_c-C^{\gamma}_{ab}(s)\eta_{\gamma}(s),\\
\{p_a,s^I \}_{\widehat{{\mathcal D}^*}}&=-\widehat {\rho}^{\, I}_a(s)\\
\{s^I, s^J\}_{\widehat{{\mathcal D}^*}}&=0\, .
\end{split}
\end{equation}
In the above equalities the coefficients $C^c_{ab}$, $C^{\gamma}_{ab}$ are defined as in \eqref{eq:Cabc} but are 
expressed as functions of $s$ due to their invariance. On the other hand, the coefficients $\widehat {\rho}^{\, I}_a(s)$ are determined by 
the conditions 
\[
\widehat{e_a}(s) =\widehat {\rho}^{\, I}_a(s) \frac{\partial}{\partial s^{I}}, \qquad a=1,\dots, r,
\]
where $\widehat{e_a}$ is the local vector field on $\widehat Q$ obtained as the projection of $e_a$. 

\subsection{Reduction of gauge transformed brackets}

Assuming that the 3-form $\Lambda$   is invariant under the $G$ action on $Q$,  one can extend Theorem \ref{th:reduction} to say that
the  gauged transformed bracket $\{ \cdot , \cdot \}_{\D^*}^\Lambda$ (introduced 
in Subsection \ref{ss:gaugeT}) is invariant under the  $G$ action on $\D^*$. By the same considerations given after Theorem \ref{th:reduction}, one
concludes the existence of a bracket  $\{ \cdot , \cdot \}_{\widehat{\D^*}}^\Lambda$ of functions on  $\widehat{\D^*}$ characterised by
\begin{equation*}
\{  \varphi ,  \psi \}^\Lambda_{\widehat{\D^*}}\circ \pi_{\D^*} = \{  \varphi\circ \pi_{\D^*} ,  \psi \circ \pi_{\D^*} \}^\Lambda_{\D^*}, \qquad   \varphi ,  \psi\in C^\infty(\widehat{\D^*}),
\end{equation*}
such that the reduced dynamics can be formulated in terms of this bracket with respect to the reduced constrained Hamiltonian $\widehat{H_c}$, namely,
\begin{equation*}
\widehat{\XHc}(\varphi) =\{  \varphi ,  \widehat{H_c} \}^\Lambda_{\widehat{\D^*}}, \qquad \mbox{for all $\varphi \in  C^\infty(\widehat{\D^*})$}.
\end{equation*}

Similarly, one may formulate  natural modifications
of Proposition \ref{properties-constrained-reduced-bracket}, Theorem \ref{nonholonomic-bracket-reduced} and Eqs. \eqref{eq:local-red-brack}
about the properties of  $\{ \cdot , \cdot \}_{\widehat{\D^*}}^\Lambda$ which follow, respectively, from Eqs. \eqref{eq:def-nh-bracket-gauge},
Theorem \ref{nonholonomic-bracketgauge} and Eqs. \eqref{eq:LocalExpressionsGauge}. We omit the details.

\subsection{Hamiltonisation}

A remarkable property of certain examples of symmetric nonholonomic systems is that, perhaps after a time reparametrisation,
 the reduced equations of motion possess a 
true Hamiltonian structure.  In such a case we say that the system allows a {\em Hamiltonisation}. 
This exceptional property allows the application of powerful methods of Hamiltonian systems to investigate the reduced
dynamics. Moreover, Hamiltonisable systems are natural candidates for the exploration of the dynamics 
of more general nonholonomic systems by perturbation methods.

In our context, Hamiltonisation  is achieved if one can find a  3-form $\Lambda$ and a positive  function $f\in C^\infty(\widehat{\D^*})$
such that
the reduced bracket $f \{ \cdot , \cdot \}_{\widehat{\D^*}}^\Lambda$ satisfies the Jacobi identity. In examples, the multiplicative function 
$f$ is  often basic,  i.e. $f\in C^\infty(Q)^G$, it  is
related to the density of a smooth  invariant measure of the system and is interpreted as a time reparametrisation of the dynamics. As mentioned before, for mechanical Lagrangians,  the incorporation
of the 3-form $\Lambda$ is necessary to guarantee that certain invariant first integrals descend to the quotient space as Casimirs \cite{GNMo18} and to give a geometric framework for Hamiltonisation \cite{Bal2014,Bal2017,BalYap}.

We do not present any general results about conditions for  Hamiltonisation but only some examples below.

\section{Examples: description, reduced equations and Ha\-miltonisation}
 \label{S:examples-description}
 
 We consider the Suslov and Chaplygin sphere problems  with a gyrostat and  present 
 the reduced equations and their Hamiltonian structure.
 Most of the material in this section can be found in previous references (e.g.  \cite{BorKilMamaevSuslov,MacPrz-GyrostaticSuslov} for the Suslov problem and e.g. \cite{BorMam2008,Bor-Mam-Tsiganov2014, DragovicGajicJova2023} for the Chaplygin sphere). Our goal in this section 
 is to introduce
 the necessary concepts and notation to state Theorems
 \ref{th:SuslovThm} and \ref{th:ChapBall-Thm} which clarify the geometric origin of the given  Hamiltonisation in terms of  our formalism.

\subsection{Kinematics of rigid bodies}
We briefly recall the standard approach to describe the 
motion of a rigid body in space and introduce our notation. The  orientation of the
body and  at any instant $t$ is determined by an attitude matrix $R(t)\in SO(3)$. 
Such matrix is the change of basis matrix between
an {\em inertial frame} $\{e_1,e_2, e_3\}$ and a {\em body frame} $\{E_1,E_2, E_3\}$ which is
rigidly attached to the body and rotates with it. Both the inertial 
and the body frame are assumed to be orthonormal. The {\em angular velocity vector} 
has coordinates $\omega:=(\omega_1,\omega_2,\omega_3)\in \R^3$ with respect to the inertial frame 
$\{e_1,e_2, e_3\}$ and  $\Omega:=(\Omega_1,\Omega_2,\Omega_3)\in \R^3$ with respect to the body frame $\{E_1,E_2, E_3\}$. The
coordinate vectors $\omega$ and $\Omega$ respectively correspond to the right and left 
trivialisations of the velocity vector $\dot R\in T_RSO(3)$, namely:
\begin{equation*}
\hat \Omega = R^{-1}\dot R, \qquad \hat \omega =\dot R R^{-1}.
\end{equation*}
In the above equation, we have denoted by  $\; \hat{}:  \R^3\to  \so(3)$ the standard
Lie algebra isomorphism, that, to each $v\in \R^3$ associates the skew-symmetric matrix  
$\hat v\in \so(3)$  characterised by the condition that
 $\hat v u =v\times u$ for all $u\in \R^3$ (here, 
and in what follows, `$\times$' denotes the standard vector product in $\R^3$).
It follows from our definitions and conventions that $\omega= R \Omega$.
The {\em angular momentum vector} is most naturally expressed 
 in the body frame as $M:=\I \Omega\in \R^3$ where
the {\em inertia tensor} $\I$ is a $3\times 3$ symmetric, positive definite matrix which
encodes the mass distribution of the body.
The kinetic energy of rotation of the body is given by $\frac{1}{2}M\cdot \Omega$,
where `$\cdot$' is the standard scalar product in $\R^3$. We refer the reader to 
\cite{MaRa1} for more details. 

\subsection{Effect of a gyrostat} 

%

The  total angular momentum of the body-gyrostat system is the sum of the 
body angular momentum $M=\I\Omega$ and constant moment of the gyrostat that we denote by  $B=(B_1,B_2,B_3)\in \R^3$.
The effect of the gyrostat is thus 
modelled by adding the linear gyroscopic term $B \cdot \Omega$ to the mechanical Lagrangian, namely,
\begin{equation*}
L:TQ\to \R, \qquad L=L_{mech}+B \cdot \Omega,
\end{equation*}
where $Q$ is the configuration space of the system\footnote{$Q=SO(3)$ for the Suslov  problem and $Q=SO(3)\times \R^2$ for the
Chaplygin sphere.} and $L_{mech}$ is the Lagrangian of the system in the absence of the gyrostat (in our case $L_{mech}=\mathcal{K}$, the kinetic energy).
 
\begin{remark}
The above formula for the Lagrangian can be obtained following the construction outlined in Section \ref{SS:Routh-intro}.
\end{remark}

\subsection{Suslov problem with gyrostat}
\label{ss:suslov1}

The classical  Suslov  problem concerns the rotational motion of a  rigid body about a point that
is fixed in space, under its own inertia, and subjected to the nonholonomic constraint 
that the component of the angular velocity
along an axis that is {\em fixed in the body} vanishes. The 
configuration space is  $Q=SO(3)$, and, if the body frame is chosen such
that the $E_3=(0,0,1)$  is aligned with the axis of forbidden rotation,  the nonholonomic constraint
is 
\begin{equation}
\label{eq:SuslovConstraint}
\Omega_3=0.
\end{equation}
This constraint defines a rank 2 left invariant distribution $\D$ on $Q$.  The Lagrangian
is also invariant under the (lifted) left action of $SO(3)$ and hence  the system  may be 
reduced by this action  as explained in Subsection \ref{ss:Red}.  The shape space $Q/SO(3)$
is trivial and hence the reduced phase space $\widehat{\D^*}$ is
isomorphic to $\R^2$.

The reduced equations
of motion in the presence of the gyrostat may be obtained as the restriction of the following system on $\R^3$ (see e.g.  \cite{BorKilMamaevSuslov,MacPrz-GyrostaticSuslov}):
\begin{equation}
\label{eq:SuslovR^3}
 \dot M= (M +B)\times \Omega + \lambda E_3, \qquad  \lambda = - \frac{(\I^{-1} E_3) \cdot ( (M + B)\times \Omega)}{  (\I^{-1}E_3)\cdot E_3  }.
\end{equation}
 to the  
invariant  plane $\Omega_3=0$ (it is readily checked that $\Omega_3$ is a first integral). An explicit form of these equations for  $(\Omega_1,\Omega_2)$, which may be taken as coordinates
in $\widehat{\D^*}=\R^2$,  is given below.

Due to our supposition  that $E_3$ is aligned with the axis of forbidden rotation, we may not assume
that the inertia tensor $\I$ is diagonal. However, by  a suitable rotation of the $E_1$-$E_2$ plane, we
may  assume, without loss of generality, that 
\begin{equation}
\label{eq:inertiaSuslov}
\I=\begin{pmatrix} I_{11} & 0 & I_{13} \\  0& I_{22} & I_{23} \\   I_{13}& I_{23} & I_{33} \end{pmatrix}.
\end{equation}
Elementary algebraic manipulations show that the restriction of \eqref{eq:SuslovR^3} to the
 set $\Omega_3=0$ gives the following evolution equations for $(\Omega_1,\Omega_2)\in \R^2$:
 \begin{equation}
\label{eq:SuslovR^2}
\begin{split}
	&\dot{\Omega}_1=-\frac{1}{I_{11}} \left (  I_{13} \Omega_1+ I_{23} \Omega_2+B_3\right ) \Omega_2,
	\\&\dot{\Omega}_2=\frac{1}{I_{22}} \left (  I_{13} \Omega_1+ I_{23} \Omega_2+B_3\right )
	\Omega_1.
\end{split}
\end{equation}
As predicted by the theory,  equations \eqref{eq:SuslovR^2} preserve the (constrained) energy
 \begin{equation}
 \label{eq:HSuslov}
H(\Omega_1,\Omega_2)= \frac{1}{2}\left ( I_{11}\Omega_1^2 + I_{22}\Omega_2^2 \right  ).
\end{equation}
Our treatment in Subsection \ref{ss:Suslov-geometric} will show that $H$
as given above is  the reduced constrained  Hamiltonian $\widehat{H_c}\in C^\infty(\widehat{\D^*})$
expressed in the coordinates $(\Omega_1,\Omega_2)$.

\subsubsection*{Hamiltonisation}

 As is easily verified, equations  \eqref{eq:SuslovR^2} are (Poisson) Hamiltonian  with respect to the above Hamiltonian 
function $H$ and the  bracket in $\R^2$ determined by 
\begin{equation}
\label{eq:SuslovR^2-bracket}
\{\Omega_1,\Omega_2\}: =-\frac{I_{13} \Omega_1+ I_{23} \Omega_2+B_3}{I_{11}I_{22}}.
\end{equation}
The above bracket  satisfies the Jacobi identity (any bivector on a 2-dimensional manifold does) and thus provides a {\em true} Hamiltonian formulation 
of the reduced system \eqref{eq:SuslovR^2}.

\subsubsection*{Geometric origin of the bracket \eqref{eq:SuslovR^2}}
The following theorem, whose proof is postponed to  Subsection  \ref{ss:Suslov-geometric}, explains the existence of this bracket may be understood within our theoretical framework.
 \begin{thm}
 \label{th:SuslovThm}
 The bracket \eqref{eq:SuslovR^2-bracket} coincides with the reduced bracket $\{ \cdot , \cdot \}_{\widehat{\D}^*}$.  
  \end{thm}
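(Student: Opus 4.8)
The plan is to compute the reduced bracket $\{\cdot,\cdot\}_{\widehat{\D^*}}$ directly from its local expression \eqref{eq:local-red-brack} using an invariant adapted frame on $Q=SO(3)$, and then to recognise \eqref{eq:SuslovR^2-bracket} after the (affine) change of fibre coordinates from the quasi-momenta $(p_1,p_2)$ to $(\Omega_1,\Omega_2)$.

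First I would fix the frame. Since the reduction is by the left $SO(3)$-action, the $G$-invariant vector fields are the left-invariant ones, so I identify $\Gamma(\D)^G$ and $\Gamma(\D^\perp)^G$ with subspaces of $\so(3)\cong\R^3$. The constraint \eqref{eq:SuslovConstraint} says that $\D$ corresponds to $\mathrm{span}\{E_1,E_2\}$, while evaluating the kinetic-energy inner product $\langle u,v\rangle_g=u\cdot\I v$ at the identity shows that $\D^\perp$ corresponds to $\mathrm{span}\{\I^{-1}E_3\}$. Accordingly I take the left-invariant adapted frame $\{e_1,e_2,e_3\}$ with $e_1\leftrightarrow E_1$, $e_2\leftrightarrow E_2$ and $e_3\leftrightarrow\I^{-1}E_3$; then its structure functions are constants. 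Because the shape space $Q/SO(3)$ is a point there are no $s$-coordinates, and \eqref{eq:local-red-brack} collapses to the single relation $\{p_1,p_2\}_{\widehat{\D^*}}=-C^1_{12}p_1-C^2_{12}p_2-C^3_{12}\,\eta_3$.

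Next I would compute the three ingredients. Since $[e_1,e_2]$ corresponds to $E_1\times E_2=E_3$, writing $E_3=C^1_{12}E_1+C^2_{12}E_2+C^3_{12}\,\I^{-1}E_3$ and applying $\I$ (with $\I$ as in \eqref{eq:inertiaSuslov}) gives $C^1_{12}=I_{13}/I_{11}$, $C^2_{12}=I_{23}/I_{22}$ and $C^3_{12}=I_{33}-I_{13}^2/I_{11}-I_{23}^2/I_{22}$; the same manipulation yields $\I^{-1}E_3=(C^3_{12})^{-1}(-C^1_{12},-C^2_{12},1)$, so that $\eta_3=B\cdot\I^{-1}E_3=(C^3_{12})^{-1}(B_3-C^1_{12}B_1-C^2_{12}B_2)$, because the gyroscopic $1$-form $\eta$ corresponds to $B\in\R^3$. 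From the constrained Legendre transformation \eqref{eq:Legendre-local-constrained}, with metric block $g_{ab}=\mathrm{diag}(I_{11},I_{22})$, $\eta_1=B_1$, $\eta_2=B_2$ and $v^a=\Omega_a$, one gets $p_1=I_{11}\Omega_1+B_1$ and $p_2=I_{22}\Omega_2+B_2$ (this also identifies $\widehat{H_c}$ with \eqref{eq:HSuslov}, as asserted after that equation). Substituting into the displayed relation, the $B_1$- and $B_2$-terms cancel and $\{p_1,p_2\}_{\widehat{\D^*}}=-I_{13}\Omega_1-I_{23}\Omega_2-B_3$; since each $\Omega_a$ is affine in $p_a$, the Leibniz rule gives $\{\Omega_1,\Omega_2\}_{\widehat{\D^*}}=(I_{11}I_{22})^{-1}\{p_1,p_2\}_{\widehat{\D^*}}$, which is precisely \eqref{eq:SuslovR^2-bracket}.

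The bulk of this argument is routine linear algebra; the one delicate point, and the part I would check most carefully, is the bookkeeping of trivialisation and sign conventions: that the $G$-invariant frame on $SO(3)$ is the left-invariant one, that the structure constants feeding \eqref{eq:local-red-brack} are those of $(\R^3,\times)$ with the sign dictated by that formula, and that the identification $\D^*\cong(\D^\perp)^\circ$ of \eqref{eq:D^*id} is the one under which $(p_1,p_2)$ are read off as above; an inconsistent choice here would merely flip a global sign.
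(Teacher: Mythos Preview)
Your proof is correct and follows the same overall strategy as the paper: choose a $G$-invariant adapted frame, read off the structure coefficients and $\eta_\gamma$, apply the local formula \eqref{eq:local-red-brack} for the reduced bracket, and pass from the quasi-momenta $(p_1,p_2)$ to $(\Omega_1,\Omega_2)$.

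The difference is purely in implementation. The paper carries out the whole computation in Euler-angle coordinates: it writes the frame $\{e_1,e_2,e_3\}$ explicitly in terms of $(\varphi,\theta,\psi)$ (see \eqref{eq:adapted-basis-Suslov}), computes the dual coframe, expands $\eta$ and $\eta^\perp$, and evaluates $[e_1,e_2]$ directly, obtaining $C^1_{12}=I_{13}/I_{11}$, $C^2_{12}=I_{23}/I_{22}$ and a $\theta$-dependent $C^3_{12}$ together with a matching $\theta$-dependent $\eta_3$ whose product is constant. You instead exploit the identification of left-invariant vector fields with $\so(3)\cong\R^3$, take $e_3\leftrightarrow\I^{-1}E_3$, and reduce everything to elementary $3\times 3$ linear algebra with the cross product. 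Your normalisation of $e_3$ differs from the paper's, but since only the product $C^3_{12}\eta_3$ enters \eqref{eq:local-red-brack} this is immaterial, and indeed both computations yield the same $\{p_1,p_2\}_{\widehat{\D^*}}$. Your route is shorter and avoids coordinate clutter; the paper's has the advantage of being a direct instantiation of the general local formulae developed earlier and of making the connection with the Euler-angle expressions used elsewhere in the text.
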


\subsection{Chaplygin sphere with gyrostat} 
\label{ss:chap_sph}

The classical Chaplygin sphere problem \cite{ChapBall} is one of the most celebrated examples of integrable 
nonholonomic systems and 
concerns the motion of an inhomogeneous sphere, whose center of mass coincides with its geometric center, that rolls without 
slipping on a horizontal plane due to its own inertia. The system remains integrable in the presence of a  gyrostat,
which is a generalisation of the problem due to Markeev \cite{Markeev}.%

The configuration space   is  the 5 dimensional manifold $Q=SO(3)\times \R^2$. Elements in $Q$  specify the
attitude matrix of the sphere and its position on the plane on which it rolls. There are two rolling constraints
which specify a rank 3 constraint distribution $\D$. 
The system possesses an $SE(2)$ symmetry, which satisfies the assumptions of Section \ref{s:Symmetry},
 corresponding to translations and horizontal rotations of the inertial frame on the plane on which 
the rolling takes place.

The reduced equations of motion are conveniently formulated in terms of the vector $K\in \R^3$ which gives body coordinates of the 
 angular momentum of the sphere about the contact point, and the Poisson vector $\Gamma :=R^{-1}e_3 \in \R^3$. One may explicitly write $K$ in terms of $\Omega$ and $\Gamma$ as:
 \begin{equation}
 \label{eq:Chap-sphere-AngMomContPt}
K=\I\Omega +mr^2 \Gamma \times (\Omega \times \Gamma),
\end{equation}
where $m$ is the total mass and $r$ is the radius of the sphere. One may  assume that the body frame is chosen
aligned with the principal axes of inertia and hence $\I$ is diagonal $\I=\mbox{diag}(I_1,I_2,I_3)$.  
Equation \eqref{eq:Chap-sphere-AngMomContPt}
expresses $K$ as a $\Gamma$-dependent linear function of $\Omega$. Such linear function may be inverted to give 
\begin{equation}
\label{eq:Chap-sphere-AngVelKGamma}
\Omega=AK +\left ( \frac{ (AK )\cdot \Gamma}{1-mr^2(A\Gamma)\cdot \Gamma}\right )\, A \Gamma,
\end{equation}
where $A:=(\I+mr^2\mbox{I}_3)^{-1}$, with $\mbox{I}_3$ denoting the $3\times 3$ identity matrix. We note that the denominator
\begin{equation*}
1-mr^2(A\Gamma)\cdot \Gamma>0.
\end{equation*}
Indeed, this follows for the fact that the eigenvalues of $A$ are strictly less than $\frac{1}{mr^2}$ so we may estimate $mr^2(A\Gamma)\cdot \Gamma<\|\Gamma\|^2=1$.

The $SE(2)$-reduced equations of motion are the restriction of  following system for $(K,\Gamma)\in \R^3\times \R^3$:
\begin{equation}
\label{eq:ChapBallEqns}
 \dot K =  (K +B)\times \Omega, \qquad  \dot \Gamma = \Gamma \times \Omega,
\end{equation}
to the invariant set $\|\Gamma\|^2=1$ (see e.g. \cite{BorMam2008, DragovicGajicJova2023}). In the above equations
 and in what follows, it is understood that $\Omega$ is expressed in terms of $(K,\Gamma)$ as in \eqref{eq:Chap-sphere-AngVelKGamma}. 
 
 As explained in 
Subsection \ref{ss:str-red-brack}, the reduced space $\widehat{\D^*}=\D^*/SE(2)$ is a rank 3 vector bundle over the shape space $Q/SE(2)=S^2$.
The components of $K$ are linear coordinates on the fibres of  $\widehat{\D^*}$ whereas the Poisson vector $\Gamma\in S^2$
parametrises the shape space $S^2$. Therefore we write
\begin{equation}
\label{eq:red-space-ChapSphere}
\widehat{\D^*}= \left \{ (K,\Gamma) \in  \R^3\times \R^3 \, :\, \|\Gamma\|^2=1 \, \right \},
\end{equation}
and interpret \eqref{eq:ChapBallEqns} as the equations determining the reduced vector field $\widehat{\XHc}$.

Apart from the energy
integral
\begin{equation}
H(K,\Gamma)=\frac{1}{2}K\cdot \Omega,
\end{equation}
which coincides with the reduced constrained Hamiltonian $\widehat{H_c}$, the system possesses the momentum integrals
\begin{equation}
\label{eq:integrals-ChapSphere}
F_1(K,\Gamma)=\|K+B\|^2, \qquad F_2(K,\Gamma)=(K+B)\cdot \Gamma,
\end{equation}
and the invariant measure
\begin{equation*}
\mu=(1-mr^2(A\Gamma)\cdot \Gamma)^{-1/2} \, dK d\Gamma.
\end{equation*}

Equations \eqref{eq:ChapBallEqns} can be formulated in almost-Poisson form $\dot x = \{x, H\}$, where $x=(K,\Gamma)\in \R^6$, with respect to the
bracket
\begin{equation}
\label{eq:ChapBall-bracket}
	\begin{split}
		&\{ K_i , K_j \}:=-\sum_{l=1}^{3} \varepsilon_{ijl} \left ( K_l - mr^2 (\Omega \cdot \Gamma) \Gamma_l +B_l \right ),	\\
		&\{ K_i , \Gamma_j \}:= - \sum_{l=1}^{3} \varepsilon_{ijl} \Gamma_l,
		\\&\{ \Gamma_i , \Gamma_j \}:= 0, 
	\end{split}
	\end{equation}
where  $\varepsilon_{ijk}$ denotes the  alternating tensor\footnote{\label{fnote:alttensor} $ \varepsilon_{ijl}$ equals zero if two indices are equal, it equals $1$ if $(i,j,k)$ is a cyclic permutation of 
$(1,2,3)$ and $-1$ otherwise.}. The above bracket possesses two Casimir functions  
$\| \Gamma \|^2$ and $F_2$. Since $\| \Gamma \|^2$  is a Casimir, the bracket  may be restricted to the 5-dimensional reduced space $\widehat{\D^*}$ 
defined by \eqref{eq:red-space-ChapSphere} and results in a rank $4$ almost-Poisson structure.

\subsubsection*{Hamiltonisation}
A remarkable observation, which can be verified through a direct calculation with a symbolic software, is that the rescaled bracket
\begin{equation*}
\{\cdot , \cdot \}^*:= \left (1-mr^2(A\Gamma)\cdot \Gamma \right )^{1/2} \, \{\cdot , \cdot \},
\end{equation*}
satisfies the Jacobi identity and is therefore a true Poisson bracket. Therefore, one may 
write  the reduced system  \eqref{eq:ChapBall-bracket} in Hamiltonian form  $ x'=\{x,H\}^*$, where $'=\frac{d}{d\tau}$ and the new time variable $\tau$ is related to the 
original time $t$ by 
\begin{equation*}
dt=\left (1-mr^2(A\Gamma)\cdot \Gamma \right )^{1/2} \, d\tau.
\end{equation*}
This Hamiltonisation was apparently found for the first time in \cite{BorMam2008}. (In the absence of the
gyrostat the  Hamiltonisation goes back to \cite{BorMam2001}).

\subsubsection*{Geometric origin of the bracket \eqref{eq:ChapBall-bracket}}

The geometric origin of the  bracket  \eqref{eq:ChapBall-bracket} is clarified by the following theorem.

 \begin{thm}
 \label{th:ChapBall-Thm}
 The bracket \eqref{eq:ChapBall-bracket} coincides with the reduced bracket $\{ \cdot , \cdot \}^\Lambda_{\widehat{\D}^*}$ where the 3-form $\Lambda$ can be taken as the scaling by a  constant factor of the bi-invariant Cartan volume  3-form on $SO(3)$.  
  \end{thm}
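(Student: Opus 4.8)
The plan is to compute the reduced gauge-transformed bracket $\{\cdot,\cdot\}^\Lambda_{\widehat{\D^*}}$ directly in the coordinates $(K,\Gamma)$ on $\widehat{\D^*}$ and match it term-by-term with \eqref{eq:ChapBall-bracket}. First I would set up the geometry explicitly: on $Q=SO(3)\times\R^2$ choose a left-invariant frame adapted to the rolling distribution $\D$ so that the fibre coordinates on $\D^*$ are exactly (the body components of) the angular momentum about the contact point $K$, and the shape coordinates on $\widehat{\D^*}$ are the Poisson vector $\Gamma=R^{-1}e_3\in S^2$. This is the standard $SE(2)$-reduction of the Chaplygin sphere: the kinetic energy metric is the one whose restriction to $\D$ gives $\tfrac12 K\cdot\Omega$ with $\Omega$ as in \eqref{eq:Chap-sphere-AngVelKGamma}, so that $\widehat{H_c}=\tfrac12 K\cdot\Omega$ matches the stated Hamiltonian. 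With these identifications, Proposition \ref{properties-constrained-reduced-bracket} (and its gauge-transformed analogue quoted at the end of Subsection \ref{ss:localreduction}, or equivalently the reduced version of \eqref{eq:LocalExpressionsGauge}) gives the reduced bracket in terms of: the structure functions $C^c_{ab}$, $C^\gamma_{ab}$ of the adapted $SE(2)$-invariant frame; the vector fields $\widehat{e_a}$ on $S^2$; the components $\eta_\gamma$ of the gyroscopic $1$-form (here coming from the gyrostat moment $B\cdot\Omega$, so that after reduction $\eta^\perp$ contributes the constant $B$); and the coefficients $\lambda_{abc}$ of $\Lambda$.

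The second step is to identify the three ingredients of \eqref{eq:ChapBall-bracket} with these structural quantities. The brackets $\{\Gamma_i,\Gamma_j\}=0$ reflect $\{s^I,s^J\}_{\widehat{\D^*}}=0$; the brackets $\{K_i,\Gamma_j\}=-\sum_l\varepsilon_{ijl}\Gamma_l$ reflect the $\widehat\rho$-term, i.e. the projected frame vector fields acting on the coordinate functions $\Gamma_j$, and these are independent of both $\eta$ and $\Lambda$, so they must come out automatically from the $\mathfrak{so}(3)$-action on $\Gamma$. The interesting brackets are the $\{K_i,K_j\}$. Without the gauge term, the reduced bracket on the $K$-fibres would be $-\sum_l\varepsilon_{ijl}(K_l+B_l)$ plus possibly a further $\Gamma$-dependent piece coming from the non-closure of $\eta^\perp$ and from $\mathcal{P}[e_a,e_b]\neq[e_a,e_b]$ (the $C^\gamma_{ab}$-terms); the point of the theorem is that the extra $-mr^2(\Omega\cdot\Gamma)\Gamma_l$ term, which is affine in $K$ through $\Omega$, is precisely what the gauge correction $\lambda_{abc}g^{cd}(p_d-\eta_d)$ produces when $\Lambda$ is (a multiple of) the bi-invariant Cartan $3$-form on $SO(3)$. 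So I would compute: (i) the ungauged reduced bracket $\{K_i,K_j\}_{\widehat{\D^*}}$ from Proposition \ref{properties-constrained-reduced-bracket}, and (ii) the correction term $\lambda_{abc}g^{cd}(p_d-\eta_d)$ for $\Lambda=c\cdot\mathrm{vol}_{SO(3)}$, then choose the constant $c$ and verify the two agree.

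The main obstacle — and the bulk of the work — is step (ii): showing that the gauge correction coming from the Cartan volume form on $SO(3)$ equals $+mr^2(\Omega\cdot\Gamma)\varepsilon_{ijl}\Gamma_l$ after reduction. Here one must unwind how the bi-invariant $3$-form on $SO(3)$, restricted to the rank-$3$ distribution $\D\subset TQ$ and then pushed through the metric-dependent term $g^{cd}(p_d-\eta_d)=g^{cd}(K_d-B_d)=\Omega$-components, produces exactly this structure. The key algebraic facts to exploit are: the Cartan form is totally skew (giving $\lambda_{abc}\propto\varepsilon_{abc}$ in a left-invariant orthonormal-for-$\mathfrak{so}(3)$ frame, hence ensuring condition \eqref{eq:Bgradhiszero} holds and the dynamics is unchanged, consistent with Theorem \ref{th:HamFormGauge}); the identity $\sum_{c}\varepsilon_{abc}\,\Omega_c$ reproduces a cross product; and the relation \eqref{eq:Chap-sphere-AngVelKGamma} between $K$ and $\Omega$, whose "correction" piece carries exactly the factor $1-mr^2(A\Gamma)\cdot\Gamma$ in the denominator and hence the term $mr^2(\Omega\cdot\Gamma)\Gamma$. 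A careful bookkeeping of which metric ($g$ on $\D$ versus the round metric on $S^2$) enters where, and of the constant normalising $\Lambda$, is what makes this delicate; I would organise it by first doing the computation for the classical Chaplygin sphere ($B=0$) following the template of \cite{GNMo18}, and then checking that the gyrostat simply shifts $K\mapsto K+B$ in the antisymmetric part without affecting the $\Lambda$-term (which is consistent with the fact that $\Lambda$ is chosen independent of $B$). Finally, I would note $\|\Gamma\|^2$ and $F_2=(K+B)\cdot\Gamma$ are Casimirs as a consistency check, which follows since $\Gamma_j$-brackets reduce the rank and $F_2$ is the momentum-type integral that the gauge construction is designed (as in \cite{GNMo18}) to turn into a Casimir.
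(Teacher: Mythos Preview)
Your overall strategy---compute the reduced gauge-transformed bracket and match it against \eqref{eq:ChapBall-bracket}---is the same as the paper's, but your proposed execution diverges from it in a way that leaves a real gap.

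The paper does \emph{not} attempt to work directly in $(K,\Gamma)$-coordinates with a frame whose quasi-momenta are $K$. Instead it fixes an explicit $SE(2)$-invariant adapted frame $\{e_1,\dots,e_5\}$ in Euler-angle coordinates (see \eqref{eq:adapted-basis-ChapSphere}), with $\{e_1,e_2,e_3\}$ spanning $\D$ and $\{e_4,e_5\}$ spanning $\D^\perp$; the associated quasi-momenta $p_1,p_2,p_3$ are \emph{not} the $K_i$ but are related to them by the $(\theta,\psi)$-dependent affine transformation \eqref{eq:KchapSphere}. It then computes all structure coefficients $C^j_{ab}$, the components $\eta_4,\eta_5$ of $\eta^\perp$, and the inverse metric $g^{ab}$ explicitly; takes $\Lambda=-mr^2\sin\theta\, e^1\wedge e^2\wedge e^3$ (chosen via $\tilde\lambda:=g([e_1,e_2],e_3)$ following \cite{GNMo18}); writes down $\{p_a,p_b\}^\Lambda_{\D^*}$ from \eqref{eq:LocalExpressionsGauge}; and finally changes variables to $(K,\Gamma)$ by brute force (explicitly noting that ``a direct calculation \dots\ is best to perform with the help of a computer algebra software''). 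A separate short argument at the end shows that this $\Lambda$ and $mr^2\nu$ (with $\nu$ the bi-invariant volume) agree on $\D$, invoking Remark~\ref{rmk:Lambda}. The paper also simplifies throughout by assuming $I_1=I_2$, which you do not mention.

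Your plan contains two imprecisions that would become genuine obstacles if you tried to carry it out. First, the frame: you assert one can pick a ``left-invariant frame adapted to the rolling distribution $\D$ so that the fibre coordinates on $\D^*$ are exactly $K$''. A left-invariant-on-$SO(3)$ basis of $\D$ (extended by the rolling constraint) does give quasi-momenta $K+B$, but you still need an $SE(2)$-invariant basis of $\D^\perp$ to compute the $C^\gamma_{ab}$ and $\eta_\gamma$ terms, and you never specify one; without it you cannot evaluate the $d\eta^\perp$ contribution. Second, the decomposition: your claim that the ungauged bracket yields $-\varepsilon_{ijl}(K_l+B_l)$ and the gauge correction yields $+mr^2(\Omega\cdot\Gamma)\varepsilon_{ijl}\Gamma_l$ \emph{separately} is asserted, not shown. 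In the paper's frame the ungauged $\{p_a,p_b\}_{\D^*}$ are already rather intricate (the unnumbered display before \eqref{eq:gauged-bracket}), and the clean form emerges only \emph{after} both the gauge transformation and the change of variables to $K$. Likewise, your assertion that the gyrostat ``simply shifts $K\mapsto K+B$ without affecting the $\Lambda$-term'' glosses over the fact that $B$ enters through $\eta_4,\eta_5$ (see \eqref{eq:eta-ChapSphere}) and through the $\eta_d$ in $\lambda_{abc}g^{cd}(p_d-\eta_d)$; that everything conspires to give just a shift is the \emph{outcome} of the computation, not an a priori input. In short, your outline points in the right direction, but the paper's proof is a concrete coordinate computation, and your more conceptual decomposition would require either carrying out that same computation or supplying additional structural arguments you have not given.
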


The precise scaling of  $\Lambda$ mentioned in the theorem is  clarified at the end of Subsection  \ref{ss:Chapballgeometric} below.

We mention that the gauge transformation by $\Lambda$ is necessary  to make $F_2$ into  a Casimir of \eqref{eq:ChapBall-bracket}. We found the 
specific 3-form $\Lambda$ following the prescription 
of \cite{GNMo18} where  the theory is developed only  for mechanical Lagrangians.  Theorem \ref{th:ChapBall-Thm}
above suggests that there is a natural
extension of  Theorem 5.6 of \cite{GNMo18} to the gyroscopic context considered in this work. We do not attempt to develop such extension here since
it would require a study of linear (and affine) first integrals for nonholonomic systems with gyroscopic Lagrangians which is
beyond our scope.

\section{Examples: relation with theoretical framework}
\label{S:examples-bracks}

We now treat the examples described in Section \ref{S:examples-description} following our theoretical framework.
Our purpose is to prove Theorems \ref{th:SuslovThm} and \ref{th:ChapBall-Thm}
 which  explain the geometric origin of the  brackets \eqref{eq:SuslovR^2-bracket}
and \eqref{eq:ChapBall-bracket}  in terms of our formalism.

\subsection{Euler angles in $SO(3)$} In our treatment of the examples we will use Euler angles as local coordinates in $SO(3)$.
Following the {\em $x$-convention} (see e.g. \cite{MaRa1}), we write $R\in SO(3)$ as 
\begin{equation}
\label{eq:MatR}
R=\left(
\begin{array}{ccc}
 \cos \psi \cos \varphi - \cos \theta \sin \varphi \sin \psi & -\sin \psi \cos \varphi - \cos \theta \sin \varphi \cos \psi & \sin \theta \sin \varphi \\
\cos \psi \sin \varphi + \cos \theta \cos \varphi  \sin \psi  & -\sin \psi \sin \varphi + \cos \theta \cos \varphi  \cos \psi  & -\sin \theta \cos \varphi  \\
 \sin \theta \sin \psi   & \sin \theta \cos \psi  & \cos \theta  
\end{array}
\right),
\end{equation}
where the Euler angles $0<\varphi , \psi <2\pi, \, 0<\theta <\pi$. According to this convention, one  derives 
 the following expressions for the angular velocity in space coordinates $ {\omega}$,
and in body coordinates ${\Omega}$ (see e.g. \cite{MaRa1}):
\begin{equation}
\label{E:Omega-omega}
 \omega=\left (\begin{array}{c} \dot \theta \cos \varphi +\dot \psi \sin \varphi \sin \theta \\ \dot \theta \sin \varphi - \dot \psi \cos \varphi \sin \theta  \\ \dot 
\varphi + \dot \psi \cos \theta \end{array} \right ), \qquad  \Omega=\left ( \begin{array}{c} \dot \theta \cos \psi +\dot \varphi \sin \psi \sin \theta \\ -\dot \theta \sin \psi + \dot \varphi \cos \psi \sin \theta  \\ \dot 
\varphi\cos \theta + \dot \psi  \end{array} \right ).
\end{equation}

\subsection{Suslov problem with gyrostat} 
\label{ss:Suslov-geometric}

As explained in Subsection \ref{ss:suslov1}, the configuration space is the 3-dimensional 
manifold $Q=SO(3)$ and a local expression 
 Lagrangian $L:TQ\to \R$ is obtained by
substituting \eqref{E:Omega-omega} into the right hand side of
\begin{equation}
\label{eq:SuslovLagrangian}
L(\varphi,\theta, \psi,\dot \varphi,\dot\theta, \dot \psi)=\frac{1}{2}( \I \Omega)\cdot \Omega +B\cdot \Omega,
\end{equation}
where the inertia tensor $\I$ is given by \eqref{eq:inertiaSuslov} and we recall that the constant vector $B=(B_1,B_2,B_3)\in \R^3$ is the angular momentum due to the presence of the gyrostat.
The quadratic terms in the velocities in this expression for $L$ define the kinetic energy metric 
\begin{equation*}
\begin{split}
g=&\left ( \sin \psi  \left(I_{11} \sin ^2\theta  \sin \psi +I_{13} \sin 2 \theta \right)+I_{22}  \sin ^2\theta  \cos ^2\psi +I_{23} \sin 2 \theta  \cos \psi +I_{33} \cos ^2\theta  \right ) \, d\varphi \otimes d\varphi +
\\& \left ( I_{11} \cos ^2\psi  + I_{22} \sin ^2\psi  \right ) \, d\theta \otimes d\theta + I_{33}  \, d\psi \otimes d\psi + \\&
 2  \left ( (I_{11}-I_{22}) \sin \theta  \sin \psi  \cos \psi + (I_{13} \cos \psi -I_{23} \sin \psi    ) \cos \theta \right ) \, d\varphi \otimes d\theta + \\&
2 \left ( \sin \theta  (I_{13} \sin \psi +I_{23} \cos \psi )+I_{33} \cos \theta  \right ) \, d\varphi \otimes d\psi +2 \left ( I_{13} \cos \psi -I_{23} \sin \psi  \right ) \, d\theta \otimes d\psi .
\end{split}
\end{equation*}
The linear terms instead determine the 1-form
\begin{equation}
\label{eq:Suslov-eta}
\eta= \left ((B_1 \sin \psi +B_2 \cos \psi ) \sin \theta   +B_3 \cos \theta  \right ) \, d\varphi +\left (  B_1 \cos \psi -B_2 \sin \psi   \right )\, d\theta + B_3 \, d\psi.
\end{equation}
Finally, using \eqref{E:Omega-omega}, we see that the nonholonomic constraint  \eqref{eq:SuslovConstraint} writes in local coordinates as
\begin{equation}
\label{eq:Suslov-constraint}
  \dot  \varphi\cos \theta + \dot \psi =0,
\end{equation}
which determines the rank $2$ distribution ${\mathcal D}$. 

As our adapted basis \eqref{adaptedbasis}  for ${\mathfrak X}(Q)$ we take
\begin{equation}
\label{eq:adapted-basis-Suslov}
\begin{split}
	e_1&:=\frac{1}{\sin \theta}( \sin\psi \partial_\varphi+ 
	\sin \theta \cos\psi\partial_\theta -\cos\theta \sin\psi \partial_\psi ), \\
	e_2&:=\frac{1}{\sin \theta} ( \cos\psi \partial_\varphi 
	-\sin \theta \sin\psi\partial_\theta -\cos\theta \cos\psi\partial_\psi),\\
	e_3&:=-f(\psi) \partial_\varphi+( I_{11} I_{23} \sin\psi-I_{13} I_{22} \cos \psi) \sin\theta\partial_\theta+(f(\psi) \cos\theta + I_{11} I_{22} \sin\theta)\partial_\psi,
\end{split}\end{equation}
where $f(\psi)=I_{13} I_{22} \sin\psi+I_{11} I_{23} \cos \psi$. As required,  $\{e_1, e_2\}$ is a local basis of ${\mathcal D}$ and
$\{e_3\}$  of ${\mathcal D}^\perp$. Moreover, the vector fields $e_1, e_2, e_3$ are all invariant under the action of $SO(3)$ on itself by left multiplication which,
as explained in Subsection \ref{ss:localreduction},
is convenient to obtain expressions for the reduced bracket and the constrained 
reduced Hamiltonian.

The structure coefficients $\rho_\alpha^i$ are easily read from \eqref{eq:adapted-basis-Suslov}. Denoting $(\varphi,\theta,\psi):=(q^1,q^2,q^3)$, we have
\begin{equation*}
\begin{split}
&\rho_1^1=\frac{\sin\psi}{\sin \theta}, \qquad \rho_1^2=\cos\psi, \qquad \rho_1^3=-\frac{\cos\theta \sin \psi}{\sin \theta}, \\
&\rho_2^1=\frac{\cos\psi}{\sin \theta}, \qquad \rho_2^2=-\sin\psi, \qquad \rho_1^3=-\frac{\cos\theta \cos \psi}{\sin \theta}.
\end{split}
\end{equation*}

It may be checked, using \eqref{E:Omega-omega}, that the 
quasi-velocities\footnote{For notational convenience, we use sub-indices instead of super-indices for the quasi-velocities throughout this section.} 
$v_1, \, v_2, \, v_3$ determined by our frame are related to the components of the angular velocity in body coordinates ${\Omega}$ 
by
\begin{equation*}
v_1 =\Omega_1 + \frac{I_{13}}{I_{11}}\Omega_3, \qquad v_2 =\Omega_2 + \frac{I_{23}}{I_{22}}\Omega_3, \qquad v_3 =\frac{\Omega_3}{ I_{11}I_{22}\sin\theta}.
\end{equation*}
Hence, the nonholonomic constraint \eqref{eq:SuslovConstraint} is expressed as $v_3=0$ (as expected) and along the constraint space $\D$ we have $v_1 =\Omega_1$ and $v_2 =\Omega_2$.  
The constrained Lagrangian $L_c:D\to \R$ may be  then easily computed from \eqref{eq:SuslovLagrangian} by substituting $\Omega_3=0$,  $\Omega_1=v_1$ and $\Omega_2=v_2$. One obtains 
\begin{equation*}
L_c(\varphi,\theta,\psi,v_1,v_2)=\frac{1}{2}\left ( I_{11}v_1^2+I_{22}v_2^2\right )+B_1v_1+B_2v_2.
\end{equation*}
According to \eqref{eq:Legendre-local-constrained},  the quasi-momenta $p_1$,  $p_2$, are given by
\begin{equation*}
p_1=I_{11}v_1+B_1, \qquad p_2=I_{22}v_2+B_2,
\end{equation*}
and a simple calculation shows that the constrained Hamiltonian \eqref{Local-Expression-H-c}
  is
\begin{equation*}
H_c(\varphi,\theta,\psi,p_1,p_2)=\frac{1}{2}\left ( \frac{ (p_1-B_1)^2}{I_{11}} 
+\frac{ (p_2-B_2)^2}{I_{22}} \right ).
\end{equation*}
The independence of  $H_c$ of the Euler angles is due to the $SO(3)$ symmetry 
of the problem and our choice to work with the invariant basis $e_1, e_2, e_3$. In fact, 
$p_1$ and $p_2$ can be taken as
coordinates on the reduced space $\widehat{\D^*}=\R^2$ and the above expression for $H_c(\varphi,\theta,\psi,p_1,p_2)$ coincides with 
the value of $\widehat{H_c}(p_1,p_2)$. Note that along the constraint 
space we have 
\begin{equation}
\label{eq:O1O2Suslov}
\Omega_1=\frac{p_1-B_1}{I_{11}}, \qquad \Omega_2=\frac{p_2-B_2}{I_{22}},
\end{equation}
so, if we use $(\Omega_1, \Omega_2)$  as coordinates for $\widehat{\D^*}$,
we conclude that $H$ given by \eqref{eq:HSuslov} coincides with the
constrained reduced Hamiltonian $\widehat{H_c}$.

The dual basis of our frame  $\{e_1, e_2, e_3\}$ defined in  \eqref{eq:adapted-basis-Suslov} is calculated to be
\begin{equation*}
\label{eq:adapted-dual-basis-Suslov}
\begin{split}
	e^1=& \left ( \frac{I_{13}}{I_{11}}\cos\theta+\sin\theta\sin\psi \right )   \,d\varphi + \cos\psi  \,d\theta + \frac{I_{13}}{I_{11}} \, d\psi , 
	\\ e^2=& \left (  \frac{I_{23}}{I_{22}} \cos\theta + \sin\theta \cos\psi  \right )  \, d\varphi - \sin\psi  \, d\theta + \frac{I_{23}}{I_{22}} d\psi,
	\\ e^3=& \frac{1}{I_{11}I_{22}\sin\theta} \left (  \cos\theta  \, d\varphi + d\psi \right ),
\end{split}\end{equation*}
and the expression for the 1-form $\eta$  given by \eqref{eq:Suslov-eta} in this basis is 
\begin{equation*}
	\eta = B_1 \, e^1 + B_2 \, e^2  +  \left (I_{11}I_{22}B_3 - I_{13}I_{22} B_1- I_{11}I_{23} B_2 \right ) \sin\theta  \, e^3.
\end{equation*}
Therefore, 
\begin{equation*}
\eta^\perp = (I_{11}I_{22}B_3 - I_{13}I_{22} B_1- I_{11}I_{23} B_2) \sin\theta  \, e^3.
\end{equation*}

On the other hand, one computes 
\begin{equation*}
[e_1,e_2] = \frac{I_{13}}{I_{11}} e_1 + \frac{I_{23}}{I_{22}} e_2 + \frac{1}{ I_{11} I_{22}\sin \theta} e_3,
\end{equation*}
and from this expression we read that 
\begin{equation*}\
C_{12}^1 = \frac{I_{13}}{I_{11}}, \quad 
C_{12}^2 = \frac{I_{23}}{I_{22}}, \quad
C_{12}^3 = \frac{1}{ I_{11} I_{22}\sin \theta}.
\end{equation*}

In accordance with  \eqref{local-nonholonomic-bracket}, the 
bracket  $\{\cdot , \cdot \}_{\D^*}$ is given by
\begin{equation*}
\begin{split}
&\{ p_1,p_2\}_{\D^*}=-\frac{I_{13}}{I_{11}}p_1-\frac{I_{23}}{I_{22}}p_2- \frac{I_{11}I_{22}B_3 - I_{13}I_{22} B_1- I_{11}I_{23} B_2}{ I_{11} I_{22}}, \\
&\{p_1,\varphi\}_{\D^*}=\frac{\sin\psi}{\sin \theta}, \qquad  \{p_1,\theta\}_{\D^*} =\cos\psi, \qquad  \{p_1,\psi \}_{\D^*} =-\frac{\cos\theta \sin \psi}{\sin \theta}, \\
&\{p_2,\varphi\}_{\D^*}=\frac{\cos\psi}{\sin \theta}, \qquad  \{p_2,\theta\}_{\D^*} =-\sin\psi, \qquad  \{p_2,\psi \}_{\D^*} =-\frac{\cos\theta \cos \psi}{\sin \theta}.
\end{split}
\end{equation*}

As  expected from the discussion in Subsection \ref{ss:localreduction}, 
the bracket of $p_1$ and $p_2$ is independent of the Euler angles, and 
completely determines the reduced bracket $\{ \cdot , \cdot \}_{\widehat{\D^*}}$.
Using \eqref{eq:O1O2Suslov}, we have 
\begin{equation*}
\{\Omega_1,\Omega_2\}_{\widehat{\D^*}}=-\frac{I_{13} \Omega_1+ I_{23} \Omega_2+B_3}{I_{11}I_{22}},
\end{equation*}
which coincides with the bracket  
 \eqref{eq:SuslovR^2-bracket} and proves Theorem \ref{th:SuslovThm}.

 \subsection{Chaplygin sphere with gyrostat}
 \label{ss:Chapballgeometric}
 We write $(q^1,q^2,q^3,q^4,q^5):=(\varphi,\theta,\psi,x,y)$ for the local coordinates on $Q=SO(3)\times \R^2$, where  
  $(x,y)\in \R^2$  are the horizontal coordinates, with respect to the space frame,
 of the center of the  sphere on the plane.
  The  
 Lagrangian $L:TQ\to \R$ is the sum of the total kinetic energy and the
 term accounting for the gyrostat, so
 \begin{equation}
 \label{eq:LagChapSphere}
L= \frac{1}{2}(\I \Omega)\cdot \Omega +\frac{m}{2}(\dot x^2 + \dot y^2) +B\cdot \Omega,
\end{equation}
where $m$ is the total mass of the sphere. The explicit value of $L$ 
in terms of the bundle coordinates $(q^i,\dot q^i)$ 
  is obtained by
substituting \eqref{E:Omega-omega} into the right hand side of \eqref{eq:LagChapSphere}.

 Recall  that the inertia tensor in this case is assumed to be  diagonal $\I=\mbox{diag}(I_1,I_2,I_3)$. In order to simplify the exposition, we  will also assume that $I_1=I_2$. The procedure in the general case is identical but the resulting expressions are much more involved. 
The kinetic energy metric simplifies to
\begin{equation}
\label{eq:kinetic-energy-ChapSphere}
\begin{split}
g=&f(\theta) \, d\varphi \otimes d\varphi +I_1 \, d\theta \otimes d\theta+I_3 \, d\psi \otimes d\psi
 +2 I_{3} \cos \theta  \, d\varphi \otimes d\psi  + m \,( dx\otimes dx + dy\otimes dy),
\end{split}
\end{equation}
where we  denote $f(\theta):= I_1 \sin ^2\theta  +I_{3} \cos ^2\theta$.
The 1-form $\eta$ is again given by \eqref{eq:Suslov-eta}. The nonholonomic constraints
of rolling without slipping are 
\begin{equation}
\label{eq:ChapSphere-Constraintbis}
\dot x = r \omega_1, \qquad 
\dot y = -r\omega_2,
\end{equation}
 which, using \eqref{E:Omega-omega}, rewrite  in local coordinates as
\begin{equation}
\label{eq:ChapSphere-Constraint}
 \dot x -r \sin \varphi \,  \dot \theta +r \cos \varphi \sin \theta \, \dot \psi=0, \qquad 
 \dot y +r \cos \varphi \,  \dot \theta +r \sin \varphi \sin \theta \, \dot \psi=0,
\end{equation}
and determine the rank $3$ distribution ${\mathcal D}$ on $Q$. 

 Denote by $(R_\alpha, (a, b))\in SE(2) $ the matrix
\begin{equation*}
(R_\alpha, (a, b)):=\begin{pmatrix} \cos \alpha & -\sin \alpha & a \\   \sin \alpha & \cos \alpha & b \\ 0 & 0 & 1 \end{pmatrix}.
\end{equation*}
Then, the  symmetry action of  $SE(2)$ on $Q$ is is given in our coordinates by 
\begin{equation}
\label{eq:SE2action}
(R_\alpha, (a, b)): (\varphi,\theta,\psi,x,y) \mapsto (\varphi + \alpha ,\theta,\psi,x \cos \alpha - y\sin \alpha + a, x \sin \alpha + y\cos \alpha + b).
\end{equation}

As our adapted basis \eqref{adaptedbasis}  for ${\mathfrak X}(Q)$ we take
\begin{equation}
\label{eq:adapted-basis-ChapSphere}
\begin{split}
	e_1&:= \partial_\varphi, \\
	e_2&:=\partial_\theta + r\sin \varphi \, \partial_x - r\cos \varphi  \,  \partial_y \\
	e_3&:=\partial_\psi - r \sin\theta (\cos\varphi  \, \partial_x + \sin\varphi \, \partial_y), \\
	e_4&:=\frac{m r^2 \sin\theta}{I_1}\partial_\theta- r\sin\theta \sin\varphi \, \partial_x +r  \sin\theta \cos\varphi 
	\,  \partial_y,\\ e_5&: = -\frac{m r^2 \cos\theta}{I_1}\partial_\varphi+\frac{m r^2 f(\theta) }{  I_1 I_3} \partial_\psi+r \sin\theta \cos\varphi \,\partial_x+ r \sin\theta \sin\varphi\,\partial_y.
\end{split}\end{equation}
As required,  $\{e_1, e_2, e_3\}$ is a local basis of ${\mathcal D}$ and
$\{e_4, e_5\}$  of ${\mathcal D}^\perp$. Moreover, as may be verified, these vector fields are invariant
under the $SE(2)$ action \eqref{eq:SE2action}. (We note that the treatment in Section 6 of \cite{GNMo18},
which considers the Chaplygin sphere without gyrostat, 
is done  with the same basis of ${\mathcal D}$ and, therefore, some of the expressions given below 
can be found there.)

The dual basis of our frame   \eqref{eq:adapted-basis-ChapSphere} is computed to be
\begin{equation}
\label{eq:adapted-dual-basis-ChapSphere}
\begin{split}
	e^1=& d\varphi +\frac{mr^2 I_3 \cos\theta}{G(\theta)}\left (  d\psi  + \frac{\cos \varphi \, dx +\sin \varphi \, dy}{r\sin \theta}\right ),
	\\ e^2=&  \frac{1}{I_1+mr^2}\left ( I_1 \, d\theta + mr \sin \varphi \, dx-  mr \cos \varphi \, dy \right ),
	\\ e^3=&  \frac{1}{G(\theta) \sin \theta}\left ( I_1I_3 \sin \theta  \, d\psi - mr f(\theta) (\cos \varphi \, dx+ \sin \varphi \, dy )\right ), 
	\\ e^4=&  \frac{I_1}{(I_1+mr^2)\sin \theta}\left (  d\theta + \frac{1}{r} \left ( - \sin \varphi \, dx+ \cos \varphi \, dy \right ) \right ),
	\\ e^5=& \frac{I_1 I_3}{r G(\theta) \sin \theta}\left (r \sin \theta  \, d\psi +  \cos \varphi \, dx+ \sin \varphi \, dy  \right ),  
\end{split}\end{equation}
where 
\begin{equation*}
G(\theta):=I_1I_3  +I_1mr^2 \sin^2\theta +I_3mr^2 \cos^2\theta. 
\end{equation*}
The 1-form $\eta$  given by \eqref{eq:Suslov-eta} writes as $\eta= \eta_je^j$ with coefficients
\begin{equation}
\label{eq:eta-ChapSphere}
\begin{split}
	&\eta_1 =  B_1\sin \theta \sin \psi  + B_2 \sin \theta \cos \psi  +B_3 \cos \theta, \qquad \eta_2= B_1 \cos\psi - B_2 \sin\psi, \\ &\eta_3=B_3, \qquad
	\eta_4= \frac{m r^2 \sin\theta}{I_1} \left ( B_1 \cos\psi -B_2 \sin\psi  \right ), \\  &\eta_5= 
	\frac{m r^2 \sin^2\theta}{I_1I_3} \left (  I_1 B_3 -  \frac{I_3  \cos\theta}{\sin \theta} (B_1 \sin\psi + B_2 \cos\psi) \right ).
	\end{split}
\end{equation}

The coefficients $\rho_\alpha^i$ are easily read from \eqref{eq:adapted-basis-ChapSphere}. Recalling our convention $(q^1,q^2,q^3,q^4,q^5)=(\varphi,\theta,\psi,x,y)$, we have
\begin{equation}
\label{eq:rho-ChapSphere}
\begin{split}
&\rho_1^1=1, \qquad \rho_1^2=0, \qquad \rho_1^3=0, \qquad \rho_1^4=0, \qquad \rho_1^5=0, \\
&\rho_2^1=0, \qquad \rho_2^2=1, \qquad \rho_2^3=0, \qquad \rho_2^4=r\sin \varphi, \qquad 
\rho_2^5=- r\cos \varphi,
\\
&\rho_3^1=0, \qquad \rho_3^2=0, \qquad \rho_3^3=1, \qquad \rho_3^4=- r\sin \theta \cos \varphi, \qquad 
\rho_3^5=- r\sin \theta \sin \varphi.
\end{split}
\end{equation}

On the other hand, via the calculation of the pairwise brackets $[e_a,e_b]$, $1\leq a<b\leq 3$, one determines after long
but straightforward calculations:
\begin{small}
\begin{equation}
\label{eq:C_ab^j-ChapSphere}
	\begin{split}
	C_{12}^1 & =-C_{21}^1= \frac{m r^2 I_3 \cos\theta}{G(\theta)\sin\theta }, \qquad 
	 C_{12}^3  =-C_{21}^3 =-\frac{m r^2 f(\theta)}{G(\theta)\sin\theta  }, \qquad 
	 C_{12}^5  =-C_{21}^5=\frac{ I_1 I_3}{G(\theta)\sin\theta }, \\
	 C_{13}^2  & = -C_{31}^2=\frac{m r^2 \sin\theta}{I_1 + m r^2 }, \qquad 
	C_{13}^4 =-C_{31}^4=-\frac{ I_1}{I_1+ m r^2}, \qquad 
	 C_{23}^1   =-C_{32}^1 =-\frac{m r^2 I_3 \cos^2\theta}{G(\theta) \sin\theta }, \\
	 C_{23}^3 &=-C_{32}^3=\frac{m r^2 \cos\theta f(\theta)}{G(\theta) \sin\theta  }, \qquad 
	 C_{23}^5 =-C_{32}^5=-\frac{I_1 I_3 \cos\theta}{ G(\theta) \sin\theta },
	\end{split}
	\end{equation}
	\end{small}
	where we have omitted all  terms $C_{ab}^j$ which vanish.	
	In view of \eqref{local-nonholonomic-bracket},
 we obtain  the following 
	expressions for the bracket $\{\cdot , \cdot \}_{\D^*}$:
	\begin{equation*}
	\begin{split}
\{p_1,p_2\}_{\D^*}&=- \frac{m r^2 I_3 \cos\theta}{G(\theta)\sin\theta } p_1+\frac{m r^2 f(\theta)}{G(\theta)\sin\theta  } p_3-\frac{ I_1 I_3}{G(\theta)\sin\theta } \eta_5,\\
\{p_1,p_3\}_{\D^*}&=-\frac{m r^2 \sin\theta}{I_1 + m r^2 }p_2 + \frac{ I_1}{I_1+ m r^2} \eta_4, \\
\{p_2,p_3\}_{\D^*}&=\frac{m r^2 I_3 \cos^2\theta}{G(\theta) \sin\theta }p_1-\frac{m r^2 \cos\theta f(\theta)}{G(\theta) \sin\theta  }p_3+\frac{I_1 I_3 \cos\theta}{ G(\theta) \sin\theta } \eta_5,
\end{split}
\end{equation*}
with $\eta_4$ and $\eta_5$ given by \eqref{eq:eta-ChapSphere}.
The other non-vanishing brackets of the form  $\{p_a, q^i \}_{\D^*}$, are readily determined from \eqref{local-nonholonomic-bracket} and  \eqref{eq:rho-ChapSphere} to be
\begin{equation}
\label{eq:rho-ChapSphereBrack}
\begin{split}
&\{p_1,\varphi\}_{\D^*}=-1, \qquad  \{p_2,\theta\}_{\D^*}=-1,  \qquad \{p_2,x\}_{\D^*}=-r\sin \varphi, \qquad 
\{p_2,y\}_{\D^*}= r\cos \varphi,
\\
&\{p_3,\psi\}_{\D^*}=-1, \qquad \{p_3,x\}_{\D^*}= r\sin \theta \cos \varphi, \qquad 
 \{p_3,y\}_{\D^*}= r\sin \theta \sin \varphi.
\end{split}
\end{equation}

\subsubsection*{Gauge transformation}

The first integral $F_2$ defined  by \eqref{eq:integrals-ChapSphere} 
equals the quasi-momentum $p_1$ (this follows immediately from 
the formulae \eqref{eq:GammachapSphere} and \eqref{eq:KchapSphere}  given ahead). Therefore, 
inspired by \cite{GNMo18}, we consider the gauge transformation by the 3-form $\Lambda=\tilde \lambda \, e^1\wedge e^2\wedge e^3$ where $\tilde \lambda\in C^\infty(Q)$ is
given by
\begin{equation*}
\tilde \lambda := g([e_1,e_2],e_3) =-mr^2\sin \theta.
\end{equation*}
Hence, the coefficients $\lambda_{abc}$ appearing in  \eqref{eq:LocalExpressionsGauge} are given by
\begin{equation*}
\lambda_{abc} = - \varepsilon_{abc} mr^2\sin \theta,
\end{equation*}
(see footnote \ref{fnote:alttensor} for the definition of  the alternating tensor $\varepsilon_{abc}$).

According to \eqref{eq:LocalExpressionsGauge},  to  obtain the local expressions for the gauge-transformed bracket $\{\cdot , \cdot \}^\Lambda_{\D^*}$, 
we need the entries of the matrix
$g^{ab}$ which is the inverse of the $3\times 3$ matrix with entries $g_{ab}:=g(e_a,e_b)$, that determines
the quadratic part of the constrained Lagrangian $L_c$ (see Eq.  \eqref{eq:const_lag_chapsphere} below). One finds the following 
expressions for $g^{ab}$ (which are also given in \cite{GNMo18}):
\begin{equation}
\label{eq:invmetricChapSphere}
\begin{split}
g^{11}&=\frac{I_3+mr^2\sin^2 \theta}{G(\theta) \sin ^2\theta}, \qquad g^{22}=\frac{1}{I_1+mr^2}, \qquad g^{33}=\frac{f(\theta)}{G(\theta) \sin ^2\theta},\\
g^{12}&=g^{21}=0, \qquad  g^{13}=g^{31}=-\frac{I_3\cos \theta}{G(\theta) \sin ^2\theta} \qquad  g^{23}=g^{32}=0.
\end{split}
\end{equation}

The gauged transformed bracket $\{\cdot , \cdot \}^\Lambda_{\D^*}$ can be  determined substituting the above formulae into \eqref{eq:LocalExpressionsGauge}.
For instance, one computes 
\begin{equation*}
\begin{split}
\{p_1,p_2\}^\Lambda_{\D^*}&=\{p_1,p_2\}_{\D^*} +\lambda_{12a}g^{ab}(p_b-\eta_b) \\ &=\{p_1,p_2\}_{\D^*} + \lambda_{123}g^{3b}(p_b-\eta_b) \\
&=\{p_1,p_2\}_{\D^*} -mr^2\sin \theta \left  ( -\frac{I_3\cos \theta}{G(\theta) \sin ^2\theta} (p_1-\eta_1)+\frac{f(\theta)}{G(\theta) \sin ^2\theta}  (p_3-\eta_3) \right ),
\end{split}
\end{equation*}
and similarly for the other brackets. Simplifying the resulting expressions one obtains 
\begin{equation}
\label{eq:gauged-bracket}
	\begin{split}
&\{p_1,p_2\}^\Lambda_{\D^*}=0, \qquad 
\{p_1,p_3\}^\Lambda_{\D^*}=0, \\
&\{p_2,p_3\}^\Lambda_{\D^*}=\frac{mr^2\sin \theta}{G(\theta)}\left (-(I_3+mr^2)p_1 +(I_3-I_1)\cos\theta p_3 \right . \\
&\qquad \qquad  \qquad \left . +
(I_3+mr^2)\sin \theta (\sin \psi B_1+\cos \psi B_2)+(I_1+mr^2)\cos \theta B_3 \right ).
\end{split}
\end{equation}
According to \eqref{eq:LocalExpressionsGauge}, the brackets of the quasi-momenta $p_a$ with the coordinates  $q^i$ are not modified by 
the gauge transformation. Hence, the non-vanishing expressions of the form $\{p_a,q^i\}^\Lambda_{\D^*}$ 
exactly agree with those given above in \eqref{eq:rho-ChapSphereBrack}. In other words,
\begin{equation}
\label{eq:rho-ChapSphereBrackgauge}
\begin{split}
&\{p_1,\varphi\}^\Lambda_{\D^*}=-1, \qquad  \{p_2,\theta\}^\Lambda_{\D^*}=-1,  \qquad \{p_2,x\}^\Lambda_{\D^*}=-r\sin \varphi, \qquad 
\{p_2,y\}^\Lambda_{\D^*}= r\cos \varphi,
\\
&\{p_3,\psi\}^\Lambda_{\D^*}=-1, \qquad \{p_3,x\}^\Lambda_{\D^*}= r\sin \theta \cos \varphi, \qquad 
 \{p_3,y\}^\Lambda_{\D^*}= r\sin \theta \sin \varphi.
\end{split}
\end{equation}

\subsubsection*{Expressions for the bracket in terms of $K$ and $\Gamma$}

In order to prove Theorem \ref{th:ChapBall-Thm}, we compute bracket $\{ \cdot , \cdot \}^\Lambda_{\D^*}$,
determined by Eqs.  \eqref{eq:gauged-bracket} and \eqref{eq:rho-ChapSphereBrackgauge} above,
 of the entries of the Poisson vector  $\Gamma$ and the momentum $K$ introduced
in Subsection \ref{ss:chap_sph}. For this we need to express $\Gamma$ and $K$ in terms of 
our coordinates $q^j$ and quasi-momenta $p_a$.
 In view of \eqref{eq:MatR}, the entries of $\Gamma=R^{-1}e_3$ are: 
\begin{equation}
\label{eq:GammachapSphere}
\Gamma_1=\sin \theta \sin \psi, \qquad \Gamma_2=\sin \theta \cos \psi \qquad \Gamma_3=\cos \theta.
\end{equation}

On the other hand, it may be checked, using \eqref{E:Omega-omega}, that, along the distribution $\D\subset TQ$ defined
by nonholonomic constraints \eqref{eq:ChapSphere-Constraintbis},  the 
quasi-velocities
$v_1, \dots, v_5$ determined by our frame satisfy $v_4=v_5=0$ (as expected) and 
\begin{equation}
\label{eq:v-ChapSphere}
	\begin{split}
	v_1&=\frac{\sin\psi \Omega_1+\cos\psi \Omega_2}{\sin\theta}, 
	\qquad v_2=\cos\psi \Omega_1-\sin\psi \Omega_2,	
	\\ 	v_3&=\Omega_3 -\frac{\cos\theta}{\sin\theta} (\sin\psi \Omega_1+\cos\psi \Omega_2).
%
\end{split}
	\end{equation}	

Furthermore,  
the constrained Lagrangian $L_c:\D\to \R$ is given by
\begin{equation}
\begin{split}
\label{eq:const_lag_chapsphere}
L_c(\varphi,\theta,\psi,x,y, v_1,v_2,v_3) &= \frac{1}{2} f(\theta) v_1^2 + \frac{I_1 + mr^2}{2} v_2^2 + \frac{1}{2} (I_3+mr^2 \sin^2\theta)v_3^2 \\ & \qquad + I_3 \cos\theta  v_1 v_3  +\eta_1 v_1 +\eta_2 v_2 +\eta_3 v_3,
\end{split}
\end{equation}
where $\eta_1, \eta_2, \eta_3$ are given by \eqref{eq:eta-ChapSphere}. Hence, the 
quasi-momenta $p_1$,  $p_2$, $p_3$,  are given by
	\begin{equation}
	\label{eq:p-intermsofv-ChapSphere}
	\begin{split}
	p_1=& f(\theta) v_1 + I_3 \cos\theta v_3 + \eta_1,
	\\p_2=&  (I_1+mr^2)v_2+\eta_2,
	\\p_3=& I_3\cos\theta v_1 + (I_3+mr^2 \sin^2\theta)v_3 + \eta_3.
	\end{split}
	\end{equation}
Combining the above expressions shows that the entries of $K$ as defined by 
\eqref{eq:Chap-sphere-AngMomContPt} satisfy:
	\begin{equation}
	\label{eq:KchapSphere}
	\begin{split}
	K_1=&\frac{\sin\psi}{\sin\theta}p_1+\cos\psi p_2-\frac{\cos\theta\sin\psi}{\sin\theta}p_3-B_1,
	\\K_2=&\frac{\cos\psi}{\sin\theta}p_1-\sin\psi p_2-\frac{\cos\theta\cos\psi}{\sin\theta}p_3-B_2,
	\\K_3=&p_3-B_3.
	\end{split}
	\end{equation}
	
	A direct calculation (which is best to perform  with the help of a computer algebra software),
	  using the above expressions for $\Gamma$ and $K$ 
		together with  \eqref{eq:gauged-bracket} and \eqref{eq:rho-ChapSphereBrackgauge},   shows that the 
pairwise brackets $\{K_i,K_j\}^\Lambda_{\D^*}, \{\Gamma_i,\Gamma_j\}^\Lambda_{\D^*}, \{K_i,\Gamma_j\}^\Lambda_{\D^*}$
satisfy \eqref{eq:ChapBall-bracket}. For such calculations  it may be useful to note that
\begin{equation*}
G(\theta)=(I_1+mr^2)(I_3+mr^2)(1-mr^2\Gamma \cdot A\Gamma) \quad \mbox{and} \quad \{ p_2,p_3\}_{\D^*}^\Lambda=-mr^2 \sin \theta \Omega \cdot \Gamma.
\end{equation*}

Finally, we mention that,  using  expressions in Eqs \eqref{eq:invmetricChapSphere} for the matrix 
$g^{ab}$, together with 
\eqref{eq:v-ChapSphere}, \eqref{eq:p-intermsofv-ChapSphere}, \eqref{eq:KchapSphere} one may verify
that the expression \eqref{Local-Expression-H-c} for the constrained 
Hamiltonian $H_c$ (with no potential) equals $\frac{1}{2}K\cdot \Omega$. Clearly, the same 
holds true for the reduced constrained 
Hamiltonian $\widehat{H_c}$ as had been claimed in 
subsection \ref{ss:chap_sph}.

\subsubsection*{Relationship between $\Lambda$ and the Cartan bi-invariant volume form on $SO(3)$.}

The gauge transformation above was done in terms of the 3-form
\begin{equation*}
\Lambda = -mr^2 \sin \theta \, e^1\wedge e^2\wedge e^3,
\end{equation*}
with $e^1, e^2, e^3$ given by \eqref{eq:adapted-dual-basis-ChapSphere}. To complete the proof
of Theorem \ref{th:ChapBall-Thm}, it remains to show that $\Lambda$ can be also taken as a suitable scaling
of the Cartan bi-invariant volume form on $SO(3)$. For this we use that 
any  bi-invariant 3-form on $SO(3)$  is a constant multiple of  
\begin{equation*}
\nu :=-\sin \theta \, d\varphi\wedge d\theta \wedge d\psi,
\end{equation*}
(see e.g. \cite{GNMo18}). Using the above expressions for $\Lambda$ and $\nu$,
 together with  \eqref{eq:adapted-basis-ChapSphere}, it is clear that
\begin{equation*}
\Lambda (e_a,e_b,e_c)= mr^2 \nu (e_a,e_b,e_c),
\end{equation*}
for any $a,b,c\in \{1,2,3\}$. In other words, given that $\{e_1,e_2,e_3\}$ is a basis of sections of $\D$, 
we conclude that the 3-forms $\Lambda$ and $mr^2 \nu$ coincide when restricted to $\D$. Therefore, 
by Remark  \ref{rmk:Lambda}, it follows that the gauge transformed bracket by $\Lambda$ coincides with 
the gauge transformed bracket by $mr^2 \nu$.

\section*{Acknowledgements}LGN acknowledges support from the projects MIUR-PRIN 20178CJA2B {\em New Frontiers of Celestial Mechanics: theory and applications} and
MIUR-PRIN 2022FPZEES {\em Stability in Hamiltonian dynamics and beyond},  and the hospitality of 
the Department of Mathematics at Universidad de La Laguna where part of this work was done. JCM and DMdD acknowledge financial support from the Spanish Ministry of Science and Innovation under grants PID2022-137909NB-C22 and  PID2022-137909NB-C21, respectively, and  RED2022-134301-TD (MCIN/AEI/10.13039/301100011033).

\bibliographystyle{plainnat}
\bibliography{references}

\begin{thebibliography}{10}

\bibitem{AM78}
R.~Abraham and J.E. Marsden.
\newblock {\em Foundations of Mechanics}.
\newblock Benjamin/Cummings Publishing Co.Inc. Advanced Book Program, Reading,
  Mass, 1978.

\bibitem{Bal2014}
Paula Balseiro.
\newblock The {J}acobiator of nonholonomic systems and the geometry of reduced
  nonholonomic brackets.
\newblock {\em Arch. Ration. Mech. Anal.}, 214(2):453--501, 2014.

\bibitem{Bal2017}
Paula Balseiro.
\newblock Hamiltonization of solids of revolution through reduction.
\newblock {\em J. Nonlinear Sci.}, 27(6):2001--2035, 2017.

\bibitem{BalFer}
Paula Balseiro and Oscar~E. Fernandez.
\newblock Reduction of nonholonomic systems in two stages and
  {H}amiltonization.
\newblock {\em Nonlinearity}, 28(8):2873--2912, 2015.

\bibitem{BaGa}
Paula Balseiro and Luis~C. Garc\'{\i}a-Naranjo.
\newblock Gauge transformations, twisted {P}oisson brackets and
  {H}amiltonization of nonholonomic systems.
\newblock {\em Arch. Ration. Mech. Anal.}, 205(1):267--310, 2012.

\bibitem{BalYap}
Paula Balseiro and Luis~P. Yapu.
\newblock Conserved quantities and {H}amiltonization of nonholonomic systems.
\newblock {\em Ann. Inst. H. Poincar\'{e} C Anal. Non Lin\'{e}aire},
  38(1):23--60, 2021.

\bibitem{BLMMM2011}
M.~Barbero-Li{\~n}{\'a}n, M.~de~Le{\'o}n, D.~M. de~Diego, J.C. Marrero, and
  M.~C. Mu{\~n}oz-Lecanda.
\newblock Kinematic reduction and the hamilton-jacobi equation.
\newblock {\em J. Geom. Mech.}, 4(3):207--237, 2012.

\bibitem{BatesSniaticky}
Larry Bates and J~\'{S}niatycki.
\newblock Nonholonomic reduction.
\newblock {\em Rep. Math. Phys.}, 32(1):99--115, 1993.

\bibitem{Bloch}
A.~Bloch.
\newblock {\em Nonholonomic Mechanics and Control}.
\newblock Springer, Interdisciplinary Applied Mathematics 24, 2015.

\bibitem{BMZ2005}
A.~Bloch, J.E. Marsden, and D.V. Zenkov.
\newblock Nonholonomic dynamics.
\newblock {\em Notices Amer. Math. Soc.}, 52(3):324--333, 2005.

\bibitem{BKMM}
Anthony~M. Bloch, P.~S. Krishnaprasad, Jerrold~E. Marsden, and Richard~M.
  Murray.
\newblock Nonholonomic mechanical systems with symmetry.
\newblock {\em Arch. Rational Mech. Anal.}, 136(1):21--99, 1996.

\bibitem{BoMaZe}
Anthony~M. Bloch, Jerrold~E. Marsden, and Dmitry~V. Zenkov.
\newblock Quasivelocities and symmetries in non-holonomic systems.
\newblock {\em Dyn. Syst.}, 24(2):187--222, 2009.

\bibitem{Bolsinov2015}
A~V Bolsinov, A~V Borisov, and I~S Mamaev.
\newblock Geometrisation of chaplygin's reducing multiplier theorem.
\newblock {\em Nonlinearity}, 28(7):2307, jun 2015.

\bibitem{BorMam2001}
A.~V. Borisov and I.~S. Mamaev.
\newblock Chaplygin's ball rolling problem is {H}amiltonian.
\newblock {\em Mat. Zametki}, 70(5):793--795, 2001.

\bibitem{BorisovMamaev}
A.~V. Borisov and I.~S. Mamaev.
\newblock Isomorphism and the {H}amiltonian representation of some nonholonomic
  systems.
\newblock {\em Sibirsk. Mat. Zh.}, 48(1):33--45, 2007.

\bibitem{BorMam2008}
A.~V. Borisov and I.~S. Mamaev.
\newblock Conservation laws, hierarchy of dynamics and explicit integration of
  nonholonomic systems.
\newblock {\em Regul. Chaotic Dyn.}, 13(5):443--490, 2008.

\bibitem{Bor-Mam-Tsiganov2014}
A.~V. Borisov, I.~S. Mamaev, and A.~V. Tsyganov.
\newblock Nonholonomic dynamics and {P}oisson geometry.
\newblock {\em Uspekhi Mat. Nauk}, 69(3(417)):87--144, 2014.

\bibitem{BorKilMamaevSuslov}
Alexey~V. Borisov, Alexander~A. Kilin, and Ivan~S. Mamaev.
\newblock Hamiltonicity and integrability of the {S}uslov problem.
\newblock {\em Regul. Chaotic Dyn.}, 16(1-2):104--116, 2011.

\bibitem{BoKiMa}
Alexey~V. Borisov, Alexander~A. Kilin, and Ivan~S. Mamaev.
\newblock How to control {C}haplygin's sphere using rotors.
\newblock {\em Regul. Chaotic Dyn.}, 17(3-4):258--272, 2012.

\bibitem{BoKiMa2}
Alexey~V. Borisov, Alexander~A. Kilin, and Ivan~S. Mamaev.
\newblock On the {H}adamard-{H}amel problem and the dynamics of wheeled
  vehicles.
\newblock {\em Regul. Chaotic Dyn.}, 20(6):752--766, 2015.

\bibitem{BMB2015}
Alexey~V. Borisov, Ivan.~S. Mamaev, and Ivan.~A. Bizyaev.
\newblock The {J}acobi integral in nonholonomic mechanics.
\newblock {\em Regul. Chaotic Dyn.}, 20(3):383--400, 2015.

\bibitem{CaLeMaMa}
F.~Cantrijn, M.~de~Le\'{o}n, J.~C. Marrero, and D.~M. de~Diego.
\newblock Reduction of nonholonomic mechanical systems with symmetries.
\newblock volume~42, pages 25--45. 1998.
\newblock Pacific Institute of Mathematical Sciences Workshop on Nonholonomic
  Constraints in Dynamics (Calgary, AB, 1997).

\bibitem{CanLeoMar}
F.~Cantrijn, M.~de~Le\'{o}n, and D.~Mart\'{\i}n~de Diego.
\newblock On almost-{P}oisson structures in nonholonomic mechanics.
\newblock {\em Nonlinearity}, 12(3):721--737, 1999.

\bibitem{CaCoLeMa}
Frans Cantrijn, Jorge Cort\'{e}s, Manuel de~Le\'{o}n, and David Mart\'{\i}n~de
  Diego.
\newblock On the geometry of generalized {C}haplygin systems.
\newblock {\em Math. Proc. Cambridge Philos. Soc.}, 132(2):323--351, 2002.

\bibitem{ChapBall}
S.~A. Chaplygin.
\newblock On a ball's rolling on a horizontal plane.
\newblock {\em Regul. Chaotic Dyn.}, 7(2):131--148, 2002.

\bibitem{Cortes}
J.~Cort\'{e}s.
\newblock {\em Geometric, control and numerical aspects of nonholonomic
  systems}, volume 1793 of {\em Lecture Notes in Mathematics}.
\newblock Springer-Verlag, Berlin, 2002.

\bibitem{CdLMM}
Jorge Cort\'{e}s, Manuel de~Le\'{o}n, Juan~Carlos Marrero, and Eduardo
  Mart\'{\i}nez.
\newblock Nonholonomic {L}agrangian systems on {L}ie algebroids.
\newblock {\em Discrete Contin. Dyn. Syst.}, 24(2):213--271, 2009.

\bibitem{Cu-Du}
Richard Cushman, Hans Duistermaat, and J.~\'{S}niatycki.
\newblock {\em Geometry of nonholonomically constrained systems}, volume~26 of
  {\em Advanced Series in Nonlinear Dynamics}.
\newblock World Scientific Publishing Co. Pte. Ltd., Hackensack, NJ, 2010.

\bibitem{LMD2010}
M.~de~Le{\'o}n, J.C. Marrero, and D.~M. de~Diego.
\newblock Linear almost poisson structures and hamilton-jacobi equation.
  applications to nonholonomic mechanics.
\newblock {\em J. Geom. Mech.}, 2(2):159--198, 2010.

\bibitem{LR89}
M.~de~Le{\'o}n and P.R. Rodrigues.
\newblock {\em Methods of differential geometry in analytical mechanics}.
\newblock North Holland Mathematics Studies 158, 1989.

\bibitem{LeLaLoMa2023}
Manuel de~Le\'{o}n, Manuel Lainz, Asier L\'{o}pez-Gord\'{o}n, and Juan~Carlos
  Marrero.
\newblock A new perspective on nonholonomic brackets and hamilton-jacobi
  theory.
\newblock {\em arXiv:2307.06049v1 [math-ph]}, 2023.

\bibitem{LeMaMa}
Manuel de~Le\'{o}n, Juan~C. Marrero, and Eduardo Mart\'{\i}nez.
\newblock Lagrangian submanifolds and dynamics on {L}ie algebroids.
\newblock {\em J. Phys. A}, 38(24):R241--R308, 2005.

\bibitem{DragovicGajicJova2023}
Vladimir Dragovi\'{c}, Borislav Gaji\'{c}, and Bo\v{z}idar Jovanovi\'{c}.
\newblock Gyroscopic {C}haplygin systems and integrable magnetic flows on
  spheres.
\newblock {\em J. Nonlinear Sci.}, 33(3):Paper No. 43, 51, 2023.

\bibitem{Eden51}
R.~J. Eden.
\newblock The {H}amiltonian dynamics of non-holonomic systems.
\newblock {\em Proc. Roy. Soc. London Ser. A}, 205:564--583, 1951.

\bibitem{Ehlers}
Kurt Ehlers, Jair Koiller, Richard Montgomery, and Pedro~M. Rios.
\newblock Nonholonomic systems via moving frames: {C}artan equivalence and
  {C}haplygin {H}amiltonization.
\newblock In {\em The breadth of symplectic and {P}oisson geometry}, volume 232
  of {\em Progr. Math.}, pages 75--120. Birkh\"{a}user Boston, Boston, MA,
  2005.

\bibitem{FassoGiaSan}
F.~Fass\`o, A.~Giacobbe, and N.~Sansonetto.
\newblock Periodic flows, rank-two {P}oisson structures, and nonholonomic
  mechanics.
\newblock {\em Regul. Chaotic Dyn.}, 10(3):267--284, 2005.

\bibitem{FaGaSa}
Francesco Fass\`o, Luis~C. Garc\'{\i}a-Naranjo, and Nicola Sansonetto.
\newblock Moving energies as first integrals of nonholonomic systems with
  affine constraints.
\newblock {\em Nonlinearity}, 31(3):755--782, 2018.

\bibitem{FassoSanso}
Francesco Fass\`o and Nicola Sansonetto.
\newblock Conservation of `moving' energy in nonholonomic systems with affine
  constraints and integrability of spheres on rotating surfaces.
\newblock {\em J. Nonlinear Sci.}, 26(2):519--544, 2016.

\bibitem{FedoJova}
Y.~N. Fedorov and B.~Jovanovi\'{c}.
\newblock Nonholonomic {LR} systems as generalized {C}haplygin systems with an
  invariant measure and flows on homogeneous spaces.
\newblock {\em J. Nonlinear Sci.}, 14(4):341--381, 2004.

\bibitem{FeGaMa}
Yuri~N. Fedorov, Luis~C. Garc\'{\i}a-Naranjo, and Juan~C. Marrero.
\newblock Unimodularity and preservation of volumes in nonholonomic mechanics.
\newblock {\em J. Nonlinear Sci.}, 25(1):203--246, 2015.

\bibitem{GaBa}
Sneha Gajbhiye and Ravi~N. Banavar.
\newblock Geometric modeling and local controllability of a spherical mobile
  robot actuated by an internal pendulum.
\newblock {\em Internat. J. Robust Nonlinear Control}, 26(11):2436--2454, 2016.

\bibitem{GaTo}
Jorge~S. Garcia and Tomoki Ohsawa.
\newblock Controlled lagrangians and stabilization of euler--poincaré
  equations with symmetry breaking nonholonomic constraints.
\newblock {\em arXiv:2303.13086 [math.OC]}, 2023.

\bibitem{GN07}
L.~Garc\'{\i}a-Naranjo.
\newblock Reduction of almost {P}oisson brackets for nonholonomic systems on
  {L}ie groups.
\newblock {\em Regul. Chaotic Dyn.}, 12(4):365--388, 2007.

\bibitem{LGN-thesis}
L.~C. Garcia-Naranjo.
\newblock {\em Almost {P}oisson brackets for nonholonomic systems on {L}ie
  groups}.
\newblock ProQuest LLC, Ann Arbor, MI, 2007.
\newblock Thesis (Ph.D.)--The University of Arizona.

\bibitem{GN10}
Luis Garc\'{\i}a-Naranjo.
\newblock Reduction of almost {P}oisson brackets and {H}amiltonization of the
  {C}haplygin sphere.
\newblock {\em Discrete Contin. Dyn. Syst. Ser. S}, 3(1):37--60, 2010.

\bibitem{Naranjo}
Luis~C Garc\'ia-Naranjo.
\newblock Hamiltonisation, measure preservation and first integrals of the
  multi-dimensional rubber routh sphere.
\newblock {\em Theoretical and Applied Mechanics}, 46:65--88, 2019.

\bibitem{Futurepaper}
Luis~C. Garc\'ia-Naranjo, Juan~Carlos Marrero, David Mart\'in~de Diego, and
  E.~Petit-Villarreal.
\newblock Geometry of nonholonomic mechanical systems with affine constraints.
\newblock 2023.

\bibitem{GNMo18}
Luis~C. Garc\'{\i}a-Naranjo and James Montaldi.
\newblock Gauge momenta as {C}asimir functions of nonholonomic systems.
\newblock {\em Arch. Ration. Mech. Anal.}, 228(2):563--602, 2018.

\bibitem{GLMM}
J.~Grabowski, M.~de~Le\'{o}n, J.~C. Marrero, and D.~Mart\'{\i}n~de Diego.
\newblock Nonholonomic constraints: a new viewpoint.
\newblock {\em J. Math. Phys.}, 50(1):013520, 17, 2009.

\bibitem{ILMM99}
Alberto Ibort, Manuel de~Leon, Juan~C. Marrero, and David Martin~de Diego.
\newblock Dirac brackets in constrained dynamics.
\newblock {\em Fortschr. Phys.}, 47(5):459--492, 1999.

\bibitem{Jova}
Bo\v{z}idar Jovanovi\'{c}.
\newblock Hamiltonization and integrability of the {C}haplygin sphere in {$\Bbb
  R^n$}.
\newblock {\em J. Nonlinear Sci.}, 20(5):569--593, 2010.

\bibitem{Koiller92}
Jair Koiller.
\newblock Reduction of some classical nonholonomic systems with symmetry.
\newblock {\em Arch. Rational Mech. Anal.}, 118(2):113--148, 1992.

\bibitem{KoonMarsden}
Wang~Sang Koon and Jerrold~E. Marsden.
\newblock Poisson reduction for nonholonomic mechanical systems with symmetry.
\newblock volume~42, pages 101--134. 1998.
\newblock Pacific Institute of Mathematical Sciences Workshop on Nonholonomic
  Constraints in Dynamics (Calgary, AB, 1997).

\bibitem{Krupkova}
Olga Krupkov\'{a}.
\newblock {\em The geometry of ordinary variational equations}, volume 1678 of
  {\em Lecture Notes in Mathematics}.
\newblock Springer-Verlag, Berlin, 1997.

\bibitem{MacPrz-GyrostaticSuslov}
A.~J. Maciejewski and M.~Przybylska.
\newblock Gyrostatic {S}uslov problem.
\newblock {\em Russ. J. Nonlinear Dyn.}, 18(4):609--627, 2022.

\bibitem{Markeev}
A.P. Markeev.
\newblock On integrability of problem on rolling of ball with multiply
  connected cavity filled by ideal liquid.
\newblock {\em Proc. USSR Acad. Sci., Rigid Body Mech.}, 21(2):64--65, 1985.

\bibitem{Marle98}
Charles-Michel Marle.
\newblock Various approaches to conservative and nonconservative nonholonomic
  systems.
\newblock volume~42, pages 211--229. 1998.
\newblock Pacific Institute of Mathematical Sciences Workshop on Nonholonomic
  Constraints in Dynamics (Calgary, AB, 1997).

\bibitem{MaRa1}
Jerrold~E. Marsden and Tudor~S. Ratiu.
\newblock {\em Introduction to mechanics and symmetry}, volume~17 of {\em Texts
  in Applied Mathematics}.
\newblock Springer-Verlag, New York, second edition, 1999.
\newblock A basic exposition of classical mechanical systems.

\bibitem{Neimark}
Ju.~I. Ne\u{\i}mark and N.~A. Fufaev.
\newblock {\em Dynamics of nonholonomic systems}, volume~33 of {\em
  Translations of Mathematical Monographs}.
\newblock American Mathematical Society, Providence, RI, 1972.
\newblock Translasted from the 1967 Russian original by J. R. Barbour.

\bibitem{Ohsawa}
Tomoki Ohsawa.
\newblock Geometric kinematic control of a spherical rolling robot.
\newblock {\em J. Nonlinear Sci.}, 30(1):67--91, 2020.

\bibitem{Petit-thesis}
E.~Petit.
\newblock {\em {Geometry, Dynamics and Mechanics of Nonholonomic Systems with
  Affine Nature}}.
\newblock PhD thesis, Departament of Mathematics. University of Trento, May
  2023.

\bibitem{Ramos}
Arturo Ramos.
\newblock Poisson structures for reduced non-holonomic systems.
\newblock {\em J. Phys. A}, 37(17):4821--4842, 2004.

\bibitem{MaschkevanderSchaft}
A.~J. van~der Schaft and B.~M. Maschke.
\newblock On the {H}amiltonian formulation of nonholonomic mechanical systems.
\newblock {\em Rep. Math. Phys.}, 34(2):225--233, 1994.

\bibitem{SeveraWeinstein}
Pavol \v{S}evera and Alan Weinstein.
\newblock Poisson geometry with a 3-form background.
\newblock Number 144, pages 145--154. 2001.
\newblock Noncommutative geometry and string theory (Yokohama, 2001).

\bibitem{WaKr}
L.-S. Wang and P.~S. Krishnaprasad.
\newblock Gyroscopic control and stabilization.
\newblock {\em J. Nonlinear Sci.}, 2(4):367--415, 1992.

\bibitem{Weber}
R.~W. Weber.
\newblock On {H}amiltonian systems with nonholonomic constraints.
\newblock In {\em Proceedings of the {IUTAM}-{ISIMM} symposium on modern
  developments in analytical mechanics, {V}ol. {I} ({T}orino, 1982)}, volume
  117, pages 457--461, 1983.

\bibitem{Weber2}
Ren\'{e}~W. Weber.
\newblock Hamiltonian systems with constraints and their meaning in mechanics.
\newblock {\em Arch. Rational Mech. Anal.}, 91(4):309--335, 1986.

\end{thebibliography}
\end{document}